\def\R{\mathbb{R}}
\newcommand{\be}{\begin{equation}}
\newcommand{\ee}{\end{equation}}
\newcommand{\bea}{\begin{eqnarray}}
\newcommand{\eea}{\end{eqnarray}}
\newcommand{\beas}{\begin{eqnarray*}}
\newcommand{\eeas}{\end{eqnarray*}}
\newtheorem{thm}{Theorem}[section]
\newtheorem{cor}[thm]{Corollary}
\newtheorem{lem}[thm]{Lemma} 
\newtheorem{prop}[thm]{Proposition}
\begin{document}

\title{Flat galaxies with dark matter halos---existence and stability}
\author{Roman Fi\v rt, Gerhard Rein, and Martin Seehafer\\
        Mathematisches Institut der
        Universit\"at Bayreuth\\
        D 95440 Bayreuth, Germany}

\maketitle
\begin{abstract}
We consider a model for a flat, disk-like galaxy surrounded by a halo of
dark matter, namely a Vlasov-Poisson type system with two particle species,
the stars which are restricted to the galactic plane and the dark matter 
particles. These constituents interact only through the gravitational potential
which stars and dark matter create collectively.  
Using a variational approach we prove the existence
of steady state solutions and their nonlinear stability under suitably
restricted perturbations.
\end{abstract}
\section{Introduction}
\setcounter{equation}{0}
Around 1970 astrophysicists noticed that in typical spiral galaxies 
the rotation velocities of the stars, when computed in the gravitational
potential of the visible matter, do not fit with their observed rotation
velocities. It was then conjectured that such galaxies are
surrounded by a halo of so far not directly observable dark matter 
in such a way
that the rotation velocities of the stars are consistent with
the resulting gravitational potential \cite{Freeman}.
For an introduction to dark matter we refer to \cite[Chapter 10]{BT} 
and the references there.

The distribution of the stars in a galaxy is usually modeled by
a density function on phase space, and it is assumed that collisions 
are sufficiently rare to be neglected and that the stars interact only
by the gravitational potential which they create collectively.
In a non-relativistic setting this results in a system of partial 
differential equations which in the mathematics literature is known as
the Vlasov-Poisson system, cf.\ \cite{rein07}. 
While the true physical nature (and existence) of dark matter are
still conjectural, we are aware of at least one astrophysics investigation
where it is also modeled as Vlasov-type matter, cf.\ \cite{RB}.
Given the fact that the only role which galactic dark matter
has to play is to provide the mass and hence the gravitational potential
needed to resolve the discrepancy concerning the rotation
velocities of the stars, such a description of dark matter seems natural.

In the present paper we investigate a model for a flat, disk-like galaxy with
a halo of dark matter where both the distribution of the stars in the galactic
plane and the distribution of the dark matter particles in the halo 
obey a Vlasov equation, and the interaction among stars, dark matter,
and between these two constituents is through the gravitational
potential which all the particles (stars and dark matter) create
collectively. 

Following the practice in astrophysics we assume that the stars are
restricted to a plane which we take to be the $x_1,x_2$ plane.
Their distribution on phase space is given by 
$\tilde f = \tilde f(t,\tilde x,\tilde v) \geq 0$ where $t\geq 0$ denotes
time and $\tilde x,\tilde v \in \R^2$ denote position and velocity in the
galactic plane. The distribution of the dark matter particles is given
by $f = f(t,x,v) \geq 0$ where 
$x,v \in \R^3$ denote position and velocity in three dimensional
space. The evolution of the galaxy and its halo is then governed by
the following Vlasov-Poisson type system of equations:
\begin{equation} \label{vlasov3d}
\partial_t f +v\cdot\nabla_{x}f-\nabla_{x} U_e \cdot\nabla_{v}f = 0,
\end{equation}
\begin{equation} \label{vlasov2d}
\partial_t \tilde f +\tilde v\cdot\nabla_{\tilde x}\tilde f
-\nabla_{\tilde x}  U_e (\cdot,0)\cdot\nabla_{\tilde v}\tilde f =0,
\end{equation}
\begin{equation} \label{potential}
 U_e(t,x) = U(t,x)+\tilde U(t,x)
= -\int_{\mathbb{R}^{3}}\frac{\rho(t,y)}{|x-y|}\,dy
-\int_{\mathbb{R}^{2}}\frac{\tilde\rho (t,\tilde y)}{|x-(\tilde y,0)|}\,
d\tilde y,
\end{equation}
\begin{equation} \label{rhodef}
\rho(t,x) 
= \int_{\mathbb{R}^{3}} f(t,x,v) dv,\
\tilde \rho(t,\tilde x) 
= \int_{\mathbb{R}^{2}} \tilde f(t,\tilde x,\tilde v) d\tilde v.
\end{equation}
Here $\rho$ and $\tilde \rho$ are the spatial mass densities of dark
matter respectively stars, $U$ and $\tilde U$ are the induced Newtonian
potentials, and $U_e$ denotes the potential of the system as
a whole, i.e., the effective potential which determines the particle orbits. 
In order that the stars remain in their plane it is sufficient to
require that 
$f(t,\tilde x, x_3,\tilde v, v_3) = f(t,\tilde x, - x_3,\tilde v,- v_3)$,
a condition which at least formally is preserved by solutions of the system
and which implies that $\nabla U(t,\tilde x,0)$ is parallel to the plane;
for $\nabla \tilde U(t,\tilde x)$ this is true automatically.
Throughout this paper we use the convention that variables with
(without) tilde denote flat (non-flat) quantities. 

To our knowledge a fully non-linear model where the gravitational
interaction within both types of matter and between the two types
is taken into account has so far not been investigated. 
Our aim is to prove the existence and non-linear stability of steady state
configurations to this system. We obtain such stable steady states
as minimizers of the total energy
\begin{eqnarray*}
&&
\frac{1}{2}\iint|v|^{2}f\,dx\,dv +
\frac{1}{2}\iint|\tilde v|^{2}\tilde f\,
d\tilde x\,d\tilde v \\
&&
\qquad
+ \frac{1}{2}\int U_e (x)\rho(x)\,dx 
+ \frac{1}{2}\int U_e(\tilde x,0)\tilde\rho(\tilde x)\,d\tilde x,
\end{eqnarray*}
satisfying suitable constraints. This so-called energy-Casimir approach
was developed for the usual, three dimensional Vlasov-Poisson system,
i.e., $\tilde f = 0$ in the above, 
in \cite{G1,G2,GR1,GR2,GR3,R6}, see also \cite{DSS,LMR,rein07,SS}. 
The approach has also been used to
prove the existence of stable steady states for flat galaxies without
a halo, i.e., with $f=0$ in the above, cf. \cite{Fi,FiR,R3}. 
The fact that in the present situation the energy is a functional
acting on two functions together with the potential interaction terms between
the flat and the non-flat component requires substantial new ingredients
in the basic scheme. One pitfall to avoid is that for a minimizer of the
above energy functional one of the two components might vanish.

Besides the above stability results it is
known that global classical solutions to the initial value problem
for the usual three dimensional Vlasov-Poisson system exist,
cf.\ \cite{LP,Pf}, while local classical and global weak
solutions exist in the flat case without halo, cf.\ \cite{Shev}.
For the situation at hand nothing is known about
the initial value problem, but we conjecture that the analogue
of \cite{Shev} for weak solutions remains true. 
Our stability result is 
conditional in the sense that it holds for solutions as long
as they exist and preserve the required conserved quantities. 
For more information on the Vlasov-Poisson system in general
we refer to the review article \cite{rein07}.

The paper proceeds as follows. In the next section we formulate
our variational problem and our main result on the existence
of minimizers. In Section~\ref{secprel} we establish properties 
of the potentials which allow us to define and control the potential energies,
in particular the interaction terms.
Next we collect some relevant results about
the decoupled variational problems where one of the two components
is missing; these facts are established in an appendix. 
In Section~\ref{secprop}
we show that the energy functional is bounded from below,
that not all the mass can escape to infinity along a minimizing sequence,
and we investigate the splitting properties of the functional.
With these prerequisites we can then prove the existence
of minimizers in Section~\ref{secexminim}. The fact that such minimizers
are steady states together with some of their properties
are established in Section~\ref{secmin=ss}. In Section~\ref{secstab}
we finally investigate the stability estimate resulting from
their minimizing property.

\section{Variational setup} \label{secvar}
\setcounter{equation}{0}
We denote the set of non-negative, Lebesgue integrable functions by 
$L^1_+(\R^n)$. For $f\in L^1_+(\R^6)$ and $\rho\in L^1_+(\R^3)$ we denote by
\[
\rho_f (x) := \int_{\mathbb{R}^{3}} f(t,x,v) dv,\
U_\rho (x) := - \int_{\mathbb{R}^{3}} \frac{\rho(y)}{|x-y|} dy
\]
the induced spatial density and gravitational potential; we write 
$U_f=U_{\rho_f}$.
Similarly, for $\tilde f\in L^1_+(\R^4)$ and $\tilde \rho\in L^1_+(\R^2)$,
\[
\rho_{\tilde f} (\tilde x) 
:= \int_{\mathbb{R}^{2}} \tilde f (t,\tilde x,\tilde v) d\tilde v,\
U_{\tilde \rho} (x) 
:= - \int_{\mathbb{R}^{2}} \frac{\tilde \rho(\tilde y)}{|x-(\tilde y,0)|}
d\tilde y,
\]
and to abbreviate we sometimes write $\tilde \rho$ and $\tilde U$ instead of
$\rho_{\tilde f}$ and $U_{\tilde \rho}$; notice that the latter
is defined on $\R^3$. In what follows we do not explicitly
denote the domain of integration---$\R^3$ or $\R^2$---unless
in cases of ambiguity. The integrability properties of these potentials
are investigated in Section~\ref{secprel}. Next we define the various parts of the
energy functional.  For  $f\in L^1_+(\R^6)$ and $\tilde f\in L^1_+(\R^4)$,
\[
E_{\mathrm{kin}}(f)
:=
\frac{1}{2}\iint|v|^{2}f(x,v)\,dv\,dx,\
E_{\mathrm{kin}}(\tilde f)
:=
\frac{1}{2}\iint|\tilde v|^{2}\tilde f(\tilde x,\tilde v)\,
d\tilde v\,d\tilde x,
\]
\[
E_{\mathrm{pot}}(f)
:=
-\frac{1}{2}\iint\frac{\rho_{f}(x)\rho_{f}(y)}{|x-y|}\,dx\,dy,\
E_{\mathrm{pot}}(\tilde f)
:=
-\frac{1}{2}\iint
\frac{\tilde\rho_{\tilde f}(\tilde x)\tilde\rho_{\tilde f}(\tilde y)}{
|\tilde x-\tilde y|}\,d\tilde x\,d\tilde y,
\]
denote the kinetic and potential energies of the non-flat and flat components.
The total energy of each component is then defined by
\[
\mathcal{H}(f) 
:= E_{\mathrm{kin}}(f) + E_{\mathrm{pot}}(f),\
\mathcal{H}(\tilde f) 
:= E_{\mathrm{kin}}(\tilde f) + E_{\mathrm{pot}}(\tilde f).
\]
Finally,
\begin{eqnarray*}
\mathcal{H}(f,\tilde f)
&=&
\mathcal{H}(f)+\mathcal{H}(\tilde f)
+\frac{1}{2}\int U_{\tilde f}(x)\rho_f(x)\,dx
+\frac{1}{2}\int U_f (\tilde x,0) \rho_{\tilde f}(\tilde x)\,d\tilde
x,\\
&=&
\mathcal{H}(f)+\mathcal{H}(\tilde f)
+ \int U_{\tilde f}(x)\rho_f(x)\,dx
\end{eqnarray*}
is the total energy of the state $(f,\tilde f)$. In Section~\ref{secprel}, where
we investigate the existence of all these integrals on the constraint set
defined below, we will also see that the two interaction terms are
equal. 

We wish to minimize this functional over the constraint set
\begin{eqnarray*}
\mathcal{F}_{\mathbf M}
:=
\Bigl\{(f,\tilde f)| 
&& 
f\in L^{1}_{+}(\mathbb{R}^{6}),\;
\tilde f\in L^{1}_{+}(\mathbb{R}^{4}),\;
||f||_{1}\le M,\;||f||_{1+1/k}\le N,\\
&&
||\tilde f||_{1}\le\tilde M,\; ||\tilde f||_{1+1/\tilde k}\le\tilde N,\
E_{\mathrm{kin}}(f) + E_{\mathrm{kin}}(\tilde f)<\infty,\\
&&f(\tilde x,x_{3},\tilde v,v_{3})=f(\tilde x,-x_{3},\tilde v,-v_{3})\Bigr\},
\end{eqnarray*}
where $\mathbf{M}:=(M,N,\tilde M,\tilde N)$ denotes the constraint vector
whose components are all strictly positive, $|| \cdot ||_p$ denotes
the usual $L^p$ norm, and
\[
0<k< 7/2,\ 0<\tilde k< 2.
\] 
In Sections~\ref{secprel} and \ref{secprop} we will see that the
total energy functional is well defined and bounded from below on this set.
The constraints on  $||f||_{1+1/k}$ and $||\tilde f||_{1+1/\tilde k}$
play the role of the Casimir constraints, and it does not seem to be possible
to include these Casimirs into the functional to be minimized, as
was done for example in \cite{GR1} for the purely three dimensional
and in \cite{R3} for the purely flat problem.
The following theorem is our main result.  
\begin{thm} \label{existence}
Let $(f_{j},\tilde f_{j})\subset\mathcal{F}_{\mathbf{M}}$ be a minimizing 
sequence of $\mathcal{H}$. Then there exists 
$(f_{0},\tilde f_{0})\in\mathcal{F}_{\mathbf{M}}$, 
a subsequence again denoted by $(f_{j},\tilde f_{j})$ and a sequence of 
shift vectors $(\tilde a_{j})\subset\mathbb{R}^{2}$ such that with 
$T_{j}f_j (x,v):=f_j (x+(\tilde a_{j},0),v)$, 
$T_{j}\tilde f_j(\tilde x,\tilde v):=
\tilde f_j(\tilde x+\tilde a_{j},\tilde v)$,
\[
T_{j}f_{j} \rightharpoonup f_{0},\ 
T_{j}\tilde f_{j} \rightharpoonup \tilde f_{0}\ 
\mbox{weakly in}\ L^{1+1/k}(\mathbb{R}^{6})\ 
\mbox{or}\ L^{1+1/\tilde k}(\mathbb{R}^{4})
\ \mbox{respectively},
\]
\[
E_{\mathrm{pot}}(T_{j}f_{j}-f_{0}) \to 0,\
E_{\mathrm{pot}}(T_{j}\tilde f_{j}-\tilde f_{0}) \to  0,
\]
and
\[
\int(\rho_{T_{j}f_{j}}-\rho_{f_{0}}) U_{T_{j}\tilde f_{j}-\tilde f_{0}}\,dx 
\to 0.
\]
Moreover $(f_{0},\tilde f_{0})$ is a minimizer of $\mathcal{H}$ over 
$\mathcal{F}_{\mathbf{M}}$.
\end{thm}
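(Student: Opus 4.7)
The plan is to adapt the standard energy--Casimir / concentration--compactness scheme to this two-species problem. Let $(f_{j},\tilde f_{j})\subset\mathcal{F}_{\mathbf{M}}$ be a minimizing sequence. Combining the lower bound on $\mathcal{H}$ proved in Section~\ref{secprop} with the potential and interaction energy estimates from Section~\ref{secprel}, I would first show that $E_{\mathrm{kin}}(f_{j})$ and $E_{\mathrm{kin}}(\tilde f_{j})$ remain bounded. The constraints then make $f_{j}$ bounded in the reflexive space $L^{1+1/k}(\R^{6})$ and $\tilde f_{j}$ bounded in $L^{1+1/\tilde k}(\R^{4})$, so along a subsequence $f_{j}\rightharpoonup f_{0}$ and $\tilde f_{j}\rightharpoonup\tilde f_{0}$ weakly. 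Non-negativity, the symmetry under $x_{3}\mapsto-x_{3}$, and the $L^{1}$ and $L^{1+1/k}$ / $L^{1+1/\tilde k}$ bounds all pass to the weak limit, so that $(f_{0},\tilde f_{0})\in\mathcal{F}_{\mathbf{M}}$.

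The second step handles the translation invariance in the galactic plane by choosing shifts $\tilde a_{j}\in\R^{2}$ so that the translated sequences $T_{j}f_{j}$ and $T_{j}\tilde f_{j}$ satisfy a non-vanishing property: a fixed positive amount of mass of each species stays in a ball of fixed radius. That such shifts exist, together with the exclusion of dichotomy, is exactly where the splitting and non-vanishing results of Section~\ref{secprop} enter. The decisive ingredient is strict subadditivity of the infimum of $\mathcal{H}$ under mass splitting, which one gets by comparing with the decoupled one-species problems from the appendix and noting that the mixed interaction is strictly negative whenever both species have positive mass. The same argument circumvents the pitfall mentioned in the introduction, namely that a limit might have one of the two components vanishing.

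The third step upgrades weak to strong convergence of the densities. Interpolating the $L^{1+1/k}$ bound on $f_{j}$ against the $|v|^{2}$-moment from the kinetic energy bound puts $\rho_{T_{j}f_{j}}$ in an $L^{p}(\R^{3})$ space with $p>6/5$, and analogously $\rho_{T_{j}\tilde f_{j}}$ in some $L^{\tilde p}(\R^{2})$; combined with the local tightness from the previous step this yields strong convergence on bounded sets in the exponents dual to those appearing in the Hardy--Littlewood--Sobolev and mixed-kernel bounds of Section~\ref{secprel}. Writing
\[
E_{\mathrm{pot}}(T_{j}f_{j})
=E_{\mathrm{pot}}(f_{0})+E_{\mathrm{pot}}(T_{j}f_{j}-f_{0})
-\int\rho_{f_{0}}\,U_{T_{j}f_{j}-f_{0}}\,dx
\]
and analogous bilinear splittings for $\tilde f_{j}$ and for the cross term, the weak--strong products force all three remainder integrals to vanish, giving the convergences claimed in the statement. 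Combined with weak lower semicontinuity of $E_{\mathrm{kin}}$ these yield
\[
\mathcal{H}(f_{0},\tilde f_{0})
\le
\liminf_{j\to\infty}\mathcal{H}(T_{j}f_{j},T_{j}\tilde f_{j})
=\inf_{\mathcal{F}_{\mathbf{M}}}\mathcal{H},
\]
so $(f_{0},\tilde f_{0})$ is a minimizer. I expect the hard part to be the second step: because the flat and non-flat species couple only through a singular mixed kernel and share a single shift vector in $\R^{2}$, one must show that at the minimum the halo must track the disk, which is where a careful coupled splitting identity together with strict subadditivity of the infimum is indispensable.
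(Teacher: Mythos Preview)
Your outline matches the paper's approach---boundedness, extraction of weak limits, spatial shifts from the concentration lemma, and sub-additivity to rule out splitting---but Step~3 as written has a gap. The bilinear identity
\[
E_{\mathrm{pot}}(T_{j}f_{j})
=E_{\mathrm{pot}}(f_{0})+E_{\mathrm{pot}}(T_{j}f_{j}-f_{0})
+\int\rho_{f_{0}}\,U_{T_{j}f_{j}-f_{0}}\,dx
\]
contains one cross term that is indeed a weak--strong pairing and tends to zero, but the term $E_{\mathrm{pot}}(T_{j}f_{j}-f_{0})=-\|T_{j}f_{j}-f_{0}\|_{\mathrm{pot}}^{2}$ is \emph{quadratic} in the difference; it does not vanish by weak convergence, and since it is non-positive it even pushes the lower-semicontinuity inequality the wrong way. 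Also, what weak convergence of $\rho_{j}$ in $L^{1+1/n}$ actually buys on balls is convergence of $\|\mathbf{1}_{B_R}(\rho_{j}-\rho_{0})\|_{\mathrm{pot}}$ (Lemma~\ref{compact}), not strong $L^{p}$ convergence of the densities themselves. So the ``weak--strong products'' argument cannot close; controlling the quadratic tail \emph{is} the dichotomy step, not something done beforehand in Step~2.

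The paper organizes this as follows. After shifting via Proposition~\ref{konc}, one splits $f_{j}$ and $\tilde f_{j}$ into three spatial zones $|x|<R_{1}$, $R_{1}\le|x|<R$, $|x|\ge R$, uses Lemma~\ref{compact} on the inner zones, and then shows by contradiction that the outer piece satisfies $\liminf_{j}(\|f_{j}^{3}\|_{\mathrm{pot}}+\|\tilde f_{j}^{3}\|_{\mathrm{pot}})<\epsilon$ for $R$ large. The contradiction uses the \emph{uniform} sub-additivity of Proposition~\ref{subadlem} applied to the constraint vectors of $(f_{j}^{1},\tilde f_{j}^{1})$ and of the outer piece. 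Here a two-species subtlety appears that your sketch does not address: it may happen that, say, $\|f_{j}^{3}\|_{\mathrm{pot}}$ stays bounded away from zero while $\|\tilde f_{j}^{3}\|_{\mathrm{pot}}\to 0$. Then the outer constraint vector has uniform lower bounds only in its non-flat component, and one must invoke Proposition~\ref{subadlem} with $\mathbf{M}_{2}=(M_{j}^{3},N_{j}^{3},0,0)$---this is exactly why that proposition is formulated to allow one trivial component and why the radius formulas (\ref{radius3d}), (\ref{radius2d}) for the decoupled minimizers are needed. Your remark that ``the halo must track the disk'' is the right intuition, but the mechanism is this case distinction, not a preliminary tightness statement.
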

The spatial shifts parallel to the $(x_1,x_2)$ plane are
necessary due to the invariance of the total energy
and the constraint set under such shifts. If 
$(f_{0},\tilde f_{0})$ is a minimizer of $\mathcal{H}$,
then $(T_j f_0,T_j \tilde f_0)$ is a minimizing sequence for any choice
of shift vectors $\tilde a_j \in \R^2$ which is weakly convergent to a 
minimizer only if we shift our frame of reference accordingly.
\section{Preliminaries} \label{secprel}
\setcounter{equation}{0}
We start by collecting some well known estimates for the spatial densities
and potential energies induced by elements from the constraint set 
$\mathcal{F}_{\mathbf{M}}$.
\begin{lem}
\label{darkfiesta}
Let $(f,\tilde f)\in\mathcal{F}_{\mathbf{M}}$ and define
$n:=k+3/2,\ \tilde n:=\tilde k+1$. 
Then $\rho_f\in L^{1+1/n}(\mathbb{R}^{3}),\ 
\rho_{\tilde f}\in L^{1+1/\tilde n}(\mathbb{R}^{2})$ with
\begin{eqnarray*}
||\rho_f||_{1+1/n}
&\le&
C N^{(k+1)/(n+1)}E_{\mathrm{kin}}(f)^{3/(2k+5)},\\
||\rho_{\tilde f}||_{1+1/\tilde n}
&\le& 
C \tilde N^{(\tilde k+1)/(\tilde n+1)} 
E_{\mathrm{kin}}(\tilde f)^{1/({\tilde k+2})},
\end{eqnarray*}
and
\begin{eqnarray*}
-E_{\mathrm{pot}}(f)
&\le& 
C ||\rho_f||_{6/5}^{2}\le C_{\mathbf{M}}E_{\mathrm{kin}}(f)^{1/2},\\
- E_{\mathrm{pot}}(\tilde f)
&\le& 
C ||\tilde\rho_{\tilde f}||_{4/3}^{2}
\le C_{\mathbf{M}} E_{\mathrm{kin}}(\tilde f)^{1/2},
\end{eqnarray*}
where the constant $C>0$ is universal and $C_{\mathbf{M}}>0$ depends on the
constraint vector $\mathbf{M}$. By the restrictions on $k$ and 
$\tilde k$, $1+1/n > 6/5$ and $1+1/\tilde n > 4/3$ so that 
$\rho_f\in L^{6/5}(\mathbb{R}^{3})$,
$\rho_{\tilde f}\in L^{4/3}(\mathbb{R}^{2})$.
\end{lem}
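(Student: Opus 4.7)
The plan is to prove the density bounds by the standard ``splitting trick'' that interpolates pointwise between the $L^{1+1/k}$ constraint and the kinetic energy density, and then to obtain the potential energy bounds from the Hardy--Littlewood--Sobolev (HLS) inequality combined with an interpolation between $L^1$ and $L^{1+1/n}$ (resp.\ $L^{1+1/\tilde n}$).

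For the 3D density, I would fix $x$, set $q=1+1/k$ and $B_R=\{v\in\R^3: |v|\le R\}$, and write
\[
\rho_f(x)
=\int_{B_R}f(x,v)\,dv+\int_{B_R^c}f(x,v)\,dv
\le |B_R|^{1-1/q}\Bigl(\int f(x,v)^{q}\,dv\Bigr)^{1/q}
+\frac{1}{R^{2}}\int|v|^{2}f(x,v)\,dv.
\]
Optimizing in $R$ yields the pointwise bound
\[
\rho_f(x)\le C\Bigl(\int f(x,v)^{1+1/k}\,dv\Bigr)^{2k/(2k+5)}
\Bigl(\int|v|^{2}f(x,v)\,dv\Bigr)^{3/(2k+5)}.
\]
Raising this to the power $1+1/n=(2k+5)/(2k+3)$, integrating in $x$, and applying H\"older with conjugate exponents $(2k+3)/(2k)$ and $(2k+3)/3$ gives
\[
\|\rho_f\|_{1+1/n}
\le C\,\|f\|_{1+1/k}^{2(k+1)/(2k+5)}E_{\mathrm{kin}}(f)^{3/(2k+5)},
\]
and observing that $2(k+1)/(2k+5)=(k+1)/(n+1)$ together with $\|f\|_{1+1/k}\le N$ yields the first claim. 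The 2D estimate is derived in exactly the same fashion with $|B_R|$ replaced by $\pi R^{2}$; one obtains the pointwise bound $\rho_{\tilde f}(\tilde x)\le C A^{(\tilde k+1)/(\tilde k+2)}B^{1/(\tilde k+2)}$ (with $A,B$ the analogous integrals) and, after integration and a H\"older with exponents $(\tilde k+1)/\tilde k$ and $\tilde k+1$, the stated bound.

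For the potential energies, HLS in 3D with exponent $\lambda=1$ forces $1/p+1/q=5/3$, hence the symmetric choice $p=q=6/5$, giving $-E_{\mathrm{pot}}(f)\le C\|\rho_f\|_{6/5}^{2}$; in 2D the analogous computation with $\lambda=1$ and $1/p+1/q=3/2$ produces $-E_{\mathrm{pot}}(\tilde f)\le C\|\tilde\rho_{\tilde f}\|_{4/3}^{2}$. The restrictions $k<7/2$ and $\tilde k<2$ translate exactly into $1+1/n>6/5$ and $1+1/\tilde n>4/3$, so that $6/5$ lies between $1$ and $1+1/n$ (resp.\ $4/3$ between $1$ and $1+1/\tilde n$) and interpolation applies: writing $\|\rho_f\|_{6/5}\le\|\rho_f\|_{1}^{\theta}\|\rho_f\|_{1+1/n}^{1-\theta}$ and solving $5/6=\theta+(1-\theta)n/(n+1)$ gives $1-\theta=(n+1)/6$. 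Substituting the first estimate and using $\|\rho_f\|_{1}=\|f\|_{1}\le M$, the kinetic-energy exponent becomes $2(1-\theta)\cdot 3/(2k+5)=(n+1)/(2k+5)=1/2$, which is the desired $E_{\mathrm{kin}}(f)^{1/2}$ bound. The 2D interpolation is identical and produces the exponent $1/2$ on $E_{\mathrm{kin}}(\tilde f)$ as well.

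There is no genuine obstacle here; the argument is entirely bookkeeping of exponents and routine application of H\"older, HLS, and interpolation. The only point that requires care is to check that the restrictions $0<k<7/2$ and $0<\tilde k<2$ are exactly what is needed so that $\rho_f\in L^{6/5}(\R^{3})$ and $\rho_{\tilde f}\in L^{4/3}(\R^{2})$ via interpolation, and that all exponents collapse to $1/2$ in the kinetic-energy bound. This is the only place where the upper bounds on $k$ and $\tilde k$ enter.
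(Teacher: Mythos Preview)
Your proof is correct and follows essentially the same route as the paper: the splitting-in-$v$ trick optimized in $R$ for the density bounds, and Hardy--Littlewood--Sobolev plus $L^p$ interpolation for the potential energy bounds. The one notable refinement is that you keep the \emph{local} kinetic energy density $\int|v|^2 f(x,v)\,dv$ in the pointwise estimate and then apply H\"older in $x$ after integration, whereas the paper's displayed inequality inserts the global $E_{\mathrm{kin}}(f)$ already at the pointwise stage; your version is the cleaner execution, since with the global quantity the subsequent integration of $A(x)^{2k/(2k+3)}$ over $\R^3$ would not close without exactly the extra H\"older step you supply.
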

\begin{proof}
Given $R>0$ we split the $v$-integral and use H\"older's inequality 
and the definition of the kinetic energy to find that
\begin{eqnarray*}
\rho_f (x)
&=&
\int_{|v|\le R}f(x,v)\,dv+\int_{|v|>R}f(x,v)\,dv\\
&\le&
\left(\frac{4\pi}{3} R^{3}\right)^{1/(k+1)}
\left(\int f^{1+1/k}(x,v)\,dv\right)^{k/(k+1)}
+\frac{1}{R^{2}}\iint|v|^{2}f(x,v)\,dv\,dx.
\end{eqnarray*}
We optimize this estimate in $R$, take the resulting estimate
for $\rho_f (x)$ to the power $1+1/n$ and integrate with respect to $x$
to obtain the estimate for $\rho_f$. The estimate for $\rho_{\tilde f}$
follows the same lines.  
The last two inequalities follow by interpolation and the 
Hardy-Littlewood-Sobolev inequality.
\end{proof}
In order to analyze the mixed term in $\mathcal{H}(f,\tilde f)$ 
we need some information on the integrability of the flat potential
in $\mathbb{R}^{3}$.
\begin{lem} \label{intflatpot}
Let $\tilde\rho\in L^{4/3}(\mathbb{R}^{2})$. 
Then $U_{\tilde \rho}\in L^{6}(\mathbb{R}^{3})$ 
and
\[
||U_{\tilde \rho}||_{L^{6}(\mathbb{R}^{3})}
\le
C ||\tilde\rho||_{L^{4/3}(\mathbb{R}^{2})}.
\]
\end{lem}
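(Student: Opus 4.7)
The plan is to proceed by duality. By Fubini's theorem, for any $g\in L^{6/5}(\mathbb{R}^3)$,
\[
\int_{\mathbb{R}^3} U_{\tilde\rho}(x)\,g(x)\,dx
= -\int_{\mathbb{R}^2} \tilde\rho(\tilde y)\,V_g(\tilde y)\,d\tilde y,
\qquad V_g(\tilde y):=\int_{\mathbb{R}^3}\frac{g(x)}{|x-(\tilde y,0)|}\,dx,
\]
so $V_g$ is the trace on the plane $\{x_3=0\}$ of (minus) the three dimensional Newtonian potential $U_g$ of $g$. Via the standard $L^6$--$L^{6/5}$ duality combined with H\"older's inequality in $\mathbb{R}^2$, the claimed estimate on $\|U_{\tilde\rho}\|_{L^6(\mathbb{R}^3)}$ reduces to the trace-type bound
\[
\|V_g\|_{L^4(\mathbb{R}^2)} \le C\,\|g\|_{L^{6/5}(\mathbb{R}^3)}.
\]
A quick scaling check confirms that both inequalities are scale invariant with the same homogeneity, which supports the approach.

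To establish the trace bound I would differentiate $U_g$ under the integral sign to obtain the pointwise estimate $|\nabla U_g(x)|\le\int|g(y)|/|x-y|^2\,dy$, and apply the three dimensional Hardy--Littlewood--Sobolev inequality for the Riesz kernel of order one, with matching exponents $1/2=5/6-1/3$, to get $\nabla U_g\in L^2(\mathbb{R}^3)$ with norm bounded by $C\|g\|_{L^{6/5}}$. Hence $U_g$ belongs to the homogeneous Sobolev space $\dot H^{1}(\mathbb{R}^3)$ with the corresponding estimate. The classical trace theorem then sends $\dot H^1(\mathbb{R}^3)$ into $\dot H^{1/2}(\mathbb{R}^2)$, and the two dimensional Sobolev embedding $\dot H^{1/2}(\mathbb{R}^2)\hookrightarrow L^4(\mathbb{R}^2)$, corresponding to the endpoint relation $1/4=1/2-1/2\cdot1/2$, yields the required $L^4$ control of $V_g$.

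Composing these three ingredients---HLS in $\mathbb{R}^3$, the trace theorem from $\dot H^1(\mathbb{R}^3)$ to $\dot H^{1/2}(\mathbb{R}^2)$, and the Sobolev embedding in $\mathbb{R}^2$---completes the argument. The main technical point, and the step that pins down the specific pairing of the exponent $4/3$ on the plane with $6$ in three space, is precisely the trace inequality on the lower dimensional affine subspace $\{x_3=0\}$; everything else is standard duality plus the HLS and Sobolev theorems. An alternative route would recognise the bilinear form as a Stein--Weiss or mixed dimensional HLS inequality and cite the corresponding estimate directly, but the three step decomposition above has the advantage that every link uses only classical, self contained ingredients.
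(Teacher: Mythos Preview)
Your argument is correct, but it takes a genuinely different route from the paper.

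The paper proves the estimate by a direct computation: write
\(
\|U_{\tilde\rho}\|_{6}^{6}
=\int_{\mathbb{R}^{2}}\int_{\mathbb{R}}
\bigl(\int_{\mathbb{R}^{2}}\tilde\rho(\tilde y)\,
(|\tilde x-\tilde y|^{2}+x_{3}^{2})^{-1/2}\,d\tilde y\bigr)^{6}
dx_{3}\,d\tilde x,
\)
apply Minkowski's integral inequality to pull the \(\tilde y\)-integral outside the sixth power, evaluate the one-dimensional integral
\(\int_{\mathbb{R}}(|\tilde x-\tilde y|^{2}+x_{3}^{2})^{-3}\,dx_{3}
=C|\tilde x-\tilde y|^{-5}\),
and then finish with the two-dimensional weak Young (Hardy--Littlewood--Sobolev) inequality for the convolution \(\tilde\rho\ast|\cdot|^{-5/6}\), using \(|\cdot|^{-5/6}\in L^{12/5}_{w}(\mathbb{R}^{2})\). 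No Sobolev spaces or trace theorems enter.

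Your approach dualizes instead: the required bound is equivalent to the trace estimate \(\|U_{g}(\cdot,0)\|_{L^{4}(\mathbb{R}^{2})}\le C\|g\|_{L^{6/5}(\mathbb{R}^{3})}\), which you establish via HLS in \(\mathbb{R}^{3}\) (\(\nabla U_{g}\in L^{2}\)), the homogeneous trace map \(\dot H^{1}(\mathbb{R}^{3})\to\dot H^{1/2}(\mathbb{R}^{2})\), and the Sobolev embedding \(\dot H^{1/2}(\mathbb{R}^{2})\hookrightarrow L^{4}(\mathbb{R}^{2})\). This is exactly the content of the paper's Lemma~\ref{3dpot2d}, which there is \emph{deduced from} the present lemma by duality; you have reversed the logical order, proving the trace bound first via Sobolev theory and reading off the present lemma as its dual. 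The paper's route is more elementary and self-contained (Minkowski plus a single planar HLS), while yours imports more structure but makes transparent that the two lemmas are dual statements. One small point: to apply the homogeneous trace theorem cleanly you should note that \(U_{g}\in L^{6}(\mathbb{R}^{3})\) as well (again by HLS), so \(U_{g}\) is a genuine function rather than an equivalence class modulo constants, and its restriction to \(\{x_{3}=0\}\) is unambiguous.
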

\begin{proof}
We use the general form of the Minkowski inequality, cf.\  \cite[2.4]{LL}, 
and the weak Young inequality to obtain 
\begin{eqnarray*}
||U_{\tilde \rho}||_{6}^{6}
&=&
\int_{\mathbb{R}^{2}}\int_{\mathbb{R}}
\left(\int_{\mathbb{R}^{2}}\frac{\tilde\rho(\tilde y)}
{|x-(\tilde y,0)|}\,d\tilde y\right)^{6}\,dx_{3}\,d\tilde x\\
&\le&
\int\left[\int\left(\int\frac{\tilde\rho^{6}(\tilde y)}
{(|\tilde x-\tilde y|^{2}+x_{3}^{2})^{3}}\,dx_{3}\right)^{1/6}\,
d\tilde y\right]^{6}\,d\tilde x\\
&=&
C \int\left[\int\frac{\tilde\rho(\tilde y)}{|\tilde x-\tilde y|^{5/6}}\,
d\tilde y\right]^{6}\,d\tilde x
=
C ||\tilde\rho\ast|\cdot|^{-5/6}||_{L^{6}(\mathbb{R}^{2})}^{6}\\
&\le&
C ||\tilde\rho||_{L^{4/3}(\mathbb{R}^{2})}^{6}
||\,|\cdot|^{-5/6}||_{L^{12/5}_{w}(\mathbb{R}^{2})}^{6};
\end{eqnarray*}
the function  $|\cdot|^{-\lambda}$ is an element of the weak $L^p$ space
$ L_{w}^{n/\lambda}(\mathbb{R}^{n})$, cf.\  \cite[4.3]{LL}. 
\end{proof}
We also need to investigate the integrability of $U_\rho$, restricted to the
$(x_1,x_2)$ plane.
\begin{lem} \label{3dpot2d}
Let $\rho\in L^{6/5}(\mathbb{R}^{3})$. 
Then $U_{\rho}(\cdot,0) \in L^{4}(\mathbb{R}^{2})$ with
\[
||U(\cdot,0)||_{L^{4}(\mathbb{R}^{2})} \leq
C||\rho||_{L^{6/5}(\mathbb{R}^3)}.
\]
If in addition $\tilde\rho\in L^{4/3}(\mathbb{R}^{2})$, then the 
following mixed potential energies exist and are equal:
\[
\int U_{\tilde \rho}(x)\rho (x)\,dx =
\int U_\rho (\tilde x,0) \tilde \rho(\tilde x)\,d\tilde x.
\]
\end{lem}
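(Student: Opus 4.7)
The plan is to deduce both statements from Lemma~\ref{intflatpot} via duality, sidestepping any pointwise manipulation of $U_\rho$ on the plane $\{x_3=0\}$ (a set of Lebesgue measure zero in $\R^3$, where $U_\rho$ is a priori only defined almost everywhere).

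First I would handle the non-negative case. For $\rho\in L^{6/5}(\R^3)$ and $\tilde\rho\in L^{4/3}(\R^2)$ both non-negative, Tonelli's theorem applied to the non-negative integrand $\rho(x)\tilde\rho(\tilde y)/|x-(\tilde y,0)|$ on $\R^3\times\R^2$ yields the unconditional identity
\[
 -\int U_{\tilde\rho}(x)\,\rho(x)\,dx \;=\; -\int U_\rho(\tilde y,0)\,\tilde\rho(\tilde y)\,d\tilde y
\]
in $[0,\infty]$. The left-hand side is controlled by H\"older's inequality on $\R^3$ together with Lemma~\ref{intflatpot}:
\[
\int (-U_{\tilde\rho}(x))\,\rho(x)\,dx \;\leq\; \|\rho\|_{L^{6/5}(\R^3)}\,\|U_{\tilde\rho}\|_{L^{6}(\R^3)} \;\leq\; C\,\|\rho\|_{L^{6/5}(\R^3)}\,\|\tilde\rho\|_{L^{4/3}(\R^2)}.
\]
Taking $\tilde\rho$ to be the indicator of a compact set in $\R^2$ already shows that $U_\rho(\cdot,0)$ is finite almost everywhere.

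The map $\tilde\rho\mapsto -\int U_\rho(\cdot,0)\,\tilde\rho\,d\tilde x$ is therefore a positive linear functional on $L^{4/3}_+(\R^2)$ of norm at most $C\,\|\rho\|_{L^{6/5}(\R^3)}$; the converse H\"older inequality (equivalently, $L^{4/3}$--$L^4$ duality applied to a positive functional) identifies it with integration against a non-negative $L^4$ function, which must coincide with $-U_\rho(\cdot,0)$, yielding the stated bound for non-negative $\rho$. The general signed case then follows by applying this to $\rho_{\pm}$ together with the pointwise domination $|U_\rho(\cdot,0)|\leq -U_{|\rho|}(\cdot,0)$. Once the $L^4$ estimate is in hand, the equality of the two mixed potential energies for general $\rho\in L^{6/5}(\R^3)$ and $\tilde\rho\in L^{4/3}(\R^2)$ is just Fubini's theorem: the requisite integrability
\[
\iint \frac{|\rho(x)|\,|\tilde\rho(\tilde y)|}{|x-(\tilde y,0)|}\,dx\,d\tilde y \;<\; \infty
\]
is exactly what was proved above applied to $|\rho|$ and $|\tilde\rho|$.

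The main obstacle will be the passage from the Tonelli--H\"older bound on the bilinear form to a bona fide $L^4$ estimate for $U_\rho(\cdot,0)$. This is where the duality step is essential, and one must take care first to verify a.e.\ finiteness of $U_\rho(\cdot,0)$ (via testing against compactly supported $\tilde\rho$) before interpreting it as an element of $L^4(\R^2)$; the rest of the argument is then a routine combination of H\"older's inequality and Lemma~\ref{intflatpot}.
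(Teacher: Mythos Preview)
Your proposal is correct and follows essentially the same route as the paper: bound the bilinear pairing $\iint |\rho(x)|\,|\tilde\rho(\tilde y)|/|x-(\tilde y,0)|\,dx\,d\tilde y$ via Fubini/Tonelli, H\"older, and Lemma~\ref{intflatpot}, then recover the $L^4(\R^2)$ bound on $U_\rho(\cdot,0)$ by duality (taking the supremum over $\tilde\rho$ in the unit ball of $L^{4/3}(\R^2)$), with the equality of the mixed energies following from Fubini. The only difference is that you spell out the a.e.\ finiteness of $U_\rho(\cdot,0)$ and the reduction to the non-negative case more carefully than the paper does.
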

\begin{proof}
Fubini's theorem together with the H\"older inequality and
Lemma~\ref{intflatpot} imply that for $\rho\in L^{6/5}(\mathbb{R}^{3})$,
$\tilde\rho\in L^{4/3}(\mathbb{R}^{2})$,
\begin{eqnarray*}
\int |U_\rho (\tilde x,0)\tilde \rho(\tilde x)|\,d\tilde x
&\leq&
\iint\frac{|\rho(y) \tilde \rho (\tilde x)|}{|(\tilde x,0)-y|}\,d\tilde x\,dy
=
\int |U_{|\tilde \rho|} (y) \rho(y)|\,dy\\
&\leq&
C ||\rho||_{L^{6/5}(\mathbb{R}^3)} ||\tilde \rho||_{L^{4/3}(\mathbb{R}^2)}.
\end{eqnarray*} 
The estimate for $||U(\cdot,0)||_{L^{4}(\mathbb{R}^{2})}$ follows by
taking the supremum over all $\tilde\rho\in L^{4/3}(\mathbb{R}^{2})$
with $||\tilde\rho||_{L^{4/3}(\mathbb{R}^{2})}=1$.
Since the mixed potential energies now exist they are equal again
by Fubini's theorem.
\end{proof}

It will be useful to view the potential
energy as a bilinear form which induces a scalar product.
More precisely we define for $\rho,\sigma \in L^{6/5}(\R^3)$,
\[
\langle\rho,\sigma\rangle_\mathrm{pot} := \frac{1}{2} \int
\frac{\rho(x)\sigma(y)}{|x-y|}dy\,dx
\]
with the analogous definition for
$\langle\tilde\rho,\tilde\sigma\rangle_\mathrm{pot}$,
$\tilde\rho,\tilde\sigma \in L^{4/3}(\R^2)$, and
\begin{equation} \label{potscalprodmix}
\langle \rho,\tilde\rho\rangle_\mathrm{pot} := \frac{1}{2} \int
\frac{\rho(x)\tilde\rho(\tilde y)}{|x-(\tilde y,0)|}d\tilde
y\,dx.
\end{equation}
It is well known that $\langle\cdot,\cdot \rangle_\mathrm{pot}$
is a scalar product on $L^{6/5}(\R^3)$, cf.\ \cite[9.8]{LL},
and the same is true on $L^{4/3}(\R^2)$. The induced norms
are denoted by
\[
||\rho||_\mathrm{pot} := \langle\rho,\rho\rangle_\mathrm{pot}^{1/2},\
||\tilde \rho||_\mathrm{pot} := 
\langle\tilde \rho,\tilde \rho\rangle_\mathrm{pot}^{1/2}.
\]
Finally, 
$\langle f,g \rangle_\mathrm{pot}:=\langle \rho_f,\rho_g \rangle_\mathrm{pot}$
etc, provided that the induced spatial densities belong to 
the proper $L^p$ space, so that with this notation,
\begin{equation} \label{epotscal}
E_\mathrm{pot} (f) = -\langle f,f \rangle_\mathrm{pot} 
= - ||f||^2_\mathrm{pot}
\end{equation}
etc. The Cauchy-Schwarz
inequality corresponding to the mixed case (\ref{potscalprodmix}) 
is established next.
It tells us how strong the mixed potential energy 
term is in comparison to the potential energies of its individual components. 
\begin{lem}
\label{toll}
Let $\rho\in L_{+}^{6/5}(\mathbb{R}^{3}),\ 
\tilde\rho\in L_{+}^{4/3}(\mathbb{R}^{2})$. Then
\[
\left|\langle \rho,\tilde\rho\rangle_\mathrm{pot}\right|
\le ||\rho||_\mathrm{pot} \;
||\tilde\rho||_\mathrm{pot}.
\]
\end{lem}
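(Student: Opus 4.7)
The plan is to realize all three potential products as $L^2(\R^3;\R^3)$ inner products of the gradients of the Newtonian potentials $U_\rho$ and $U_{\tilde\rho}$, and then to invoke the ordinary Cauchy--Schwarz inequality on $L^2$. Concretely, I aim for
\[
8\pi ||\rho||_{\mathrm{pot}}^2 = \int_{\R^3}|\nabla U_\rho|^2\,dx,\quad
8\pi ||\tilde\rho||_{\mathrm{pot}}^2 = \int_{\R^3}|\nabla U_{\tilde\rho}|^2\,dx,\quad
8\pi \langle\rho,\tilde\rho\rangle_{\mathrm{pot}} = \int_{\R^3}\nabla U_\rho\cdot\nabla U_{\tilde\rho}\,dx.
\]
Granted these three identities, Cauchy--Schwarz in $L^2(\R^3;\R^3)$ delivers the claim immediately upon dividing by $8\pi$.

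The first identity is standard (cf.\ \cite[9.6--9.8]{LL}) for $\rho\in L^{6/5}(\R^3)$: one has $\Delta U_\rho = 4\pi\rho$, $U_\rho$ lies in the homogeneous Sobolev space $\dot H^1(\R^3)$, and $||\nabla U_\rho||_{L^2}^2 = -\int U_\rho \Delta U_\rho\,dx = 8\pi||\rho||^2_\mathrm{pot}$. The second and third are derived analogously from the distributional Poisson equation $\Delta U_{\tilde\rho} = 4\pi \tilde\rho(\tilde x)\delta(x_3)$, whose right-hand side is the surface measure on the plane $x_3 = 0$ with density $\tilde\rho$: formal integration by parts produces $-4\pi\int U_{\tilde\rho}(\tilde x, 0)\tilde\rho(\tilde x)\,d\tilde x$ and $-4\pi\int U_\rho(\tilde x, 0)\tilde\rho(\tilde x)\,d\tilde x$, which by Lemma~\ref{3dpot2d} equal $8\pi||\tilde\rho||^2_\mathrm{pot}$ and $8\pi\langle \rho,\tilde\rho\rangle_\mathrm{pot}$ respectively.

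To justify these identities rigorously under only the given integrability, I would approximate $\tilde \rho\otimes\delta_0$ by the three-dimensional densities $\tilde\rho_\epsilon(\tilde x, x_3) := \tilde\rho(\tilde x)\eta_\epsilon(x_3)$ with $\eta_\epsilon\in C_c^\infty(\R)$ a nonnegative mollifier of $\delta_0$. The classical first identity then applies to $\tilde\rho_\epsilon$ and one passes to the limit. A Fubini-plus-scaling computation gives $2||\tilde\rho_\epsilon||^2_\mathrm{pot} = \iint \tilde\rho(\tilde x)\tilde\rho(\tilde y) K_\epsilon(\tilde x-\tilde y)\,d\tilde x\,d\tilde y$ with $K_\epsilon(\tilde z) = \iint\eta(s)\eta(t)\,ds\,dt/\sqrt{|\tilde z|^2 + \epsilon^2(s-t)^2}$, and since the integrand increases pointwise to $1/|\tilde z|$ as $\epsilon\downarrow 0$, monotone convergence combined with Lemma~\ref{darkfiesta} yields $||\tilde\rho_\epsilon||_\mathrm{pot}^2\to||\tilde\rho||_\mathrm{pot}^2$; entirely analogous arguments handle the gradient $L^2$-norm and the mixed term.

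The main obstacle is ensuring convergence $\langle\rho,\tilde\rho_\epsilon\rangle^{(\R^3)}_\mathrm{pot} \to \langle\rho,\tilde\rho\rangle_\mathrm{pot}$, which requires a uniform majorant for the kernel $L_\epsilon(x,\tilde y) = \int\eta_\epsilon(y_3)\,dy_3/|x-(\tilde y, y_3)|$ integrable against $\rho\otimes\tilde\rho$. The bound $L_\epsilon(x,\tilde y)\leq 2/|x-(\tilde y,0)|$ is available away from a thin slab $\{|x_3| < 2\epsilon\,\mathrm{diam}(\mathrm{supp}\,\eta)\}$, and the slab contribution is controlled by Lemma~\ref{3dpot2d}. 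A cleaner but less elementary alternative bypasses the approximation altogether by working in Fourier space: using $\widehat{1/|x|} = 4\pi/|\xi|^2$ in $\R^3$ and $\int_\R d\xi_3/|\xi|^2 = \pi/|\tilde\xi|$, each of the three quantities equals a constant multiple of $\int_{\R^3} \hat\rho(\xi)\overline{\hat{\tilde\rho}(\tilde\xi)}|\xi|^{-2}\,d\xi$, and Cauchy--Schwarz on the weighted Hilbert space $L^2(\R^3, d\xi/|\xi|^2)$ immediately yields the inequality.
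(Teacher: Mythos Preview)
Your approach is correct and shares with the paper the core idea of thickening the planar measure $\tilde\rho(\tilde x)\delta(x_3)$ to a genuine $3$D density $\tilde\rho\otimes\eta_\epsilon$ and then invoking Cauchy--Schwarz for the three-dimensional potential product. The execution, however, differs in a way worth noting. The paper first reduces to $\rho,\tilde\rho\in C_c^\infty$ by density and continuity of both sides in $L^{6/5}\times L^{4/3}$; then, after applying the $3$D Cauchy--Schwarz to $\langle\rho,\tilde\rho\otimes\delta^\epsilon\rangle_\mathrm{pot}$, it bounds the approximate flat factor by the exact one via the elementary pointwise inequality $|x-y|\ge|\tilde x-\tilde y|$, which gives
\[
\|\tilde\rho\otimes\delta^\epsilon\|_\mathrm{pot}^{(3\mathrm{D})}\le\|\tilde\rho\|_\mathrm{pot}^{(2\mathrm{D})}
\]
uniformly in $\epsilon$, with no limit needed on that side. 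Your route instead aims for the exact identities $8\pi\|\tilde\rho\|_\mathrm{pot}^2=\|\nabla U_{\tilde\rho}\|_{L^2(\R^3)}^2$ and the analogous mixed identity, which obliges you to push the $\epsilon\to 0$ limit through on all three quantities (and to identify the $L^2$-limit of $\nabla U_{\tilde\rho_\epsilon}$ with $\nabla U_{\tilde\rho}$). This is doable, and your monotone-convergence argument for $\|\tilde\rho_\epsilon\|_\mathrm{pot}^2$ is fine, but it is more work than the paper's one-line kernel estimate. Your Fourier alternative is the cleanest of all conceptually---it recognises all three pairings as the same weighted $L^2(\R^3,|\xi|^{-2}d\xi)$ inner product---at the cost of importing more machinery than the paper uses.
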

\begin{proof}
We first show the assertion under the additional assumption 
that $\rho,\tilde\rho\in C^{\infty}_c$ are 
compactly supported and smooth. In that case $U_\rho$ is smooth
and bounded. Let $d\in C^\infty_c (\R^3)$ be such that $d\geq 0$ 
and $\int d =1$, and let $\delta^\epsilon(x) := \epsilon^{-3} d(x/\epsilon)$
denote the induced $\delta$-sequence; $\epsilon >0$. Then
\[
\lim_{\epsilon \to 0} \int U_\rho(\tilde x,x_3) \delta^\epsilon (x_3) dx_3 
= U_\rho(\tilde x,0)
\]
pointwise for $\tilde x \in \mathbb{R}^2$, while the latter integral 
is bounded in modulus by $||U_\rho||_\infty$. Using Lebesgue's theorem
and the fact that $\langle\cdot,\cdot \rangle_\mathrm{pot}$
is a scalar product on $L^{6/5}(\R^3)$ we can now argue as follows:
\begin{eqnarray*}
\left|\langle \rho,\tilde\rho\rangle_\mathrm{pot}\right|
&=&
\left|
\frac{1}{2} \int U_\rho (\tilde x,0) \tilde \rho(\tilde x) d\tilde x
\right|
=
\frac{1}{2} \lim_{\epsilon \to 0}
\left|
\iint U_\rho (\tilde x,x_3) \delta^\epsilon (x_3) \tilde \rho(\tilde x)\,
d x_{3}\,d\tilde x
\right|\\
&=&
\lim_{\epsilon\to 0} \left|\langle \rho,
\tilde\rho \otimes \delta^{\epsilon}\rangle_\mathrm{pot}\right|\\
&\le&
||\rho||_\mathrm{pot}\;
\lim_{\epsilon\to 0}
\left(\frac{1}{2} \iint\frac{\tilde\rho(\tilde x)\tilde\rho(\tilde y)
\delta^{\epsilon}(x_{3})\delta^{\epsilon}(y_{3})}{|x-y|}\,dx\,dy\right)^{1/2}\\
&\le& 
||\rho||_\mathrm{pot}\;
\lim_{\epsilon\to 0}
\left(\frac{1}{2} \iint\frac{\tilde\rho(\tilde x)\tilde\rho(\tilde y)
\delta^{\epsilon}(x_{3})\delta^{\epsilon}(y_{3})}{|\tilde x-\tilde y|}
\,dx\,dy\right)^{1/2}\\
&=&
||\rho||_\mathrm{pot}\;
||\tilde\rho||_\mathrm{pot};
\end{eqnarray*}
in the last step we used that 
$\int \delta^\epsilon = 1$ for $\epsilon >0$.
The general case follows by approximating $\rho$ and $\tilde \rho$
by compactly supported, smooth functions, observing the fact that
both sides of the inequality are continuous with respect to the
$L^{6/5}(\R^3)$-norm for $\rho$ and  the $L^{4/3}(\R^2)$-norm
for $\tilde \rho$.  
\end{proof}
\section{The decoupled minimizers} \label{secdecoup}
\setcounter{equation}{0}
In the next sections the existence and properties of the minimizers of the
decoupled problems where one of the components is missing will become
important. 
Here we briefly collect the relevant facts.
A function $g$ on $\R^d \times \R^d$ is called spherically symmetric
if for every $A \in \mathrm{SO}(d)$, $g(Ax,Av)=g(x,v)$.

For each $M, N >0$
the energy $\mathcal{H}(f)$ has a minimizer $f_0^{\mathrm{3D}}$ in the set
\[
\mathcal{F}_{M,N}^\mathrm{3D}
:=
\Bigl\{ f\in L^{1}_{+}(\mathbb{R}^{6}) \mid\,
||f||_{1}\le M,\;||f||_{1+1/k}\le N,\
E_{\mathrm{kin}}(f) <\infty \Bigr\}.
\]
The minimizer is unique up
to spatial shifts, spherically symmetric, has negative energy, i.e., 
$\mathcal{H}(f_0^{\mathrm{3D}}) < 0$,
saturates the constraints, i.e.,
$||f_0^{\mathrm{3D}}||_{1} = M$, $||f_0^{\mathrm{3D}}||_{1+1/k} =N$,
and has compact spatial support. 
There exists a constant $R^\ast >0$ which is independent of $M$ and $N$
such that the radius of this spatial support is
\begin{equation} \label{radius3d}
R = R^\ast M^{(2k-1)/3} N^{-(2k+2)/3}.
\end{equation}
By spherical symmetry, 
\[
f_0^{\mathrm{3D}}(\tilde x,x_{3},\tilde v,v_{3})=
f_0^{\mathrm{3D}}(\tilde x,-x_{3},\tilde v,-v_{3}).
\]
Similarly, the energy $\mathcal{H}(\tilde f)$ has a minimizer $f_0^{\mathrm{FL}}$ 
in the set 
\[
\mathcal{F}_{M,N}^\mathrm{FL}
:=
\Bigl\{ \tilde f\in L^{1}_{+}(\mathbb{R}^{4}) \mid
||\tilde f||_{1}\le\tilde M,\; ||\tilde f||_{1+1/\tilde k}\le\tilde N,\
E_{\mathrm{kin}}(\tilde f)<\infty \Bigr\}.
\]
A slight complication arises from the fact that we do
at the moment not know whether this minimizer is again unique up
to spatial shifts. However, there does exist a
two-parameter family $(f_{M,N}^{\mathrm{FL}})_{M,N >0}$
such that $f_{M,N}^{\mathrm{FL}}$
is a minimizer of  $\mathcal{H}(\tilde f)$ over
$\mathcal{F}_{M,N}^\mathrm{FL}$
which saturates the constraints, has negative energy,
is axially symmetric with respect to the $x_3$ axis, 
i.e., spherically symmetric as a function of $\tilde x, \tilde v$,
and has compact spatial support. 
There exists a constant $\tilde R^\ast$ independent of $M$ 
and $N$
such that the radius of this spatial support is
\begin{equation} \label{radius2d}
\tilde R = \tilde R^\ast M^{\tilde k} N^{-(\tilde k+1)}.
\end{equation}
In what follows $f_0^{\mathrm{FL}}$ always denotes the corresponding
member of the above family. In particular, if 
\[
\mathbf{M} = (M,N,\tilde M,\tilde N) =:
\left(\mathbf{M}^\mathrm{3D},\mathbf{M}^\mathrm{FL}\right)
\]
then $f_0^\mathrm{3D}$ denotes the minimizer of $\mathcal{H}$
over $\mathcal{F}_{M,N}^\mathrm{3D}$ and $f_0^{\mathrm{FL}}$
denotes $f_{\tilde M,\tilde N}^{\mathrm{FL}}$.

Since the above facts are known or follow by arguments
already available in the literature we defer their discussion
to the appendix.
\section{Properties of $\mathcal{H}$}\label{secprop}
\setcounter{equation}{0}
First we establish a lower bound for $\mathcal{H}$ on
$\mathcal{F}_{\mathbf{M}}$ and certain a-priori bounds 
along minimizing sequences.
\begin{lem} \label{lowerbound}
\begin{itemize}
\item[(a)]
The functional $\mathcal{H}$ is bounded from below on 
$\mathcal{F}_{\mathbf{M}}$, i.e.,
\[
-\infty < \inf_{\mathcal{F}_{\mathbf{M}}}\mathcal{H} =: h_{\mathbf{M}} < 0.
\]
\item[(b)]
Along every minimizing sequence 
$(f_{j},\tilde f_{j})\subset\mathcal{F}_{\mathbf{M}}$ of
$\mathcal{H}$
both the kinetic and the potential energies are bounded, more
precisely, for $j$ sufficiently large,
\[
E_{\mathrm{kin}}(f_{j})+E_{\mathrm{kin}}(\tilde f_{j})
+|E_{\mathrm{pot}}(f_{j})|+|E_{\mathrm{pot}}(\tilde f_{j})|\le C_{\mathbf{M}},
\]
where the constant $C_{\mathbf{M}}>0$ depends only on $\mathbf{M}$.
\end{itemize}
\end{lem}
\begin{proof}
Lemma~\ref{darkfiesta} and Lemma~\ref{toll} imply that 
for $(f,\tilde f)\in\mathcal{F}_{\mathbf{M}}$, 
\begin{eqnarray*}
\left|\langle f,\tilde f\rangle_\mathrm{pot}\right|
&\le &
||f||_\mathrm{pot} ||\tilde f||_\mathrm{pot}
\le  
C_{\mathbf{M}} E_{\mathrm{kin}}(f)^{1/4} 
E_{\mathrm{kin}}(\tilde f)^{1/4}\\
&\le&
C_{\mathbf{M}} E_{\mathrm{kin}}(f)^{1/2} 
+ C_{\mathbf{M}}E_{\mathrm{kin}}(\tilde f)^{1/2}.
\end{eqnarray*}
Using Lemma~\ref{darkfiesta} again this yields the estimate
\begin{equation} \label{hlower}
\mathcal{H}(f,\tilde f)
\ge
E_{\mathrm{kin}}(f)-C_{\mathbf{M}}E_{\mathrm{kin}}(f)^{1/2}
+ E_{\mathrm{kin}}(\tilde f)-C_{\mathbf{M}} E_{\mathrm{kin}}(\tilde f)^{1/2}. 
\end{equation}
Hence $h_{\mathbf{M}} > -\infty$. Moreover,
\[
h_{\mathbf{M}}
\leq \mathcal{H}(f_{0}^{\mathrm{3D}},f_{0}^{\mathrm{FL}})
=\mathcal{H}(f_{0}^{\mathrm{3D}})+\mathcal{H}(f_{0}^{\mathrm{FL}})
+\int\tilde U_{0}\rho_{0}\,dx < 0.
\]
Hence along a minimizing sequence $\mathcal{H}(f_j,\tilde f_j)\leq 0$ for $j$
sufficiently large, and by (\ref{hlower}),
\[
\left(E_{\mathrm{kin}}(f_j)^{1/2} - C_{\mathbf{M}}/2\right)^2 +
\left(E_{\mathrm{kin}}(\tilde f_j)^{1/2} - C_{\mathbf{M}}/2\right)^2
\leq C_{\mathbf{M}}^2 /2.
\]
Another reference to Lemma~\ref{darkfiesta} completes the proof.
\end{proof}
In order to pass to the limit along a minimizing sequence we need
the following compactness properties of the potential energies;
by $\mathbf{1}_S$ we denote the indicator function of the set $S$,
and we recall (\ref{epotscal}) and the corresponding notation.
\begin{lem}
\label{compact}
Let $(\rho_{j})\subset L^{1+1/n}(\mathbb{R}^{3})$ and 
$(\tilde\rho_{j})\subset L^{1+1/\tilde n}(\mathbb{R}^{2})$ be such that
\[
\rho_{j}\rightharpoonup\rho_{0}\ 
\mbox{weakly in}\ L^{1+1/n}(\mathbb{R}^{3}),\
\tilde\rho_{j}\rightharpoonup\tilde\rho_{0}\ 
\mbox{weakly in}\ L^{1+1/\tilde n}(\mathbb{R}^{2}).
\]
Then for each $R>0$,
\[
||\mathbf{1}_{B_{R}}(\rho_{j}-\rho_{0})||_{\mathrm{pot}} \to 0,\
||\mathbf{1}_{\tilde B_{R}}(\tilde\rho_{j}-\tilde\rho_{0})||_{\mathrm{pot}} 
\to 0 \ \mbox{as}\ j\to \infty.
\]
\end{lem}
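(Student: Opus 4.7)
The plan is to treat both assertions by the same truncated-kernel argument; I describe the three-dimensional case and indicate the modifications for the flat case. Put $\sigma_j := \mathbf{1}_{B_R}(\rho_j-\rho_0)$. Multiplication by $\mathbf{1}_{B_R}$ is a bounded linear operator on $L^{1+1/n}(\R^3)$, so $\sigma_j \rightharpoonup 0$ weakly in $L^{1+1/n}(\R^3)$, each $\sigma_j$ is supported in $B_R$, and $(\|\sigma_j\|_{1+1/n})$ is uniformly bounded by Banach--Steinhaus. Writing $p := 1+1/n$, it suffices to prove $\|\sigma_j\|_\mathrm{pot}\to 0$, and to this end I would split the Newton kernel into singular and tail parts at level $\delta>0$:
\[
\frac{1}{|z|} = K_\delta(z)+K^\delta(z),\ K_\delta(z):=\frac{\mathbf{1}_{|z|\le\delta}}{|z|},\ K^\delta(z):=\frac{\mathbf{1}_{|z|>\delta}}{|z|},
\]
so that $2\|\sigma_j\|_\mathrm{pot}^2 = I_j^\delta + J_j^\delta$ with $I_j^\delta:=\int\sigma_j(K_\delta*\sigma_j)\,dx$ and $J_j^\delta:=\int\sigma_j(K^\delta*\sigma_j)\,dx$.

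For $I_j^\delta$ I would invoke Young's convolution inequality with exponent $r:=p'/2$. Since the standing assumption $k<7/2$ yields $p>6/5$, equivalently $p'<6$, the exponent satisfies $r<3$; hence $K_\delta\in L^r(\R^3)$ with $\|K_\delta\|_r=O(\delta^{(3-r)/r})$ as $\delta\to 0$. Combined with H\"older's inequality this gives
\[
|I_j^\delta| \le \|\sigma_j\|_p\,\|K_\delta*\sigma_j\|_{p'} \le \|K_\delta\|_r\,\|\sigma_j\|_p^2 \le C\,\delta^{(3-r)/r},
\]
uniformly in $j$, so $I_j^\delta$ can be made arbitrarily small by choosing $\delta$ small. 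For $J_j^\delta$, fix $\delta>0$; since $\supp \sigma_j\subset B_R$, only the truncation $\tilde K:=K^\delta\,\mathbf{1}_{\overline{B_{2R}}}$ contributes to the convolution, and $\tilde K\in L^\infty(\R^3)\cap L^{p'}(\R^3)$. The weak convergence $\sigma_j\rightharpoonup 0$ forces $(\tilde K*\sigma_j)(x)\to 0$ pointwise, while $\tilde K*\sigma_j$ is uniformly bounded and supported in a fixed ball. Dominated convergence therefore upgrades this to strong convergence in $L^{p'}(\R^3)$, and a final application of H\"older yields $J_j^\delta\to 0$ for each fixed $\delta$. A standard $\varepsilon$-argument---first pick $\delta$ small to absorb $|I_j^\delta|$ uniformly in $j$, then let $j\to\infty$---then concludes the 3D claim.

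The flat case is structurally identical. The two-dimensional kernel $1/|\tilde z|$ is locally in $L^r(\R^2)$ for every $r<2$, so Young's inequality with $r:=(1+1/\tilde n)'/2$ is applicable provided $r<2$, which is equivalent to $1+1/\tilde n>4/3$ and is exactly what the condition $\tilde k<2$ delivers. The far-field part is handled as above using compact support of $\tilde\sigma_j:=\mathbf{1}_{\tilde B_R}(\tilde\rho_j-\tilde\rho_0)$ in $\tilde B_R$ and the weak convergence in $L^{1+1/\tilde n}(\R^2)$. The main delicacy of the whole lemma is this sharp exponent matching: the singular/tail split reduces everything to Young's inequality near the diagonal and to a mollified-kernel compactness argument far from it, but the near-diagonal estimate only works because the upper bounds on $k$ and $\tilde k$ are calibrated precisely to place $p$ strictly above the corresponding Hardy--Littlewood--Sobolev threshold.
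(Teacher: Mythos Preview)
Your argument is correct. The paper does not actually prove this lemma itself; it merely cites \cite[Lemma~2.5]{rein07} for the non-flat case and \cite[Lemma~3.6]{FiR} for the flat one. Your kernel-splitting proof (near-diagonal Young estimate plus far-field compactness via dominated convergence) is the standard route for results of this type and is almost certainly what those references contain, so there is no meaningful divergence in approach. Your observation that the exponent restrictions $k<7/2$ and $\tilde k<2$ are exactly what force $p>6/5$ and $\tilde p>4/3$---and hence make the truncated kernel lie in the right $L^r$ space---is on point and explains why these bounds appear in the constraint set.
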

\begin{proof}
The convergence of the non-flat potential energy is proved for example in
\cite[Lemma 2.5]{rein07}. 
For the flat case we refer to \cite[Lemma 3.6]{FiR}.
\end{proof}
A crucial step in the analysis is to show that minimizing sequences
do not spread out in space and that up to spatial shifts
not all the mass can leak out to infinity. This is the content of the
next result.
\begin{prop}  \label{konc}
Let $(f_{j},\tilde f_{j})\subset\mathcal{F}_{\mathbf{M}}$ be a 
minimizing sequence of $\mathcal{H}$.
Then there exists a sequence $(\tilde a_{j})\subset\mathbb{R}^{2}$ 
of shift vectors, $\epsilon_{0}>0$, and $R_{0}>0$ such that
for all sufficiently large $j\in\mathbb{N}$,
\[
\int_{(\tilde a_{j},0)+B_{R_{0}}}\int f_{j}\,dv\,dx \ge \epsilon_{0},\quad
\int_{\tilde a_{j}+\tilde B_{R_{0}}}\int \tilde f_{j}\,d\tilde v\,d\tilde x
\ge\epsilon_{0}.
\]
Here $B_{R_{0}}$ and $\tilde B_{R_{0}}$ denote the closed ball of radius
$R_0$ about the origin in $\R^3$ or $\R^2$ respectively.
\end{prop}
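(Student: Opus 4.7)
My plan is to combine a \emph{strict binding inequality} with a Lions-type concentration-compactness argument; the mixed potential term couples the two components, forcing both to concentrate around the same horizontal shift, since otherwise the interaction would vanish and the problem would effectively decouple. The strict binding inequality $h_{\mathbf{M}}<h^{\mathrm{3D}}_{M,N}+h^{\mathrm{FL}}_{\tilde M,\tilde N}$, where $h^{\mathrm{3D}},h^{\mathrm{FL}}$ are the decoupled infima from Section~\ref{secdecoup}, follows by testing $\mathcal{H}$ on the decoupled pair $(f_0^{\mathrm{3D}},f_0^{\mathrm{FL}})$:
\[
\mathcal{H}(f_0^{\mathrm{3D}},f_0^{\mathrm{FL}}) = h^{\mathrm{3D}}_{M,N}+h^{\mathrm{FL}}_{\tilde M,\tilde N}-2\langle \rho_{f_0^{\mathrm{3D}}},\tilde\rho_{f_0^{\mathrm{FL}}}\rangle_{\mathrm{pot}},
\]
and the mixed scalar product is strictly positive since both densities are nonnegative, compactly supported and overlap near the origin. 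Because $h^{\mathrm{3D}},h^{\mathrm{FL}}<0$ one further has $h_{\mathbf{M}}<h^{\mathrm{3D}}$ and $h_{\mathbf{M}}<h^{\mathrm{FL}}$.

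Next I would rule out vanishing of $\tilde f_j$. Assuming $\sup_{\tilde a\in\R^2}\int_{\tilde a+\tilde B_R}\tilde\rho_{\tilde f_j}\to 0$ for every $R>0$ along a subsequence, the $L^1$ and $L^{1+1/\tilde n}$ bounds on $\tilde\rho_{\tilde f_j}$ from Lemma~\ref{darkfiesta}, fed into a standard Lions vanishing lemma, give $||\tilde\rho_{\tilde f_j}||_{4/3}\to 0$, hence $||\tilde f_j||_{\mathrm{pot}}\to 0$; by Lemma~\ref{toll} also $\langle f_j,\tilde f_j\rangle_{\mathrm{pot}}\to 0$. Passing to the $\liminf$ in the energy then yields $h_{\mathbf M}\ge h^{\mathrm{3D}}$, contradicting strict binding. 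Hence there exist $\tilde b_j\in\R^2$ and $\epsilon_1,R_1>0$ with $\int_{\tilde b_j+\tilde B_{R_1}}\tilde\rho_{\tilde f_j}\ge\epsilon_1$ along a subsequence, and I take $\tilde a_j:=\tilde b_j$, which settles the flat half of the proposition.

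For the 3D half, suppose, toward a contradiction, that $\int_{(\tilde a_j,0)+B_{R_0}}\rho_{f_j}\to 0$ along a further subsequence for every $R_0>0$. I would decompose
\[
\int U_{\tilde f_j}\rho_{f_j}\,dx = \int_{(\tilde a_j,0)+B_{R_0}}U_{\tilde f_j}\rho_{f_j}\,dx + \int_{\R^3\setminus((\tilde a_j,0)+B_{R_0})}U_{\tilde f_j}\rho_{f_j}\,dx
\]
and show both pieces vanish: the first (for fixed $R_0$, as $j\to\infty$) by Lemma~\ref{intflatpot} together with $L^1$-to-$L^{6/5}$ interpolation, since the inner $L^1$-mass of $\rho_{f_j}$ vanishes by hypothesis; the second (uniformly in $j$, as $R_0\to\infty$) by the Newtonian $1/r$-decay of $U_{\tilde f_j}$ at distances large compared to the diameter of the effective flat support, \emph{provided} $\tilde\rho_{\tilde f_j}$ is actually tight around $\tilde a_j$. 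With the mixed term vanishing, one would obtain $\mathcal{H}(f_j,\tilde f_j)\ge \mathcal{H}(f_j)+\mathcal{H}(\tilde f_j)+o(1)\ge h^{\mathrm{3D}}+h^{\mathrm{FL}}+o(1)$, once more contradicting strict binding.

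The hardest step will be establishing the tightness of $\tilde\rho_{\tilde f_j}$ around $\tilde a_j$ used in the outer estimate: non-vanishing alone only yields a single lump of mass $\ge\epsilon_1$, while the remaining flat mass could a priori escape to infinity. I expect to handle this by applying Lions' full concentration-compactness dichotomy to $\tilde\rho_{\tilde f_j}$ and ruling out the splitting alternative via strict subadditivity
\[
h^{\mathrm{FL}}_{\tilde M,\tilde N}<h^{\mathrm{FL}}_{\tilde M_1,\tilde N_1}+h^{\mathrm{FL}}_{\tilde M_2,\tilde N_2}
\]
for all nontrivial splittings $\tilde M_1+\tilde M_2=\tilde M,\ \tilde N_1+\tilde N_2=\tilde N$, a property that follows by scaling from the explicit formula (\ref{radius2d}) for the flat support radius.
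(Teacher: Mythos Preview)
Your strict binding observation and the non-vanishing argument for $\tilde f_j$ are sound, but the tightness step you flag as ``hardest'' is in fact a genuine gap, and the fix you propose does not work. The point is that $\tilde f_j$ by itself is \emph{not} a minimizing sequence for the decoupled flat problem, so the flat sub-additivity inequality $h^{\mathrm{FL}}_{\tilde M,\tilde N}<h^{\mathrm{FL}}_{\tilde M_1,\tilde N_1}+h^{\mathrm{FL}}_{\tilde M_2,\tilde N_2}$ gives you nothing: if $\tilde f_j$ splits as $\tilde f_j^1+\tilde f_j^2$ into two far-apart lumps, you only get $\mathcal H(\tilde f_j)\approx \mathcal H(\tilde f_j^1)+\mathcal H(\tilde f_j^2)\ge h^{\mathrm{FL}}_{\tilde M_1,\tilde N_1}+h^{\mathrm{FL}}_{\tilde M_2,\tilde N_2}$, which is perfectly compatible with $\mathcal H(\tilde f_j)\ge h^{\mathrm{FL}}_{\tilde M,\tilde N}$ and yields no contradiction. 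Ruling out splitting would require the \emph{coupled} sub-additivity together with control on where $f_j$ sits relative to each flat lump, which is essentially circular at this stage of the argument. Without tightness your outer estimate fails: part of $\tilde\rho_{\tilde f_j}$ could drift off with (part of) $\rho_{f_j}$ and keep the interaction bounded away from zero.

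The paper avoids this difficulty entirely by never separating the two components. From strict binding it extracts a uniform \emph{lower} bound on the interaction, $-\int U_j\tilde\rho_j\ge c>0$, then splits the kernel $|x|^{-1}$ into near/middle/far pieces to convert this into a lower bound on the \emph{mixed} concentration integral
\[
\iint_{|x-(\tilde y,0)|<R}\rho_{f_j}(x)\,\tilde\rho_{\tilde f_j}(\tilde y)\,dx\,d\tilde y\;\ge\;\delta_0>0,
\]
for suitable $R$, uniformly in $j$. A short elementary lemma then shows that such a bound forces the existence of a common center $\tilde a_j\in\R^2$ around which \emph{both} $\rho_{f_j}$ and $\tilde\rho_{\tilde f_j}$ carry at least $\epsilon_0$ mass. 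No tightness or Lions dichotomy is needed here; those tools enter only later, in the proof of Theorem~\ref{existence}.
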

\noindent
{\bf Remark.} It is important that the same shift vectors
work for both the non-flat and the flat component.
\begin{proof}
Let $U_j := U_{ f_j}$, $\tilde\rho_j := \rho_{\tilde f_j}$, 
and let $R^{\mathrm{3D}}$ and $R^{\mathrm{FL}}$ denote the radii of the
decoupled minimizers $f_0^{\mathrm{3D}}$ and $f_0^{\mathrm{FL}}$ 
subject to constraints $\mathbf{M}^{\mathrm{3D}}$ and
$\mathbf{M}^{\mathrm{FL}}$, cf.\ Section~\ref{secdecoup}.
Since
\[
\lim_{j\to \infty} \mathcal{H}(f_j,\tilde f_j) \leq 
\mathcal{H}(f_0^{\mathrm{3D}},f_0^{\mathrm{FL}})
=\mathcal{H}(f_0^{\mathrm{3D}}) +  \mathcal{H}(f_0^{\mathrm{FL}}) 
+ \int U_0^{\mathrm{3D}} \rho_0^{\mathrm{FL}} d\tilde x,
\]
we get that for $j$ sufficiently large,
\[
\mathcal{H}(f_j) + \mathcal{H}(\tilde f_j) + \int  U_j\tilde\rho_j d\tilde x <
\mathcal{H}(f_0^{\mathrm{3D}}) +  \mathcal{H}(f_0^{\mathrm{FL}}) 
+ \frac{1}{2}\int U_0^{\mathrm{3D}} \rho_0^{\mathrm{FL}} d\tilde x.
\]
Since $\mathcal{H}(f_j) \geq \mathcal{H}(f_0^{\mathrm{3D}})$
and $\mathcal{H}(\tilde f_j) \geq \mathcal{H}(f_0^{\mathrm{FL}})$
this implies that
\begin{eqnarray}
\label{negene3}
\int  U_j \tilde\rho_j d\tilde x  
&\leq& 
\frac{1}{2}\int  U_0^{\mathrm{3D}}\rho_0^{\mathrm{FL}}d\tilde x
= -\frac{1}{2} \iint\frac{\rho_0^{\mathrm{3D}}(y)\rho_0^{\mathrm{FL}}(\tilde x)}
{|(\tilde x,0)-y|} d\tilde x\, dy \nonumber\\ 
&<&
 -\frac{M\tilde M}{2(R^{\mathrm{3D}}+R^{\mathrm{FL}})}
\end{eqnarray}
for all sufficiently large $j\in\mathbb{N}$.

For $R>1$ we write
\beas
\frac{1}{|x|}
&=& 
\mathbf{1}_{\{|x| \leq 1/R\}}(x)\frac{1}{|x|} 
+ \mathbf{1}_{\{1/R<|x| < R\}}(x)\frac{1}{|x|}
+ \mathbf{1}_{\{|x| \geq R\}}(x)\frac{1}{|x|} \\
&=:& 
K^1_R(x) + K^2_R(x) + K^3_R(x).
\eeas
With this splitting 
\[
\left|\int  U_j\tilde\rho_j d\tilde x\right|  
= \iint\frac{\rho_j(y)\tilde \rho_j(\tilde x)}{|(\tilde x,0)-y|} d\tilde x\,dy 
= J_1+J_2+J_3.
\]
The second and third terms are estimated straightforwardly:
\beas
J_2  
&\leq& 
R \iint_{|(\tilde x,0)-y|<R}\rho_j(y)
\tilde \rho_j(\tilde x)\,d\tilde x\,dy, \\
J_3  
&\leq& 
R^{-1}\iint\rho_j(y)\tilde \rho_j(\tilde x)\,d\tilde x\,dy 
\leq M\tilde M R^{-1}.
\eeas
For $J_1$ we first apply the H\"older inequality and then
the general form of the Minkowski inequality as in the proof of
Lemma~\ref{intflatpot} to obtain the estimate
\beas
J_1 
&\leq& 
\|\rho_j\|_{1+1/n}
\left\|\int_{|\tilde x-\tilde \cdot|<1/R} 
\frac{\tilde \rho_j(\tilde x)}{|(\tilde x,0)-\cdot|} d\tilde x \right\|_{n+1} \\
&\leq& 
C\, \|\rho_j\|_{1+1/n} \|\tilde \rho_j * (\tilde K^1_R)^{n/(n+1)}\|_{n+1} \\
&\leq& 
C\, |\rho_j\|_{1+1/n} \|\tilde \rho_j\|_{1+1/\tilde n} 
\|(\tilde K^1_R)^{n/(n+1)}\|_\gamma \\
&\leq&
C\, \|\rho_j\|_{1+1/n} \|\tilde \rho_j\|_{1+1/\tilde n}R^{-\sigma} \leq
C_{\mathbf{M}}R^{-\sigma},
\eeas
where
\[
\gamma := \left(\frac{1}{n+1} + \frac{1}{\tilde n+1}\right)^{-1}>1,\quad
\sigma := \frac{2}{\gamma} - \frac{n}{n+1}>0;
\]
recall that $3/2 < n < 5$ and $1<\tilde n<3$.
With (\ref{negene3}) we find that
\[
-\frac{M\tilde M}{2(R^{\mathrm{3D}}+R^\mathrm{FL})} > 
\int  U_j\tilde\rho_j d\tilde x = -J_1 - J_2 - J_3,
\]
and hence
\[
J_2 \geq
\frac{M\tilde M}{2(R^{\mathrm{3D}}+R^\mathrm{FL})} 
- \frac{M\tilde M}{R} - C_{\mathbf{M}}R^{-\sigma}.
\]
For $R$ sufficiently large the right hand side is positive, so that
\be \label{concentration}
0 < R^{-1}\left(\frac{M\tilde M}{2(R^{\mathrm{3D}}+R^\mathrm{FL})} 
- \frac{M\tilde M}{R} - C_{\mathbf{M}}R^{-\sigma} \right)
\leq  \iint_{|x-(\tilde y,0)|<R}\rho_j(x)\tilde 
\rho_j(\tilde y)\,dx\,d\tilde y. 
\ee
The existence of the shift vectors $(\tilde a_j)$ 
with the asserted properties is now a consequence of the following lemma. 
\end{proof}
\begin{lem}
\label{problem}
Let $\rho \in L^1_+ (\R^3)$, $\sigma \in L^1_+(\R^2)$ with 
\[ 
0< \int_{\R^3} \rho(x) dx, \int_{\R^2} \sigma(\tilde y)\, d\tilde y 
\leq M < \infty
\]
and such that for some $\delta_0,\, r_0>0$
\[
\iint_{|x-(\tilde y,0)|<r_0}\rho(x)\sigma(\tilde y)\,dx\,d\tilde y > 
\delta_0.
\]
Then there exist $\epsilon_0,\,R_0> 0$ depending only on 
$\delta_0$, $r_0$, and $M$ such that
\[
\epsilon_0 < \int_{|x-(\tilde a,0)|<R_0}\rho(x) dx\ \mbox{and}\ 
\epsilon_0 < \int_{|\tilde y-\tilde a|<R_0}\sigma(\tilde y) d\tilde y 
\]
for some $\tilde a \in \R^2$.
\end{lem}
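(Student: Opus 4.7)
The plan is to discretize space via a regular grid in the plane, bound the hypothesis from above by a weighted sum indexed by the grid, and then use a two-step pigeonhole to extract a single cell carrying simultaneous mass for both $\rho$ and $\sigma$.

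Partition $\R^2$ into disjoint axis-parallel squares $\tilde Q_k$ of side $r_0$ with centers $\tilde a_k$, and set
\[
C_k := \tilde Q_k \times [-r_0,r_0] \subset \R^3,\quad \tilde D_k := \{\tilde y \in \R^2 : |\tilde y - \tilde a_k| \le 2 r_0\},
\]
together with $m_k := \int_{C_k}\rho\,dx$ and $\tilde m_k := \int_{\tilde D_k}\sigma\,d\tilde y$. First I would note that $|x-(\tilde y,0)|<r_0$ forces $|x_3|<r_0$ and $|\tilde x-\tilde y|<r_0$; if moreover $\tilde x \in \tilde Q_k$ then $|\tilde y - \tilde a_k| \le r_0\sqrt{2}/2 + r_0 < 2 r_0$, so $\tilde y \in \tilde D_k$. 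Regrouping the hypothesis by the square containing $\tilde x$ therefore yields
\[
\delta_0 < \iint_{|x-(\tilde y,0)|<r_0} \rho(x)\sigma(\tilde y)\,dx\,d\tilde y \le \sum_k m_k \tilde m_k.
\]
Because the $C_k$ are disjoint, $\sum_k m_k \le \|\rho\|_1 \le M$, and because the disks $\tilde D_k$ of radius $2 r_0$ are centered on an $r_0$-lattice, each point of $\R^2$ lies in at most a universal number $C$ of them, so $\sum_k \tilde m_k \le C \|\sigma\|_1 \le C M$.

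Now the two-step pigeonhole produces the desired index. Set $\epsilon := \delta_0/(2 C M)$ and $S := \{k : m_k \ge \epsilon\}$. For $k \notin S$ we have $m_k < \epsilon$, hence $\sum_{k\notin S} m_k \tilde m_k \le \epsilon \sum_k \tilde m_k \le \delta_0/2$, which forces $\sum_{k\in S} m_k \tilde m_k \ge \delta_0/2$. Since $\sum_{k\in S} m_k \le M$, some index $k^* \in S$ must satisfy $\tilde m_{k^*} \ge \delta_0/(2 M)$, and $m_{k^*} \ge \epsilon$ by membership in $S$. Taking $\tilde a := \tilde a_{k^*}$, $R_0 := 2 r_0$ and $\epsilon_0 := \min\{\delta_0/(2 C M),\, \delta_0/(2 M)\}$, the inclusion $C_{k^*} \subset \{|x-(\tilde a,0)| \le r_0\sqrt{3/2}\} \subset \{|x-(\tilde a,0)| \le R_0\}$ and the equality $\tilde D_{k^*} = \{|\tilde y - \tilde a| \le R_0\}$ deliver both mass lower bounds.

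The one substantive obstacle is conceptual rather than technical: turning the joint lower bound $\sum_k m_k \tilde m_k \ge \delta_0$ into a single index where \emph{both} factors are quantitatively large. The two-step pigeonhole just sketched---thresholding on $m_k$ and then running a row-sum contradiction for $\tilde m_k$---handles this cleanly, and since the bounded-overlap constant $C$ is purely dimensional, the final constants $\epsilon_0$ and $R_0$ depend only on $\delta_0$, $r_0$, $M$, as required.
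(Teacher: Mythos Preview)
Your proof is correct and takes a genuinely different route from the paper. The paper works entirely in the continuum: it introduces the local mass functions
\[
R(x)=\int_{|x-z|<2r_0}\rho(z)\,dz,\qquad S(x)=\int_{|x-(\tilde y,0)|<r_0}\sigma(\tilde y)\,d\tilde y,
\]
shows via a set-inclusion argument that $\int_{\R^3} R\,S \ge (4\pi r_0^3/3)\,\delta_0$, and then uses the interpolation chain $\int RS \le \|(RS)^{1/2}\|_\infty \|R\|_1^{1/2}\|S\|_1^{1/2}$ together with $\|R\|_\infty,\|S\|_\infty\le M$ to locate a point $a$ where both $R(a)$ and $S(a)$ are quantitatively large; the observation $S(a)>0\Rightarrow|a_3|<r_0$ then lets one replace $a$ by its planar projection $\tilde a$. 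Your argument instead discretizes along a planar $r_0$-lattice, reduces the hypothesis to a lower bound on the bilinear sum $\sum_k m_k\tilde m_k$, and extracts a good index by a two-step pigeonhole using $\sum_k m_k\le M$ and the bounded-overlap bound $\sum_k\tilde m_k\le CM$. Your approach is more elementary and avoids the $L^1$--$L^\infty$ trick; the paper's approach yields fully explicit constants (e.g.\ $\epsilon_0=\delta_0^2/(8M^3)$, $R_0=3r_0$) without an unspecified overlap constant. Both give $\epsilon_0,R_0$ depending only on $\delta_0,r_0,M$, which is what the subsequent Corollary needs.
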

\begin{proof}
Let $z\in \R^3$ be given. Note first that
\beas
&&
\left\{(x,\tilde y) \in \R^5 \mid |z-(\tilde y,0)|<r_0,\ 
|x-(\tilde y,0)|<r_0\right\} \\
&&
\qquad \qquad \qquad \qquad \subset
\left \{(x,\tilde y) \in \R^5 \mid |z-x|<2 r_0,\ |x-(\tilde y,0)|<r_0 \right\}
\eeas
and hence
\beas
&&
\int_{|z-(\tilde y,0)|<r_0}\left(\int_{|x-(\tilde y,0)|<r_0} 
\sigma(\tilde y)dx\right)d\tilde y \\
&&
\qquad \qquad \qquad \qquad \leq 
\int_{|z-x|<2r_0}\left(\int_{|x-(\tilde y,0)|<r_0}
\sigma(\tilde y)d\tilde y\right)dx.
\eeas
Multiplying with $\rho(z)$ and integrating with respect 
to $z \in \R^3$ we obtain
\bea \label{central_ineq}
&&
\int\rho(z)\int_{|z-(\tilde y,0)|<r_0}
\left(\int_{|x-(\tilde y,0)|<r_0} \sigma(\tilde y)dx\right)\,d\tilde y\, dz 
\nonumber \\
&&
\qquad \qquad  \qquad \leq
\int\rho(z)\int_{|z-x|<2r_0}\left(\int_{|x-(\tilde y,0)|<r_0}
\sigma(\tilde y)\,d\tilde y\right)\,dx\, dz.
\eea
Changing the order of integration, the right hand side of 
(\ref{central_ineq}) can be rewritten as
\beas
&&
\int_{x \in \R^3} \int_{|z-x|<2r_0} \rho(z) 
\left(\int_{|x-(\tilde y,0)|<r_0}
\sigma(\tilde y)\,d\tilde y\right)\,dz\,dx \\
&&
\qquad \qquad 
= \int_{x \in \R^3}\left(\int_{|z-x|<2r_0}
\rho(z)\,dz\right)\,\left(\int_{|x-(\tilde y,0)|<r_0}
\sigma(\tilde y)\,d\tilde y\right)\,dx \\
&&
\qquad \qquad
= \int_{\R^3} R(x)\,S(x)\,dx,
\eeas
where the functions $R$ and $S$ are defined by
\be \label{def_RS}
R(x) :=  \int_{|x-z|<2r_0}\rho(z)\, dz, \quad
S(x) :=  \int_{|x-(\tilde y,0)|<r_0}\sigma(\tilde y)\, d\tilde y.
\ee
From our hypothesis,
\begin{eqnarray*}
\delta_0 
&<&
\int_{z \in \R^3}\rho(z)\int_{|z-(\tilde y,0)|<r_0} 
\sigma(\tilde y)\,d\tilde y\, dz\\
&=& 
\frac{3}{4\pi r_0^3} 
\int_{z \in \R^3} \rho(z) \int_{|z-(\tilde y,0)|<r_0}
\int_{|x-(\tilde y,0)|<r_0}\sigma(\tilde y)\,dx\,d\tilde y\,dz,
\end{eqnarray*}
so that combining with (\ref{central_ineq}) we are led to
\begin{equation} \label{snd_ineq}
\delta_0 < \frac{3}{4\pi r_0^3}\int_{\R^3} R(x)\,S(x)\,dx.
\end{equation}
As a direct consequence of our definitions (\ref{def_RS}) we find that
\begin{equation}
\|R\|_\infty \leq M, \quad \|S\|_\infty \leq M. \label{infty_bound} 
\end{equation}
Furthermore,
\begin{eqnarray*}
\|R\|_1 
&=& 
\int_{x\in \R^3}\int_{|z-x|<2r_0}\rho(z)\,dz\,dx
 = \int_{z\in \R^3}\int_{|z-x|<2r_0}\rho(z\,)dx\,dz \\
&\leq& 
8M\frac{4\pi}{3}r_0^3,
\end{eqnarray*}
and 
\begin{eqnarray*}
\|S\|_1 
&=& 
\int_{x \in \R^3}\int_{|x-(\tilde y,0)|<r_0}\sigma(\tilde y)\,d\tilde y\,dx
 = \int_{\tilde y \in \R^2}\int_{|x-(\tilde y,0)|<r_0}
\sigma(\tilde y)\,dx\,d\tilde y \\
&\leq & 
M\frac{4\pi}{3}r_0^3.
\end{eqnarray*}
We may thus continue with (\ref{snd_ineq}) as follows:
\begin{eqnarray*}
\delta_0 
& < & 
\frac{3}{4\pi r_0^3} \int_{\R^3} R(x)\,S(x)\,dx \\
&\leq& 
\frac{3}{4\pi r_0^3}\|(R\,S)^{1/2}\|_\infty 
\int_{\R^3}(R(x)\,S(x))^{1/2}dx \\
&\leq& 
\frac{3}{4\pi r_0^3}\|R\,S\|^{1/2}_\infty 
\left(\int R(x)\,dx\right)^{1/2}\left(\int S(x)\,dx\right)^{1/2} \\
&\leq& 
2\sqrt{2}M \|R\,S\|^{1/2}_\infty.
\end{eqnarray*}
So there exists $a \in \R^3$ such that
\[
R(a)\,S(a) > \left(\frac{\delta_0}{2\sqrt{2}M}\right)^2.
\]
In view of (\ref{infty_bound}) this implies that
\[
R(a) > \frac{\delta_0^2}{8 M^3}\ \mbox{and}\ 
S(a) > \frac{\delta_0^2}{8 M^3}.
\]
Finally we write $a=(\tilde a,a_3)$ with $\tilde a \in \R^2$, $a_3 \in \R$ 
and observe that
\[
\int_{|\tilde a-\tilde y|<r_0}\sigma(\tilde y)\,d\tilde y \geq 
\int_{|(\tilde a,a_3)-(\tilde y,0)|<r_0}\sigma(\tilde y)\,d\tilde y 
=S(a) >  \frac{\delta_0^2}{8 M^3}.
\]
In addition, $S(a)>0$ clearly implies $|a_3|<r_0$, so that
\[
\int_{|z-(\tilde a,0)|<3r_0}\rho(z)\,dz 
\geq \int_{|a-z|<2r_0}\rho(z)\,dz = R(a) > \frac{\delta_0^2}{8 M^3},
\]
which is exactly our claim with 
\be \label{epsrel}
\epsilon_0 := \frac{\delta_0^2}{8 M^3},\ R_0 := 3 r_0.
\ee
\end{proof}
In what follows it is important to control the
parameters $\epsilon_{0}$ and $R_{0}$ in Proposition~\ref{konc}
if the constraint vector $\mathbf{M}$ varies. This
is the content of the following corollary.
\begin{cor}
\label{coro}
Let the constraint vector $\mathbf{M}$ satisfy the bounds
\[
0<M_l\le M\le M_u,\ 0<\tilde M_l\le \tilde M\le \tilde M_u,
\]
\[
0<N_l\le N\le N_u,\ 0<\tilde N_l\le \tilde N\le \tilde N_u.
\] 
Then the parameters $\epsilon_{0}$ and $R_{0}$ in Proposition~\ref{konc}
can be chosen independently of $\mathbf{M}$ and $(f_{j},\tilde f_{j})$, 
depending only on the bounds $\mathbf{M}_l$ and $\mathbf{M}_u$.
\end{cor}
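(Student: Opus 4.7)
The strategy is simply to revisit the proof of Proposition~\ref{konc} and verify that every constant that appeared depends on $\mathbf{M}$ only through its components, in a way that remains uniform when $\mathbf{M}$ is restricted to a bounded set bounded away from zero in each coordinate.

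First I would handle the radii of the decoupled minimizers. From the explicit formulas (\ref{radius3d}) and (\ref{radius2d}), $R^{\mathrm{3D}}$ and $R^{\mathrm{FL}}$ are monomials in $M,N$ (respectively $\tilde M,\tilde N$) with fixed exponents determined by $k$ and $\tilde k$. Depending on the sign of the exponent one substitutes the upper or the lower bound, and this produces a finite upper bound $R^{\mathrm{3D}}_u$, $R^{\mathrm{FL}}_u$ depending only on $\mathbf{M}_l,\mathbf{M}_u$. In particular the lower bound
\[
\frac{M\tilde M}{2(R^{\mathrm{3D}}+R^{\mathrm{FL}})} \ge
\frac{M_l\tilde M_l}{2(R^{\mathrm{3D}}_u+R^{\mathrm{FL}}_u)}=:\kappa_l>0
\]
is uniform. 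This replaces the quantity on the left of (\ref{negene3}) by a uniform constant, and a corresponding uniform bound from above on $\mathcal{H}(f^{\mathrm{3D}}_{0},f^{\mathrm{FL}}_{0})$ implies that the "$j$ sufficiently large" threshold at which (\ref{negene3}) holds depends only on the minimizing sequence and not on $\mathbf{M}$ beyond these bounds.

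Next I would argue that the constant $C_{\mathbf M}$ appearing in the $J_1$ estimate is itself uniformly bounded. Tracing back, it is a product of $\|\rho_j\|_{1+1/n}$ and $\|\tilde\rho_j\|_{1+1/\tilde n}$, each of which by Lemma~\ref{darkfiesta} is controlled by $N,\tilde N$ and the kinetic energies. By Lemma~\ref{lowerbound}(b) the kinetic energies along the minimizing sequence are bounded by a constant depending on $\mathbf{M}$ in a polynomial way, and re-examining that proof shows this bound is monotone in the components of $\mathbf{M}$, hence uniform under the hypotheses of the corollary. Call the resulting uniform constant $C^\ast$.

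With $\kappa_l$ and $C^\ast$ in hand, one now chooses $R$ (i.e.\ $r_0$ in Lemma~\ref{problem}) large enough, depending only on $\mathbf{M}_l,\mathbf{M}_u$, so that
\[
\kappa_l - \frac{M_u\tilde M_u}{R} - C^\ast R^{-\sigma} \ge \frac{\kappa_l}{2}.
\]
Then the right side of (\ref{concentration}) is bounded below by $\delta_0:=\kappa_l/(2R)$, independently of $\mathbf{M}$ within the given range. The final step is to apply Lemma~\ref{problem} with this uniform $\delta_0$, the uniform choice $r_0=R$, and the uniform total mass bound $M_u+\tilde M_u$; by (\ref{epsrel}) both $\epsilon_0=\delta_0^2/(8(M_u+\tilde M_u)^3)$ and $R_0=3R$ depend only on $\mathbf{M}_l$ and $\mathbf{M}_u$. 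The main bookkeeping obstacle is checking that the "sufficiently large $j$" threshold and the kinetic-energy bound in Lemma~\ref{lowerbound}(b) really are monotone in the constraints; this is essentially routine once one notes that all of those estimates arose via Lemma~\ref{darkfiesta} and the interpolation/Hardy-Littlewood-Sobolev chain, each step of which has a polynomial dependence on the constraints.
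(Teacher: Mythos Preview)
Your proposal is correct and follows essentially the same approach as the paper: bound the radii $R^{\mathrm{3D}},R^{\mathrm{FL}}$ via (\ref{radius3d}), (\ref{radius2d}), bound $C_{\mathbf{M}}$ from above, choose $R$ so that the left side of (\ref{concentration}) is uniformly positive, and then read off $\epsilon_0,R_0$ from (\ref{epsrel}). The only remark is that your concern about the ``sufficiently large $j$'' threshold is unnecessary here---the corollary asserts only that $\epsilon_0$ and $R_0$ are uniform, not that the threshold is, and indeed the threshold may depend on the particular minimizing sequence.
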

\begin{proof}
Under the given bounds on $\mathbf{M}$ we can choose $R>0$ depending only on
these bounds such that
the left hand side in
(\ref{concentration}) is bounded from below by a parameter
$\delta_0 >0$ also depending only on these bounds. To this end,
observe that $M$ and $\tilde M$ are bounded both from below and above,
$C_\mathbf{M}$ is bounded from above, and the radii $R^\mathrm{3D}$
and $R^\mathrm{FL}$ are bounded from above in view of
(\ref{radius3d}) and (\ref{radius2d}). Given (\ref{epsrel})
this completes the proof.
\end{proof}
The last tool needed for the proof of Theorem~\ref{existence}
is the fact that the energy infimum $h_\mathbf{M}$ is sub-additive
in $\mathbf{M}$. While up to now all components of the constraint
vector $\mathbf{M}$ were strictly positive, this sub-additivity
is for technical reasons needed
also in situations where the flat or the non-flat component
of a constraint vector vanishes, i.e., 
$\mathbf{M} = (\mathbf{M}^\mathrm{3D},\mathbf{M}^\mathrm{FL})$
and $\mathbf{M}^\mathrm{FL}=0$ or $\mathbf{M}^\mathrm{3D}=0$. In such a case 
$h_\mathbf{M}$ is obviously  taken to denote
$\mathcal{H}(f_0^\mathrm{3D})$ or
$\mathcal{H}(f_0^\mathrm{FL})$ respectively, where $f_0^\mathrm{3D}$ is the
minimizer of $\mathcal{H}$ over $\mathcal{F}^\mathrm{3D}_{\mathbf{M}^\mathrm{3D}}$
and $f_0^\mathrm{FL}$ is the
one over $\mathcal{F}^\mathrm{FL}_{\mathbf{M}^\mathrm{FL}}$, 
cf.\ Section~\ref{secdecoup}.
We say that
{\em the constraint vector $\mathbf{M}\in [0,\infty[^{4}$ 
is nontrivial}, if
\[
(M> 0\land N> 0)\lor(\tilde M> 0\land\tilde N> 0).
\]
\begin{prop} \label{subadlem}
For all $\mathbf{M}_{1},\, \mathbf{M}_{2}\in [0,\infty[^{4}$, 
\[
h_{\mathbf{M}_{1}+\mathbf{M}_{2}} \leq
h_{\mathbf{M_{1}}}+h_{\mathbf{M}_{2}}.
\]
If both $\mathbf{M}_{1}$ and $\mathbf{M}_{2}$ are 
nontrivial, then this inequality is strict.
If $\mathbf{M}_{1}$ 
satisfies uniform bounds from above and below as in Corollary~\ref{coro}, 
and if either this is true also for $\mathbf{M}_{2}$
or one component of $\mathbf{M}_{2}$ vanishes and the
other one satisfies such uniform bounds,
then there exists $\epsilon>0$ depending  
only on these bounds such that
\[
h_{\mathbf{M}_{1}+\mathbf{M}_{2}} \leq
h_{\mathbf{M_{1}}}+h_{\mathbf{M}_{2}} - \epsilon.
\]
\end{prop}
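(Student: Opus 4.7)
The plan is to construct a trial state for $\mathcal{F}_{\mathbf{M}_1+\mathbf{M}_2}$ by superposing near-minimizers for the two pieces after a horizontal translation. For $\eta>0$ I pick $(f^i,\tilde f^i)\in\mathcal{F}_{\mathbf{M}_i}$ with $\mathcal{H}(f^i,\tilde f^i)<h_{\mathbf{M}_i}+\eta$; when a half of $\mathbf{M}_i$ vanishes the corresponding component is set to $0$ and the other is a near-minimizer of the decoupled problem of Section~\ref{secdecoup}. For $\tilde a\in\R^2$ set
\[
F_{\tilde a}(x,v):=f^1(x,v)+f^2\bigl(x-(\tilde a,0),v\bigr),\quad
\tilde F_{\tilde a}(\tilde x,\tilde v):=\tilde f^1(\tilde x,\tilde v)+\tilde f^2(\tilde x-\tilde a,\tilde v).
\]
Minkowski's inequality in $L^1$ and the Casimir spaces $L^{1+1/k}$, $L^{1+1/\tilde k}$, together with the invariance of the $x_3$-reflection symmetry on $f$ under horizontal shifts, puts $(F_{\tilde a},\tilde F_{\tilde a})\in\mathcal{F}_{\mathbf{M}_1+\mathbf{M}_2}$. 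Since the kinetic energy is additive and the potential energy is quadratic with respect to the scalar product of Section~\ref{secprel}, an expansion using (\ref{epotscal}) and Lemma~\ref{toll} yields
\[
\mathcal{H}(F_{\tilde a},\tilde F_{\tilde a})=\mathcal{H}(f^1,\tilde f^1)+\mathcal{H}(f^2,\tilde f^2)-2J(\tilde a),
\]
where
\[
J(\tilde a):=\langle f^1,T_{\tilde a}f^2\rangle_\mathrm{pot}+\langle\tilde f^1,T_{\tilde a}\tilde f^2\rangle_\mathrm{pot}+\langle f^1,T_{\tilde a}\tilde f^2\rangle_\mathrm{pot}+\langle T_{\tilde a}f^2,\tilde f^1\rangle_\mathrm{pot}\ge 0
\]
is a sum of four nonnegative cross-interactions (positivity of the Newton kernel).

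For the non-strict inequality I send $|\tilde a|\to\infty$. Truncating each density to a large ball---its mass outside being uniformly small by Lemma~\ref{darkfiesta} and the $L^p$ bounds---and then using Lemma~\ref{intflatpot}, Lemma~\ref{3dpot2d} and Lemma~\ref{toll} to bound the various pairings, each of the four terms in $J(\tilde a)$ tends to $0$: the shifted Newton kernels vanish uniformly on the compact cores while the tail contributions are $\eta$-independently small. This gives $h_{\mathbf{M}_1+\mathbf{M}_2}\le h_{\mathbf{M}_1}+h_{\mathbf{M}_2}+2\eta$, and $\eta\to 0$ completes the first assertion.

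For strictness with both $\mathbf{M}_i$ nontrivial I instead keep $\tilde a=0$ and show that $J(0)\ge c>0$ for some $c$ independent of $\eta$. The concentration estimate of Proposition~\ref{konc} (in its near-minimizer form, obtained by replacing $\lim_j\mathcal{H}(f_j,\tilde f_j)\le \mathcal{H}(f_0^{\mathrm{3D}},f_0^{\mathrm{FL}})$ in its proof by $\mathcal{H}(f,\tilde f)\le \mathcal{H}(f_0^{\mathrm{3D}},f_0^{\mathrm{FL}})+\eta$ for $\eta$ small) handles the fully coupled case, while the analogous concentration for the decoupled minimizers from Section~\ref{secdecoup} handles the cases in which a half of some $\mathbf{M}_i$ vanishes. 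After applying the resulting shifts (absorbed into $T_{\tilde a}$) I may assume each nontrivial component among $f^1,\tilde f^1,f^2,\tilde f^2$ carries mass $\ge\epsilon_0$ inside a common ball of radius $R_0$, with $\epsilon_0,R_0$ independent of $\eta$. A case-by-case inspection of which halves of $\mathbf{M}_1,\mathbf{M}_2$ are nontrivial shows that at least one summand in $J(0)$ pairs two such concentrated densities, and by the boundedness of the Newton kernel from below on that common ball this summand is bounded below by a universal multiple of $\epsilon_0^2/R_0$. Picking $\eta<c/3$ then yields $\mathcal{H}(F_0,\tilde F_0)<h_{\mathbf{M}_1}+h_{\mathbf{M}_2}$ and hence the strict inequality.

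The uniform version runs the same argument with uniform constants: by Corollary~\ref{coro} the parameters $\epsilon_0,R_0$ in the concentration step depend only on the uniform bounds $\mathbf{M}_l,\mathbf{M}_u$, and by the explicit formulas (\ref{radius3d})--(\ref{radius2d}) the same is true of the decoupled minimizers invoked when a half of $\mathbf{M}_2$ vanishes; the resulting $c$, and hence $\epsilon=c/2$, depends only on those bounds. I expect the main obstacle to lie in the strict step, specifically when $\mathbf{M}_1$ is of pure 3D type and $\mathbf{M}_2$ of pure flat type (or vice versa): then the only surviving cross term in $J(0)$ is the mixed 3D--flat interaction (\ref{potscalprodmix}), so the uniform lower bound must be extracted solely from the compact support and uniform mass-in-ball properties of the two different decoupled minimizers used from Section~\ref{secdecoup}.
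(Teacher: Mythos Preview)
Your argument is correct and follows the same overall strategy as the paper: superpose near-minimizers for the two constraint vectors, use Minkowski to land in $\mathcal{F}_{\mathbf{M}_1+\mathbf{M}_2}$, and for the strict and uniform statements invoke the concentration of Proposition~\ref{konc}/Corollary~\ref{coro} (or the compact support of the decoupled minimizers) to bound one cross interaction from below. The only notable difference is in the non-strict step: you send $|\tilde a|\to\infty$ to annihilate the cross terms, whereas the paper simply discards them. Since you already observe $J(\tilde a)\ge 0$, the identity $\mathcal{H}(F_{\tilde a},\tilde F_{\tilde a})=\mathcal{H}(f^1,\tilde f^1)+\mathcal{H}(f^2,\tilde f^2)-2J(\tilde a)$ gives $h_{\mathbf{M}_1+\mathbf{M}_2}\le\mathcal{H}(f^1,\tilde f^1)+\mathcal{H}(f^2,\tilde f^2)$ already at $\tilde a=0$, so the translation-to-infinity and the accompanying truncation arguments are superfluous. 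Conversely, your treatment of strictness is more explicit than the paper's one-line remark that ``the corresponding potential interaction energy is strictly negative'': your point that one needs $J(0)\ge c$ with $c$ independent of the near-minimizer parameter $\eta$, secured via concentration, is exactly what makes that remark rigorous, and it coincides with what the paper spells out only for the uniform case.
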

\begin{proof}
Consider two minimizing sequences 
$(f^{1}_{j},\tilde f^{1}_{j})\subset \mathcal{F}_{\mathbf{M}_1}$
and
$(f^{2}_{j},\tilde f^{2}_{j})\subset \mathcal{F}_{\mathbf{M}_2}$
with
\[
\mathcal{H}(f^{1}_{j},\tilde f^{1}_{j}) \to h_{\mathbf{M}_{1}},\quad
\mathcal{H}(f^{2}_{j},\tilde f^{2}_{j}) \to h_{\mathbf{M}_{2}}.
\]
If one of the constraint vectors, say $\mathbf{M}_{2}$ has a trivial
component, say the flat one, then we take $(f_0^\mathrm{3D},0)$ as
the corresponding minimizing sequence.
By the Minkowski inequality, 
$(f^{1}_{j}+f^{2}_{j},\tilde f^{1}_{j}+\tilde f^{2}_{j})
\in\mathcal{F}_{\mathbf{M}_{1}+\mathbf{M}_{2}}$, 
and hence
\begin{eqnarray*}
h_{\mathbf{M}_{1}+\mathbf{M}_{2}}
&\le&
\mathcal{H}(f^{1}_{j}+f^{2}_{j},\tilde f^{1}_{j}+\tilde f^{2}_{j})\\
&=&
\mathcal{H}(f^{1}_{j},\tilde f^{1}_{j})+
\mathcal{H}(f^{2}_{j},\tilde f^{2}_{j})\\
&&
{}- 2 \langle f^{1}_{j}, f^{2}_{j}\rangle_\mathrm{pot}
-2 \langle \tilde f^{1}_{j},\tilde f^{2}_{j}\rangle_\mathrm{pot}
-2 \langle f^{1}_{j},\tilde f^{2}_{j}\rangle_\mathrm{pot}
-2 \langle f^{2}_{j},\tilde f^{1}_{j}\rangle_\mathrm{pot}\\
&\le&
\mathcal{H}(f^{1}_{j},\tilde f^{1}_{j})+\mathcal{H}(f^{2}_{j},\tilde
f^{2}_{j})
\to h_{\mathbf{M}_1} +h_{\mathbf{M}_2}.
\end{eqnarray*}
If $\mathbf{M}_{1}$ and $\mathbf{M}_{2}$ have both at least one 
nontrivial component, then the corresponding potential interaction
energy is strictly negative so that the estimate above is strict.

Assume now that we have positive uniform lower and upper bounds for
$\mathbf{M}_{1}$ and $\mathbf{M}_{2}$.
We can assume that the minimizing sequences are  shifted in
such a way that the assertions of Proposition~\ref{konc}  
hold with
$\epsilon_{0}^{1}$, $\epsilon_{0}^{2}$, $R_{0}^{1}$, and $R_{0}^{2}$, 
and without spatial shifts. If one component of $\mathbf{M}_{2}$
vanishes, say, the flat one, the corresponding trivial
minimizing sequence $(f_0^\mathrm{3D},0)$ need of course not be shifted,
and we take for $R_{0}^{2}$ the radius of the minimizer $f_0^\mathrm{3D}$
and for $\epsilon_{0}^{2}$ its mass.
In either case
\[
\langle \rho^{1}_{j}, \rho^{2}_{j}\rangle_\mathrm{pot}
\geq
\iint_{B_{R_0^1}\times B_{R_0^2}}
\frac{\rho^{1}_{j}(x)\rho^{2}_{j}(y)}{|x-y|}\,dx\,dy\\
\geq
\frac{\epsilon^{1}_{0}\epsilon^{2}_{0}}{R^{1}_{0}+R^{2}_{0}}.
\]
The latter quantity is now bounded from below
by some $\epsilon >0$ depending only on the uniform
bounds on the constraint vectors, where
we use Corollary~\ref{coro} and in addition (\ref{radius3d})
if the flat component of $\mathbf{M}_{2}$ vanishes.
If the non-flat component of $\mathbf{M}_{2}$ vanishes
we use (\ref{radius2d}) instead.
\end{proof}

\smallskip

\noindent
{\bf Remark.} The uniform sub-additivity is also valid
if both $\mathbf{M}_{1}$ and $\mathbf{M}_{2}$ have exactly one nontrivial
component which is uniformly bounded from below and above,
but this case is not needed in what follows.
\section{Proof of Theorem~\ref{existence}} \label{secexminim}
\setcounter{equation}{0}
Let $(f_{j},\tilde f_{j})\in \mathcal{F}_{\mathbf{M}}$ be a
minimizing sequence for $\mathcal{H}$. We choose shift 
vectors $\tilde a_{j}\in \R^2$ such that the assertion 
of Proposition~\ref{konc} holds. 
To keep the notation simple we redefine $(f_{j},\tilde f_{j})$
as the minimizing sequence shifted by these vectors as in the statement
of Theorem~\ref{existence}. Hence according to Proposition~\ref{konc},
\begin{equation} \label{concj}
\epsilon_{0}\leq \int_{|x|\leq R_0}\int f_{j}\,dv\,dx \leq M, \quad
\epsilon_{0}\leq \int_{|\tilde x|\leq R_0}\int \tilde f_{j}\,d\tilde v\,d\tilde x
\leq \tilde M.
\end{equation}
This new sequence is of course minimizing as well. 
The definition of $\mathcal{F}_{\mathbf{M}}$ implies the a-priori bounds
\[
||f_{j}||_{1+1/k}\le N,\quad
||\tilde f_{j}||_{1+1/\tilde k}\le\tilde N.
\]
Hence after extracting a subsequence which we denote by the same symbol,
\[
f_{j}\rightharpoonup f_{0}\ \mbox{weakly in}\ L^{1+1/k}(\mathbb{R}^{6}),\quad
\tilde f_{j}\rightharpoonup \tilde f_{0}\ \mbox{weakly in}\ 
L^{1+1/\tilde k}(\mathbb{R}^{4}).
\]
From this weak convergence it follows that
\[
||f_{0}||_{1} \le M,\
||\tilde f_{0}||_{1} \le \tilde M,\
||f_{0}||_{1+1/k}\le N,\
||\tilde f_{0}||_{1+1/\tilde k}\le\tilde N,
\]
and
\[
E_{\mathrm{kin}}(f_{0}) \le
\limsup_{j\to\infty}E_{\mathrm{kin}}(f_{j})<\infty,\quad
E_{\mathrm{kin}}(\tilde f_{0}) \le
\limsup_{j\to\infty} E_{\mathrm{kin}}(\tilde f_{j})<\infty.
\]
By Lemma \ref{darkfiesta} the corresponding spatial densities 
$\rho_{j}:=\rho_{f_{j}}$ and $\tilde\rho_{j}:=\rho_{\tilde f_{j}}$ 
are bounded in $L^{1+1/n}(\mathbb{R}^{3})$ or 
$L^{1+1/\tilde n}(\mathbb{R}^{2})$ respectively.
After extracting a subsequence again,
\[
\rho_{j} \rightharpoonup \rho_{0}\ \mbox{weakly in}\  
L^{1+1/n}(\mathbb{R}^{3}),\quad
\tilde \rho_{j} \rightharpoonup \tilde \rho_{0}\ \mbox{weakly in}\ 
L^{1+1/\tilde n}(\mathbb{R}^{2}).
\]
It is easy to see that
in fact $\rho_{0}=\rho_{f_{0}}$ and 
$\tilde\rho_{0}=\rho_{\tilde f_{0}}$. 
The essential step is to prove that up to extracting yet another 
subsequence the potential energy terms converge, i.e.,
\begin{eqnarray*} 
||f_{j}-f_{0}||_{\mathrm{pot}}+
||\tilde f_{j}-\tilde f_{0}||_{\mathrm{pot}}
\to 0\ \mbox{as}\ j\to\infty;
\end{eqnarray*}
by Lemma~\ref{toll} it then follows that also
$\langle f_{j}-f_{0}, 
\tilde f_{j}-\tilde f_{0}\rangle_{\mathrm{pot}}\to 0$. 

For $R > R_1\geq R_0$ we define
$B_{R_1,R}:=\{x\in\mathbb{R}^{3} \mid R_1\le|x| < R\}$ 
with the obvious definition of $\tilde B_{R_1,R}$, 
and we split the functions $f_{j}$ and $\tilde f_{j}$ as follows:
\begin{eqnarray*}
f_{j}
&=&
\mathbf{1}_{B_{R_1}\times\mathbb{R}^{3}} f_{j}+
\mathbf{1}_{B_{R_1,R}\times\mathbb{R}^{3}} f_{j}+
\mathbf{1}_{B_{R,\infty}\times\mathbb{R}^{3}}f_{j} 
=: 
f_{j}^{1}+f_{j}^{2}+f_{j}^{3},\\
\tilde f_{j}
&=&
\mathbf{1}_{\tilde B_{R_1}\times\mathbb{R}^{2}} \tilde f_{j} +
\mathbf{1}_{\tilde B_{R_1,R}\times\mathbb{R}^{2}} \tilde f_{j} +
\mathbf{1}_{\tilde B_{R,\infty}\times\mathbb{R}^{2}} \tilde f_{j}
=:
\tilde f_{j}^{1}+\tilde f_{j}^{2}+\tilde f_{j}^{3}.
\end{eqnarray*}
Lemma~\ref{compact} implies that for $R > R_1\geq R_0$ fixed,
\begin{equation} \label{epot2conv}
||f_{j}^{1}+f_{j}^{2}-f_{0}^{1}-f_{0}^{2}||_{\mathrm{pot}}
+
||\tilde f_{j}^{1}+\tilde f_{j}^{2}-
\tilde f_{0}^{1}-\tilde f_{0}^{2}||_{\mathrm{pot}}
\to 0\ \mbox{as}\ j\to\infty.
\end{equation}
So we only need to show that for any $\epsilon>0$ and $R$ 
sufficiently large,
\begin{equation}
\label{hlspor}
\liminf_{j\to\infty}
\left(||f_{j}^{3}||_{\mathrm{pot}} +
||\tilde f_{j}^{3}||_{\mathrm{pot}} \right)<\epsilon.
\end{equation}
Once this is established we use the triangle
inequality for $||\cdot||_{\mathrm{pot}}$
to conclude that  
\[
||f_{j}-f_{0}||_{\mathrm{pot}}
\le
||f_{j}^{1}+f_{j}^{2}-f_{0}^{1}-f_{0}^{2}||_{\mathrm{pot}}+
||f_{j}^{3}||_{\mathrm{pot}}+ ||f_{0}^{3}||_{\mathrm{pot}}.
\]
We can surely find $R>1$ such that the right hand side is as small as we 
want for $j$ sufficiently large. Hence for $j\to \infty$, 
\[
E_{\mathrm{pot}}(f_{j})\to E_{\mathrm{pot}}(f_{0})
\]
and with the same argument,
\[
E_{\mathrm{pot}}(\tilde f_{j})\to E_{\mathrm{pot}}(\tilde f_{0}).
\]
Finally by Lemma \ref{toll},
\[
\int \tilde U_{j} \rho_{j}\,dx\to
\int \tilde U_{0}\rho_{0}\,dx,
\]
and all together implies that
\[
\mathcal{H}(f_{0},\tilde f_{0})\le
\lim_{j\to\infty}\mathcal{H}(f_{j},\tilde f_{j})
= h_\mathbf{M}.
\]
This is the desired minimizing property of $(f_{0},\tilde f_{0})$.

We prove (\ref{hlspor}) by contradiction, so assume that
(\ref{hlspor}) is false, i.e.
\[
\exists\epsilon_1>0\,\forall R>1\,\exists j_{0}\in\mathbb{N}\,
\forall j\ge j_{0}:\
||f_{j}^{3}||_{\mathrm{pot}} +||\tilde f_{j}^{3}||_{\mathrm{pot}}
\ge\epsilon_1.
\]
Then we can choose a subsequence such that without change of labeling 
it satisfies either
\be \label{f3large}
\forall R>1\,\exists j_{0}\in\mathbb{N}\,
\forall j\ge j_{0}:\ 
||f_{j}^{3}||_{\mathrm{pot}}\ge\epsilon_1/2
\ee
or
\be \label{tildef3large}
\forall R>1\,\exists j_{0}\in\mathbb{N}\,
\forall j\ge j_{0}:\ 
||\tilde f_{j}^{3}||_{\mathrm{pot}}\ge\epsilon_1/2.
\ee
In the following we consider the first case, 
the second one can be treated analogously.
The contradiction is arrived at by splitting $f_j$ and $\tilde f_j$
as above and then using the uniform sub-additivity
from Proposition~\ref{subadlem}. Let us denote
\[
f^0_j:=\mathbf{1}_{B_{R_0}\times\mathbb{R}^{3}} f_{j},\
\tilde f^0_j:=\mathbf{1}_{\tilde B_{R_0}\times\mathbb{R}^{2}} \tilde f_{j}.
\]
Since the splitting parameters satisfy the relation
$R>R_1\geq R_0$, (\ref{concj}) implies that 
\be \label{f1l1bounds}
\epsilon_0 \leq ||f^0_j||_1 \leq  ||f^1_j||_1 \leq M,\ 
\epsilon_0 \leq ||\tilde f^0_j||_1 \leq ||\tilde f^1_j||_1 \leq \tilde M.
\ee
We also need uniform lower bounds for the $L^{1+1/k}$-norm and
$L^{1+1/\tilde k}$-norm.
By Lemma~\ref{darkfiesta},
\[
\|f^0_j\|_1 
=
\|\rho^0_j\|_1 \leq C(R_0)\, \|\rho^0_j\|_{1+1/n} 
\leq C(R_0)\, \|f^0_j\|_{1+1/k} ^{(k+1)/(n+1)},
\]
with an analogous estimate for $\tilde f_j$.
Hence with (\ref{concj}),
\be \label{f1lpbounds}
0 < C(\epsilon_0) \leq ||f^1_j||_{1+1/k} \le N,\
0 < C(\epsilon_0)  \leq 
||\tilde f_{j}^1||_{1+1/\tilde k}\le \tilde N.
\ee
From the assumption (\ref{f3large}) we now derive such bounds
also for $f_j^3$.
By Lemma~\ref{darkfiesta} with $\theta\in]0,1[$ an interpolation
parameter and $\sigma := (1-\theta)(1+k)/(1+n)$,
\be \label{pee1}
||f||_{\mathrm{pot}}^2 = |E_{\mathrm{pot}}(f)|
\le
C\,||\rho||_{6/5}^{2} 
\le C\,||\rho||_{1}^{2\theta} ||\rho||_{1+1/n}^{2(1-\theta)}
\le C\,||f||_{1}^{2\theta}||f||_{1+1/k}^{2 \sigma}.
\ee
With $f = f_j^3$ this implies that
\be \label{f3bounds}
0 < C(\epsilon_1) \le ||f_{j}^{3}||_{1} \le M,\
0 < C(\epsilon_1) \le ||f_{j}^{3}||_{1+1/k} \le N.
\ee
To arrive at a contradiction we insert the splitting of
$f_j$ and $\tilde f_j$ into the energy functional:
\begin{eqnarray}
\mathcal{H}(f_{j},\tilde f_{j})
&=&
\mathcal{H}(f_{j}^{1},\tilde f_{j}^{1})+\mathcal{H}(f_{j}^{2},\tilde f_{j}^{2})
+\mathcal{H}(f_{j}^{3},\tilde f_{j}^{3})\nonumber \\
&&
- 2 \langle  f_{j}^{2},  f_{j}^{1}+ f_{j}^{3}\rangle_{\mathrm{pot}}  
- 2 \langle  f_{j}^{1},  f_{j}^{3}\rangle_{\mathrm{pot}}
- 2 \langle \tilde  f_{j}^{2},
\tilde  f_{j}^{1}+\tilde f_{j}^{3}\rangle_{\mathrm{pot}}
- 2 \langle\tilde  f_{j}^{1},\tilde  f_{j}^{3}\rangle_{\mathrm{pot}}\nonumber\\
&&
- 2 \langle f_{j}^{2},\tilde
 f_{j}^{1}+\tilde f_{j}^{3}\rangle_{\mathrm{pot}}
- 2 \langle f_{j}^{1},\tilde f_{j}^{3}\rangle_{\mathrm{pot}}
- 2 \langle f_{j}^{1}+ f_{j}^{3},\tilde  f_{j}^{2}\rangle_{\mathrm{pot}}
- 2 \langle f_{j}^{3},\tilde f_{j}^{1}\rangle_{\mathrm{pot}}\nonumber\\
&=:&
\mathcal{H}(f_{j}^{1},\tilde f_{j}^{1})+
\mathcal{H}(f_{j}^{2},\tilde f_{j}^{2})+
\mathcal{H}(f_{j}^{3},\tilde f_{j}^{3})\nonumber\\
&&-I_{1}-I_{2}-\tilde I_{1}-\tilde I_{2}-J_{1}-
J_{2}-\tilde J_{1}-\tilde J_{2}. \label{splitt}
\end{eqnarray}
Using the Cauchy-Schwarz inequality for 
$\langle \cdot,\cdot\rangle_{\mathrm{pot}}$,
i.e., Lemma~\ref{toll} for the mixed terms
$J_1$ and $\tilde J_1$,
and the boundedness of potential energies along the 
minimizing sequence, cf.\ Lemma~\ref{lowerbound}, 
we obtain the estimates
\begin{eqnarray*}
I_{1} + J_{1}
&\le &
C\,||f_{j}^{2}||_{\mathrm{pot}}
\le C\,\left(||f_{j}^{2}-f_{0}^{2}||_{\mathrm{pot}}+
||f_{0}^{2}||_{\mathrm{pot}}\right),\\
\tilde I_{1}+ \tilde J_{1}
&\le &
C\,||\tilde f_{j}^{2}||_{\mathrm{pot}}
\le C\,\left(||\tilde f_{j}^{2}-\tilde f_{0}^{2}||_{\mathrm{pot}}+
||\tilde f_{0}^{2}||_{\mathrm{pot}}\right).
\end{eqnarray*}
For $R>2 R_1$ and $x\in B_{R_1},\ y\in B_{R,\infty}$ we note that
\[
\frac{1}{|x-y|}\le\frac{1}{|y|-R_1}\le\frac{1}{|y|-|y|/2}=\frac{2}{|y|},
\]
and we combine this with the H\" older inequality to estimate 
$I_{2}$, $\tilde I_{2}$, $J_{2}$, and $\tilde J_{2}$ as follows:
\begin{eqnarray*}
I_{2}
&\le &
2\int_{B_{R_1}}\rho_{j}(x)\,dx
\int_{B_{R,\infty}}|y|^{-1}\rho_{j}(y)\,dy\le C||\rho_{j}||_{6/5}^{2}
\left(\frac{R_1}{R}\right)^{1/2},\\
\tilde I_{2}
&\le &
2\int_{\tilde B_{R_1}}\tilde \rho_{j}(\tilde x)\,d\tilde x
\int_{\tilde B_{R,\infty}}|\tilde y|^{-1}\tilde \rho_{j}(\tilde y)\,d\tilde y
\le C||\tilde \rho_{j}||_{4/3}^{2}\left(\frac{R_1}{R}\right)^{1/2},\\
J_{2}
&\le &
2\int_{B_{R_1}}\rho_{j}(x)\,dx
\int_{\tilde B_{R,\infty}}|\tilde y|^{-1}\tilde \rho_{j}(\tilde y)\,d\tilde y
\le C||\rho_{j}||_{6/5}||\tilde\rho_{j}||_{4/3}
\left(\frac{R_1}{R}\right)^{1/2},\\
\tilde J_{2}
&\le &
2\int_{\tilde B_{R_1}}\tilde \rho_{j}(\tilde x)\,d\tilde x
\int_{B_{R,\infty}}|y|^{-1}\rho_{j}(y)\,dy\le C||\rho_{j}||_{6/5}
||\tilde \rho_{j}||_{4/3}\left(\frac{R_1}{R}\right)^{1/2}.
\end{eqnarray*}
We wish to apply the uniform sub-additivity from Proposition~\ref{subadlem}
to the constraint vectors induced by 
$(f^1_j,\tilde f^1_j)$ and $(f^3_j,\tilde f^3_j)$.
To this end, let
\begin{eqnarray*}
\mathbf{M}_j 
&:=& 
(\|f_j\|_1, \|f_j\|_{1+1/k},\|\tilde f_j\|_1, 
\|\tilde f_j\|_{1+1/\tilde k} ), \\
 \mathbf{M}^i_j 
&:=& 
(\|f^i_j\|_1, \|f^i_j\|_{1+1/k},\|\tilde f^i_j\|_1, 
\|\tilde f^i_j\|_{1+1/\tilde k}), \ i=1,2,3. 
\end{eqnarray*}
From (\ref{f1l1bounds}), (\ref{f1lpbounds}), (\ref{f3bounds})
we have the required uniform bounds for $\mathbf{M}^1_j$
and $(M^3_j,N^3_j)$. With respect to $\tilde f^3_j$ we now
distinguish two cases. Either this function also satisfies such
non-zero uniform bounds or it is negligible, more precisely:

\smallskip

\textit{Case 1:} $\exists\, \epsilon_2>0\,\forall\, R>1\,
\exists\, j_0 \in \mathbb{N}\, \forall j\geq j_0 :\
|E_{\mathrm{pot}}(\tilde f^3_j)| \geq \epsilon_2$.\\
In this case the analogue of the potential energy estimate (\ref{pee1}) 
for $\tilde f^3_j$ implies that
\[
0 < C(\epsilon_2) \le ||\tilde f_{j}^{3}||_{1} \le \tilde M,\
0 < C(\epsilon_2) \le ||\tilde f_{j}^{3}||_{1+1/\tilde k}\le \tilde N.
\]
So we have obtained uniform positive bounds for each entry of the quantities 
$\mathbf{M}^1_j$ and $\mathbf{M}^3_j$ from above and below.
By Proposition~\ref{subadlem},
\[
\mathcal{H}(f_{j}^{1},\tilde f_{j}^{1})+
\mathcal{H}(f_{j}^{2},\tilde f_{j}^{2})+
\mathcal{H}(f_{j}^{3},\tilde f_{j}^{3}) \ge
h_{\mathbf{M}^{1}_{j}}+h_{\mathbf{M}^{2}_{j}}+h_{\mathbf{M}^{3}_{j}}
\ge h_{\mathbf{M}}+\epsilon
\]
with $\epsilon >0$ independent of the splitting parameters $R > 2 R_1$ and 
of $j$.
Recalling (\ref{splitt}) we find that
\begin{eqnarray*}
&&
h_{\mathbf{M}}-\mathcal{H}(f_{j},\tilde f_{j})+\epsilon
\le I_{1}+I_{2}+\tilde I_{1}+\tilde I_{2}+J_{1}+J_{2}+
\tilde J_{1}+\tilde J_{2}\\
&&
\qquad\le C_{1}\Bigl[||f^{2}_{0}||_{\mathrm{pot}} +
||\tilde f^{2}_{0}||_{\mathrm{pot}} + 
||f^{2}_{j}-f^{2}_{0}||_{\mathrm{pot}} +
||\tilde f^{2}_{j}-\tilde f^{2}_{0}||_{\mathrm{pot}}
+\left(R_1/R\right)^{1/2}\Bigr].
\end{eqnarray*}
We choose $R_1\geq R_0$ such that 
\[
C_{1}(||f^{2}_{0}||_{\mathrm{pot}}+||\tilde f^{2}_{0}||_{\mathrm{pot}})
<\epsilon/4.
\]
Next we choose $R>2 R_1$ such that $C_{1}(R/R_1)^{1/2}\le\epsilon/4$.
For $j$ large,
\[
h_{\mathbf{M}}-\mathcal{H}(f_{j},\tilde f_{j})+\epsilon
\le
\frac{1}{2}\epsilon
+C_{1}\left[ ||f^{2}_{j}-f^{2}_{0}||_{\mathrm{pot}} +
||\tilde f^{2}_{j}-\tilde f^{2}_{0}||_{\mathrm{pot}}\right],
\]
and by (\ref{epot2conv}) this contradicts the fact that 
$(f_{j},\tilde f_{j})$ is minimizing.

\smallskip

\textit{Case 2:} $\forall\, \epsilon>0\,
\exists\, R_\epsilon >1\,\forall\, j_0 \in \mathbb{N}\,
\exists\, j\geq j_0:\
||\tilde f^3_j||_{\mathrm{pot}} < \epsilon$, provided $R\geq R_\epsilon$.\\
In this case we neglect $\tilde f^3_j$ in the sub-additivity argument
and recall that Proposition~\ref{subadlem} yields
$\epsilon_2>0$ only depending 
on the bounds
for $\mathbf{M}_j^1$, $M_j^3$, and $N_j^3$ such that 
\beas
h_{\mathbf{M}^1_j} + h_{\mathbf{M}^2_j} + h_{(M^3_j,N^3_j,0,0)} 
&\geq&
h_{\mathbf{M}^1_j + (M^3_j,N^3_j,0,0)} +\epsilon_2+ h_{\mathbf{M}^2_j}\\
&\geq&
h_{\mathbf{M}^1_j + (M^3_j,N^3_j,0,0) + \mathbf{M}^2_j} + \epsilon_2\\
&\geq&
h_{\mathbf{M}^1_j + (M^3_j,N^3_j,0,0) + \mathbf{M}^2_j} + \epsilon_2
+ h_{(0,0,\tilde M^3_j,\tilde N^3_j)} \geq h_{\mathbf{M}_j}+ \epsilon_2.
\eeas
By the assumption of the present case we can choose a subsequence which we 
keep on denoting as before such that 
$||\tilde f^3_j||_{\mathrm{pot}} < \epsilon$
for all $j\in\mathbb{N}$, where $\epsilon$ will be determined in terms of
$\epsilon_2$ below; if necessary we increase $R$ so that
$R\geq R_\epsilon$. By Lemma~\ref{toll},
\[
|\langle \tilde f^3_j,f^3_j\rangle_{\mathrm{pot}}| \leq C \epsilon.
\]
Moreover, we choose $R>2 R_1\geq 2 R_0$ such that in (\ref{splitt}), 
$I_1 + \cdots +\tilde J_2 < \epsilon$. Hence by (\ref{splitt}),
\beas
\mathcal{H}(f_j,\tilde f_j) 
&\geq&
\mathcal{H}(f^1_j,\tilde f^1_j)+\mathcal{H}(f^2_j,\tilde f^2_j)
+\mathcal{H}(f^3_j,\tilde f^3_j) - \epsilon \\
&\geq&
\mathcal{H}(f^1_j,\tilde f^1_j)+\mathcal{H}(f^2_j,\tilde f^2_j)
+\mathcal{H}(f^3_j) - \epsilon^2 - C \epsilon\\ 
&\geq&
h_{\mathbf{M}^1_j} + h_{\mathbf{M}^2_j} + h_{(M^3_j,N^3_j,0,0)} 
- \epsilon^2 - C \epsilon\\
&\geq&
h_{\mathbf{M}_j}+ \epsilon_2 - \epsilon^2 - C \epsilon.
\eeas
If $\epsilon$ is chosen properly in terms of $\epsilon_2$, 
\[
\mathcal{H}(f_j,\tilde f_j) 
\geq h_{\mathbf{M}} + \epsilon_2/2 \ \mbox{as}\ j\to \infty.
\]
This contradicts the minimizing property of
$(f_j,\tilde f_j)$.
If one considers the case (\ref{tildef3large}) instead of (\ref{f3large}),
all the arguments remain the same with the roles of the flat and non-flat
components interchanged.
The proof of Theorem~\ref{existence} is complete.
\section{Properties of the minimizer} \label{secmin=ss}
\setcounter{equation}{0}
First we exclude the possibility that for a minimizer 
$f_{0}= 0$ or $\tilde f_{0}= 0$. 
Indeed the next result shows that the constraints are 
to some extent  saturated by any minimizer.
\begin{prop} \label{saturate} 
Let $(f_{0},\tilde f_{0})\in\mathcal{F}_{\mathbf{M}}$ be a minimizer 
of $\mathcal{H}$ over $\mathcal{F}_{\mathbf{M}}$. Then
\[
||f_{0}||_{1}=M \; \lor \; ||\tilde f_{0}||_{1}=\tilde M,
\]
\[
||f_{0}||_{1+1/k}=N,\
||\tilde f_{0}||_{1+1/\tilde k} =\tilde N.
\]
\end{prop}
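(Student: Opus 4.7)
The plan is to prove the three saturation identities by rescaling arguments. The $L^{1+1/k}$ saturations rest on a one-parameter velocity scaling, while the $L^1$ saturation (at least one of the two) requires a two-parameter scaling combined with strict sub-additivity.

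For the $L^{1+1/k}$ constraints I argue by contradiction: if $||f_0||_{1+1/k}<N$, set $f_\lambda(x,v):=\lambda^3 f_0(x,\lambda v)$ for $\lambda\ge 1$. A substitution yields $||f_\lambda||_1=||f_0||_1$, $\rho_{f_\lambda}=\rho_{f_0}$ (so that $E_{\mathrm{pot}}(f_0)$ and the mixed term $\int U_{\tilde f_0}\rho_{f_0}\,dx$ are unaffected), $E_{\mathrm{kin}}(f_\lambda)=\lambda^{-2}E_{\mathrm{kin}}(f_0)$ and $||f_\lambda||_{1+1/k}=\lambda^{3/(k+1)}||f_0||_{1+1/k}$. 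Choosing $\lambda>1$ small enough that the latter is still $\le N$, the pair $(f_\lambda,\tilde f_0)$ lies in $\mathcal{F}_{\mathbf{M}}$ and has $\mathcal{H}(f_\lambda,\tilde f_0)<\mathcal{H}(f_0,\tilde f_0)=h_{\mathbf{M}}$, contradicting minimality. The scaling $\tilde f_\lambda(\tilde x,\tilde v):=\lambda^2\tilde f_0(\tilde x,\lambda\tilde v)$ treats the flat component identically.

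For the $L^1$ statement, suppose for contradiction that $||f_0||_1<M$ and $||\tilde f_0||_1<\tilde M$. Consider the joint rescaling
\[
g_\mu(x,v):=\mu\lambda^3 f_0(x,\lambda v),\quad\lambda=\mu^{-(k+1)/3},\qquad
\tilde g_{\tilde\mu}(\tilde x,\tilde v):=\tilde\mu\tilde\lambda^2\tilde f_0(\tilde x,\tilde\lambda\tilde v),\quad\tilde\lambda=\tilde\mu^{-(\tilde k+1)/2},
\]
which exactly preserves the (already saturated) $L^{1+1/k}$ and $L^{1+1/\tilde k}$ norms while scaling the $L^1$ norms by $\mu$ and $\tilde\mu$. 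Because both mass constraints are slack, $(1,1)$ is an interior minimum of
\[
\mathcal{H}(g_\mu,\tilde g_{\tilde\mu})
=\mu^{(2k+5)/3}E_{\mathrm{kin}}(f_0)+\tilde\mu^{\tilde k+2}E_{\mathrm{kin}}(\tilde f_0)
+\mu^2 E_{\mathrm{pot}}(f_0)+\tilde\mu^2 E_{\mathrm{pot}}(\tilde f_0)+\mu\tilde\mu\, I,
\]
with $I:=\int U_{\tilde f_0}\rho_{f_0}\,dx<0$. The vanishing first-order conditions read
\[
\tfrac{2k+5}{3}E_{\mathrm{kin}}(f_0)+2E_{\mathrm{pot}}(f_0)+I=0,\qquad
(\tilde k+2)E_{\mathrm{kin}}(\tilde f_0)+2E_{\mathrm{pot}}(\tilde f_0)+I=0.
\]
I would combine these two identities with the strict sub-additivity of Proposition~\ref{subadlem} applied to a decomposition $\mathbf{M}=\mathbf{M}_1+\mathbf{M}_2$ in which both pieces are nontrivial, to force the strict inequality $\mathcal{H}(f_0,\tilde f_0)<h_{\mathbf{M}}$ and thereby obtain a contradiction.

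The main obstacle lies precisely in this last step. The natural splitting $\mathbf{M}=\mathbf{M}_0+\bigl(M-||f_0||_1,0,\tilde M-||\tilde f_0||_1,0\bigr)$ with $\mathbf{M}_0:=(||f_0||_1,N,||\tilde f_0||_1,\tilde N)$ has a \emph{trivial} second piece (both $L^{1+1/k}$-type entries vanish), so Proposition~\ref{subadlem} only gives non-strict sub-additivity and no contradiction. Restoring non-triviality by shifting a small amount $\delta>0$ of $L^{1+1/k}$ budget into $\mathbf{M}_2$ immediately causes $(f_0,\tilde f_0)\notin\mathcal{F}_{\mathbf{M}_1}$, so one must correct $(f_0,\tilde f_0)$ by a velocity dilation of the type used in the first part to fit into $\mathcal{F}_{\mathbf{M}_1}$, keep track of the resulting energy change, and pass to the limit $\delta\to 0$ using that $h_{\mathbf{M}_2}\to 0$ in that limit (by the interpolation estimates of Lemma~\ref{darkfiesta}). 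Balancing these three small quantities against the first-order identities above is the crux of the argument.
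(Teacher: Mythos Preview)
Your treatment of the $L^{1+1/k}$ constraints is correct and is exactly the paper's argument (a pure velocity dilation that leaves all spatial densities, and hence all potential terms, unchanged).

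For the $L^1$ statement, however, there is a genuine gap, and the limit argument you sketch does not close. Once you shift $\delta$ of the $L^{1+1/k}$ budget into $\mathbf{M}_2$ and velocity-dilate $(f_0,\tilde f_0)$ to fit into $\mathcal{F}_{\mathbf{M}_1}$, the resulting energy \emph{increases} by a positive amount of order $\delta$ (you must take $\lambda<1$ to shrink $\|f_0\|_{1+1/k}$, so $E_{\mathrm{kin}}$ goes up). Sub-additivity then yields only
\[
h_{\mathbf{M}}<h_{\mathbf{M}_1}+h_{\mathbf{M}_2}\le \mathcal{H}(f_0,\tilde f_0)+c\,\delta+h_{\mathbf{M}_2}=h_{\mathbf{M}}+c\,\delta+h_{\mathbf{M}_2},
\]
which is no contradiction: it just says $0<c\,\delta+h_{\mathbf{M}_2}$. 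The first-order identities you derived do not help here, since they encode stationarity of $\mathcal{H}$ under your $(\mu,\tilde\mu)$-scaling, not under the velocity dilation you are forced to perform to restore the $L^{1+1/k}$ constraint.

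The missing idea is to allow \emph{spatial} dilation as well. The paper rescales simultaneously in $x$, $\tilde x$, $v$, $\tilde v$ and amplitude,
\[
f_0^\ast(x,v)=a\,f_0(bx,cv),\qquad \tilde f_0^\ast(\tilde x,\tilde v)=d\,\tilde f_0(b\tilde x,e\tilde v),
\]
and chooses the five parameters (in fact a one-parameter family, $a=c^{-7}$, $b=d=c^{-4}$, $e=c$) so that $\mathcal{H}(f_0^\ast,\tilde f_0^\ast)=\mathcal{H}(f_0,\tilde f_0)$ exactly, both $L^1$ norms are multiplied by $c^2>1$, and both $L^{1+1/k}$-type norms strictly decrease (this uses precisely $k<7/2$ and $\tilde k<2$). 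Then with $\mathbf{M}^\ast$ the vector of norms of $(f_0^\ast,\tilde f_0^\ast)$, both $\mathbf{M}^\ast$ and $\mathbf{M}-\mathbf{M}^\ast$ are nontrivial, $h_{\mathbf{M}^\ast}\le\mathcal{H}(f_0^\ast,\tilde f_0^\ast)=h_{\mathbf{M}}$, and $h_{\mathbf{M}-\mathbf{M}^\ast}<0$, so
\[
h_{\mathbf{M}}<h_{\mathbf{M}^\ast}+h_{\mathbf{M}-\mathbf{M}^\ast}<h_{\mathbf{M}^\ast}\le h_{\mathbf{M}},
\]
the desired contradiction. The point is that energy preservation under the rescaling replaces your $+c\,\delta$ error by $0$; a velocity-only scaling cannot achieve this because it leaves all potential terms fixed.
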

\begin{proof}
We define for $a,b,c,d,e>0$ 
a rescaled state $(f_{0}^{\ast},\tilde f_{0}^{\ast})$ as
\[
f_{0}^{\ast}(x,v):=af_{0}(bx,cv),\quad
\tilde f_{0}^{\ast}(\tilde x,\tilde v):=d\tilde f_{0}(b\tilde x,e\tilde v);
\]
because of the mixed potential energy term $x$ and $\tilde x$ must be
scaled in the same way.
Then
\[
E_{\mathrm{kin}}(f_{0}^{\ast})
=
ab^{-3}c^{-5}E_{\mathrm{kin}}(f_{0}),\
E_{\mathrm{kin}}(\tilde f_{0}^{\ast})
=
db^{-2}e^{-4}E_{\mathrm{kin}}(\tilde f_{0}),
\]
\[
E_{\mathrm{pot}}(f_{0}^{\ast})
=
a^{2}b^{-5}c^{-6}E_{\mathrm{pot}}(f_{0}),\
E_{\mathrm{pot}}(\tilde f_{0}^{\ast})
=
d^{2}b^{-3}e^{-4}E_{\mathrm{pot}}(\tilde f_{0}),
\]
\[
\int \tilde U_{0}^{\ast}\rho_{0}^{\ast}\,dx
=
adb^{-4}c^{-3}e^{-2}\int \tilde U_{0}\rho_{0}\,dx.
\]
Assume that $||f_{0}||_{1+1/k}<N$. Then we choose 
\[
a=c^{3},\ b=d=e=1.
\]
For this choice of parameters $\tilde f_0^\ast = \tilde f_0$,
\[
||f_{0}^{\ast}||_{1} = ||f_{0}||_{1},\
||f_{0}^{\ast}||_{1+1/k} = c^{3/(k+1)}||f_{0}||_{1+1/k},
\]
and
\[
\mathcal{H}(f_{0}^{\ast},\tilde f_{0}^{\ast})
=c^{-2}E_{\mathrm{kin}}(f_{0})+E_{\mathrm{pot}}(f_{0})
+\mathcal{H}(\tilde f_{0})+\int \tilde U_{0} \rho_{0}\,dx.
\]
We can choose $c>1$ so that the rescaled state 
still lies in $\mathcal{F}_{\mathbf{M}}$ and has lower 
energy which is a contradiction. 
The analogous argument shows that  
$||\tilde f_{0}||_{1+1/\tilde k}=\tilde N$. 

In order to prove that at least one of the two mass constraints
is saturated we assume 
that $||f_{0}||_{1}<M\land ||\tilde f_{0}||_{1}<\tilde M$, 
and we choose the scaling parameters 
\[
a=c^{-7},\ b=d=c^{-4},\ e=c.
\]
For this choice,
\[
||f_{0}^{\ast}||_{1}
=
c^{2}||f_{0}||_{1},\
||\tilde f_{0}^{\ast}||_{1}
=
c^{2}||\tilde f_{0}||_{1},
\]
\[
||f_{0}^{\ast}||_{1+1/k}
=
c^{(2k-7)/(k+1)}||f_{0}||_{1+1/k},\
||\tilde f_{0}^{\ast}||_{1+1/\tilde k}
=
c^{(2\tilde k-4)/(\tilde k+1)}||\tilde f_{0}||_{1+1/\tilde k},
\]
and
\[
\mathcal{H}(f_{0}^{\ast},\tilde f_{0}^{\ast})=\mathcal{H}(f_{0},\tilde f_{0}).
\]
Since $0<k<7/2$ and $0<\tilde k<2$ we can choose $c>1$ such that 
$(f_{0}^{\ast},\tilde f_{0}^{\ast})\in\mathcal{F}_{\mathbf{M}}$ and 
both 
$\mathbf{M^\ast}:= (||f_0^\ast||_1,||f_0^\ast||_{1+1/k},
||\tilde f_0^\ast||_1,||\tilde f_0^\ast||_{1+1/\tilde k})$
and $\mathbf{M}-\mathbf{M^\ast}$ are non-trivial. 
The strict sub-additivity in Proposition~\ref{subadlem} implies the
desired contradiction:
\[
h_{\mathbf{M}}<h_{\mathbf{M}^\ast}+h_{\mathbf{M}-\mathbf{M}^\ast}
< \mathcal{H}(f_{0}^{\ast},\tilde f_{0}^{\ast})
= \mathcal{H}(f_{0},\tilde f_{0}) = h_{\mathbf{M}}.
\]
\end{proof}
The main result of this section is the fact that
the minimizers are 
functions of the particle or local energy. We use the
Lagrange multiplier method presented for example in \cite{GR1,GR2,rein07,SS}. 
\begin{thm}
\label{minim=ss}
Let $(f_{0},\tilde f_{0})$ be a minimizer as obtained 
in Theorem~\ref{existence}
with induced potentials $(U_{0},\tilde U_{0})$. Then 
\[
f_{0}(x,v)=\left(\frac{E_{0}-E(x,v)}{\lambda}\right)_{+}^{k}
\ \mbox{a.e.},
\]
\[
\tilde f_{0}(\tilde x,\tilde v)
=\left(\frac{\tilde E_{0}-E(\tilde x,0,\tilde v,0)}
{\tilde\lambda}\right)_{+}^{\tilde k}\ \mbox{a.e.},
\]
where $E(x,v):=\frac{1}{2}|v|^{2}+U_{0}(x)+\tilde U_{0}(x)$
and $(\cdot)_+$ denotes the positive part. 
The Lagrange multipliers are defined as
\begin{eqnarray*}
E_{0}
&:=&
\frac{1}{||f_{0}||_{1}}\left(\frac{2k+5}{3}
E_{\mathrm{kin}}(f_{0})+2E_{\mathrm{pot}}(f_{0})+
\int U_{0} \tilde\rho_{0} \,dx\right),\\
\tilde E_{0}
&:=&
\frac{1}{||\tilde f_{0}||_{1}}
\left((\tilde k+2)E_{\mathrm{kin}}(\tilde f_{0})+
2E_{\mathrm{pot}}(\tilde f_{0})+
\int U_{0} \tilde\rho_{0}\,dx\right),
\end{eqnarray*}
and
\[
\lambda
:=
\frac{2(k+1)E_{\mathrm{kin}}(f_{0})}{3||f_{0}||_{1+1/k}^{1+1/k}},\
\tilde\lambda
:=
\frac{(\tilde k+1)E_{\mathrm{kin}}(\tilde f_{0})}
{||\tilde f_{0}||_{1+1/\tilde k}^{1+1/\tilde k}}.
\]
\end{thm}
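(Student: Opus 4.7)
The strategy is to apply the Lagrange multiplier method to each component separately, treating the mixed potential term as a frozen external field. Since $(f_0,\tilde f_0)$ minimizes $\mathcal H$ over $\mathcal F_{\mathbf M}$, $f_0$ must in particular minimize the partial functional $f\mapsto\mathcal H(f,\tilde f_0)$ over those $f\in L^{1}_{+}(\R^{6})$ that share the prescribed symmetry and satisfy $||f||_{1}\le M$, $||f||_{1+1/k}\le N$, $E_\mathrm{kin}(f)<\infty$. In this reduced problem the cross-term $\int U_{\tilde f_0}\rho_f\,dx$ is linear in $f$ and plays the role of an external potential energy generated by $\tilde U_0$, so the set-up matches the purely three-dimensional problem of \cite{GR1,GR2,rein07,SS} up to that external field. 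An analogous reduction applies to $\tilde f_0$ on $\R^{4}$, with external potential $U_0(\tilde x,0)$ whose integrability is guaranteed by Lemma~\ref{3dpot2d}.

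For the Euler--Lagrange step I pick bounded, compactly supported, symmetry-respecting test functions $g$ with $g\ge 0$ on $\{f_0=0\}$, so that $f_0+\epsilon g\in L^{1}_{+}(\R^{6})$ for all small $\epsilon\ge 0$, and compute
\[
\frac{d}{d\epsilon}\bigg|_{\epsilon=0^{+}}\mathcal H(f_0+\epsilon g,\tilde f_0)=\iint E(x,v)\,g(x,v)\,dx\,dv,
\]
where $E(x,v):=\tfrac{1}{2}|v|^{2}+U_0(x)+\tilde U_0(x)$. Admissibility forces $\iint g\,dx\,dv\le 0$ if $||f_0||_1=M$ and $\iint f_0^{1/k}g\,dx\,dv\le 0$, the latter constraint being active by Proposition~\ref{saturate}. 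The KKT argument for these two inequality constraints then produces scalar multipliers $E_0$ and $\lambda>0$ with $E(x,v)+\lambda f_0^{1/k}(x,v)=E_0$ on $\{f_0>0\}$ and $E(x,v)\ge E_0$ on $\{f_0=0\}$, equivalently the pointwise identity $f_0=((E_0-E)/\lambda)_+^{k}$. If $||f_0||_1<M$ the mass multiplier vanishes, forcing $E_0=0$, which is consistent with the explicit formula below. The evenness of $f_0$ in $(x_3,v_3)$ is automatic, since $\tilde U_0$ is even in $x_3$ by construction and $U_0$ inherits this evenness from the corresponding symmetry of $\rho_0$. Running the identical argument in $\R^{4}$ gives the companion formula for $\tilde f_0$ involving $E(\tilde x,0,\tilde v,0)$.

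To evaluate the multipliers I perform the explicit $v$-integrals in the representation formula. Setting $V:=U_0+\tilde U_0$ and $A(x):=(E_0-V(x))_+$, a polar change of variables yields
\[
\int\tfrac{|v|^{2}}{2}\,f_0\,dv=\frac{3\,A(x)}{2k+5}\int f_0\,dv,
\]
and integrating in $x$, then inserting $\int V\rho_0\,dx=2E_\mathrm{pot}(f_0)+\int\tilde U_0\rho_0\,dx$ into $\int A\rho_0\,dx=E_0||f_0||_1-\int V\rho_0\,dx$, produces the asserted expression for $E_0$ (which indeed evaluates to zero precisely when $||f_0||_1<M$). Multiplying the Euler--Lagrange identity $\lambda f_0^{1/k}=E_0-E$ by $f_0$ and integrating gives $\lambda\,||f_0||_{1+1/k}^{1+1/k}=\int A\rho_0\,dx-E_\mathrm{kin}(f_0)=\tfrac{2(k+1)}{3}E_\mathrm{kin}(f_0)$, the claimed value of $\lambda$. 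The same computation in $\R^{4}$, using the two-dimensional identity $\int|\tilde v|^{2}(A-\tfrac{1}{2}|\tilde v|^{2})_+^{\tilde k}d\tilde v=\tfrac{2A}{\tilde k+2}\int(A-\tfrac{1}{2}|\tilde v|^{2})_+^{\tilde k}d\tilde v$, produces $\tilde E_0$ and $\tilde\lambda$; the mixed term enters via $\int U_0(\tilde x,0)\tilde\rho_0\,d\tilde x=\int\tilde U_0\rho_0\,dx$ by Lemma~\ref{3dpot2d}. The main obstacle is the very first step: constructing enough admissible test functions that are symmetric, obey the one-sided sign constraint $g\ge 0$ on $\{f_0=0\}$, and independently span the two constraint directions. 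The two-parameter cut-and-paste construction of \cite{GR1} transfers to the present coupled setting without essential change, since the mixed term is, from the perspective of $f_0$ alone, just a smooth linear external contribution.
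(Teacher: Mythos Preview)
Your argument is correct and follows the same strategy as the paper: freeze one component, so that the cross term becomes a linear external potential, and derive the Euler--Lagrange relation for the remaining component by a Lagrange-multiplier argument. The only difference is in the implementation of that last step. The paper builds an explicit one-parameter family
\[
f_t(x,v)=\alpha(t)^{3}\,\|f_0\|_1\,\frac{f_0+t\eta}{\|f_0+t\eta\|_1}(x,\alpha(t)v),
\]
with $\alpha(t)$ chosen so that $\|f_t\|_1=\|f_0\|_1$ and $\|f_t\|_{1+1/k}=\|f_0\|_{1+1/k}$ identically; the expressions for $E_0$ and $\lambda$ then drop out directly from the first-order Taylor expansion of $\mathcal H(f_t,\tilde f_0)$, and no case distinction on whether $\|f_0\|_1=M$ is needed. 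You instead invoke KKT for the two inequality constraints and then recover the multiplier values a posteriori from the moment identities for $(E_0-E)_+^{k}$. Both routes yield the same formulas; the paper's explicit construction avoids the (mild) issue of constraint qualification in infinite dimensions and makes the case $\|f_0\|_1<M$ automatic rather than a separate consistency check.
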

\begin{proof}
Let $(f_{0},\tilde f_{0})$ be a minimizer of 
$\mathcal{H}$ with corresponding potentials 
$(U_{0},\tilde U_{0})$. For $f$ such that 
$(f,\tilde f_0)\in \mathcal{F}_\mathbf{M}$ we define
\[
\mathcal{G}(f):=\mathcal{H}(f,\tilde f_{0}).
\]
Then
\begin{eqnarray}
\label{distG}
\mathcal{G}(f)-\mathcal{G}(f_{0})
&=&
E_{\mathrm{kin}}(f)-
E_{\mathrm{kin}}(f_{0})+E_{\mathrm{pot}}(f)-E_{\mathrm{pot}}(f_{0})
\nonumber \\
&&
{}+
\int(\rho_{f}-\rho_{0})\tilde U_{0}\,dx.
\end{eqnarray}
For each fixed $\epsilon>0$ we define the set 
\[
S_{\epsilon}:=\left\{(x,y)\in\mathbb{R}^{6} \mid
\epsilon\le f_{0}(x,v)\le\epsilon^{-1}\right\}.
\]
Let $\eta\in L^{\infty}(\mathbb{R}^{6})$ be a real-valued function 
with compact support such that $\eta\ge 0$ a.e. for 
$(x,v)\in\mathbb{R}^{6}\setminus\mathrm{supp}\,f_{0}$ and 
$\mathrm{supp}\,\eta \subset (\mathbb{R}^{6}\setminus
\mathrm{supp}\,f_{0})\cup S_{\epsilon}$. 
For $t\in[0,T]$ and $T=(||\eta||_{1}+||\eta||_{1+1/k}+
||\eta||_{\infty})^{-1}\epsilon/2$ we define
\[
f_{t}(x,v):=\alpha^{3}(t)||f_{0}||_{1}
\frac{f_{0}+t\eta}{||f_{0}+t\eta||_{1}}(x,\alpha(t)v),
\]
where
\[
\alpha(t):=\left(\frac{||f_{0}||_{1+1/k}}{||f_{0}||_{1}}
\frac{||f_{0}+t\eta||_{1}}{||f_{0}+t\eta||_{1+1/k}}\right)^{(k+1)/3}.
\]
For $t\in[0,T]$,
\[
||f_{t}||_{1}=||f_{0}||_{1},\quad ||f_{t}||_{1+1/k}=||f_{0}||_{1+1/k}
\]
and $f_{0}+t\eta\ge 0$ a.e. For $\epsilon$ small enough,
\[
\frac{||f_{0}||_{1}}{2}\le||f_{0}+t\eta||_{1}\le ||f_{0}||_{1}
+\frac{\epsilon}{2},
\]
\[
\frac{||f_{0}||_{1+1/k{1}}}{2}
\le||f_{0}+t\eta||_{1+1/k}\le ||f_{0}||_{1+1/k}+\frac{\epsilon}{2},
\]
which implies that $\alpha$ is a smooth function on $[0,T]$ and
\[
\alpha'(t)=\frac{k+1}{3}\alpha(t)
\left[\frac{||\eta||_{1}}{||f_{0}+t\eta||_{1}}-
\frac{\iint(f_{0}+t\eta)^{1/k}\eta\,dx\,dv}{||f_{0}+
t\eta||_{1+1/k}^{1+1/k}}\right].
\]
Moreover, $\alpha''$ is bounded on $[0,T]$. From 
(\ref{distG}) we conclude that for $t\in[0,T]$,
\begin{eqnarray}
\mathcal{G}(f_{t})-\mathcal{G}(f_{0})
&=&
\left(\frac{||f_{0}||_{1}}{\alpha^{2}(t)||f_{0}+t\eta||_{1}}-1\right)
E_{\mathrm{kin}}(f_{0})
+\frac{||f_{0}||_{1}t}{\alpha^{2}(t)||f_{0}+t\eta||_{1}}E_{\mathrm{kin}}(\eta)
\nonumber \\
&&
{}+\left(\frac{||f_{0}||_{1}^{2}}{||f_{0}+t\eta||_{1}^{2}}-1\right)
E_{\mathrm{pot}}(f_{0})+\frac{||f_{0}||_{1}^{2}t}{||f_{0}+t\eta||_{1}^{2}}
\int\rho_{\eta}U_{0}\,dx\nonumber\\
&&
{}+\frac{||f_{0}||_{1}^{2}t^{2}}{||f_{0}+t\eta||_{1}^{2}}
E_{\mathrm{pot}}(\eta)+
\left(\frac{||f_{0}||_{1}}{||f_{0}+t\eta||_{1}}-1\right)
\int\rho_{0}\tilde U_{0}\,dx\nonumber\\
&&
{}+\frac{||f_{0}||_{1}t}{||f_{0}+t\eta||_{1}}\int\rho_{\eta}\tilde U_{0}\,dx.
\label{distG2}
\end{eqnarray}
By Taylor expansion at $t=0$,
\begin{eqnarray*}
\frac{||f_{0}||_{1}}{\alpha^{2}(t)||f_{0}+t\eta||_{1}}-1
&=&
-t\left[\frac{||\eta||_{1}}{||f_{0}||_{1}}+
2\frac{k+1}{3}\left(\frac{||\eta||_{1}}{||f_{0}||_{1}}-
\frac{\iint f_{0}^{1/k}\eta\,dx\,dv}
{||f_{0}||_{1+1/k}^{1+1/k}}\right)\right]+\mathrm{O}(t^{2}),\\
\frac{||f_{0}||_{1}t}{\alpha^{2}(t)||f_{0}+t\eta||_{1}}
&=&
t+\mathrm{O}(t^{2}),\\
\frac{||f_{0}||_{1}^{2}}{||f_{0}+t\eta||_{1}^{2}}-1
&=&
-\frac{2||\eta||_{1}t}{||f_{0}||_{1}}+\mathrm{O}(t^{2}),\\
\frac{||f_{0}||_{1}^{2}t}{||f_{0}+t\eta||_{1}^{2}}
&=&
t+\mathrm{O}(t^{2}),\\
\frac{||f_{0}||_{1}}{||f_{0}+t\eta||_{1}}-1
&=&
-\frac{||\eta||_{1}t}{||f_{0}||_{1}}+\mathrm{O}(t^{2}),\\
\frac{||f_{0}||_{1}t}{||f_{0}+t\eta||_{1}}
&=&
t+\mathrm{O}(t^{2}).
\end{eqnarray*}
If we substitute these expansions into (\ref{distG2}), we find that
\[
\mathcal{G}(f_{t})-\mathcal{G}(f_{0})
=t\iint(E-E_{0}+ \lambda f_{0}^{1/k})\eta\,dv\,dx+\mathrm{O}(t^{2})
\]
with $E_{0}$ and $\lambda$ as given in the theorem. 
Since $\mathcal{G}(f_{t})$ attains its minimum at $t=0$,
the choice of $\eta$ and  
$\epsilon\to 0$ imply that $E-E_{0}\ge 0$ on 
$\mathbb{R}^{6}\setminus\mathrm{supp}\,f_{0}$ and
\[
f_{0}=\left(\frac{E_{0}-E}{\lambda}\right)^{k}
\ \mbox{a.e.\ on}\ \mathrm{supp}\,f_{0}.
\]
If we repeat this argument with the roles of flat and non-flat
states exchanged, i.e., for 
$\mathcal{G}(\tilde f):=\mathcal{H}(f_0,\tilde f)$,
we obtain the assertion for $\tilde f_0$.
\end{proof}

The previous theorem states that for a minimizer $(f_0,\tilde f_0)$
both components are functions of the local or particle energy in
the induced potential $U_{0,e}=U_0 + \tilde U_0$. Since the latter
is time-independent, the particle energy is conserved along
particle orbits, i.e., along the characteristics of the Vlasov
equations (\ref{vlasov3d}) and (\ref{vlasov2d}) respectively.
Hence $f_0$ and $\tilde f_0$ satisfy these equations at least 
formally, and we are justified to refer to $(f_0,\tilde f_0)$
as a steady state of the system (\ref{vlasov3d})--(\ref{rhodef}).
We do not discuss the regularity of this steady state further.
However, to conclude this section we want to address the question
whether these states have spatially compact support. 
\begin{prop} \label{compsupp}
Let $(f_{0},\tilde f_{0})$ be a minimizer as obtained 
in Theorem~\ref{existence} and assume that 
\[
0< k< 5/2\ \mbox{and}\ 0 < \tilde k < 1.
\]
Then $U_0,\ \tilde U_0,\ \rho_0,\ \tilde \rho_0\in L^\infty(\R^3)$ with
\[
\lim_{|x|\to \infty} U_0(x) = 0,\ \lim_{|x|\to \infty} \tilde U_0(x) = 0,
\]
$E_0, \tilde E_0 < 0$, and $\rho_0$ and $\tilde \rho_0$ have compact support.
\end{prop}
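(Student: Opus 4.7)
The plan is to build on the explicit Euler--Lagrange representations from Theorem~\ref{minim=ss}. Integrating out velocity in the formulas for $f_0$ and $\tilde f_0$ produces the pointwise identities
\[
\rho_0(x) = c_k\lambda^{-k}(E_0 - U_{0,e}(x))_+^{k+3/2},\quad
\tilde\rho_0(\tilde x) = \tilde c_{\tilde k}\tilde\lambda^{-\tilde k}(\tilde E_0 - U_{0,e}(\tilde x,0))_+^{\tilde k+1},
\]
where $U_{0,e}=U_0+\tilde U_0$ is the effective potential and $c_k,\tilde c_{\tilde k}>0$ are explicit velocity-integration constants. The argument proceeds in three phases: easy sign information on the Lagrange multipliers, an $L^p\to L^\infty$ bootstrap for the potentials, and strict negativity plus compact support.

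First, I observe that $U_{0,e}(x)\le 0$ everywhere on $\R^3$ (as an attractive Newtonian potential of strictly positive mass, the latter guaranteed by Proposition~\ref{saturate}), so $E_0>0$ would force $\rho_0\ge c_k\lambda^{-k}E_0^{k+3/2}$ uniformly on $\R^3$, contradicting $\rho_0\in L^1(\R^3)$; hence $E_0\le 0$, and by the same token $\tilde E_0\le 0$. Consequently $\rho_0\le c_k\lambda^{-k}|U_{0,e}|^{k+3/2}$ and $\tilde\rho_0\le \tilde c_{\tilde k}\tilde\lambda^{-\tilde k}|U_{0,e}(\cdot,0)|^{\tilde k+1}$ pointwise. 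By Lemma~\ref{darkfiesta}, $\rho_0\in L^{6/5}(\R^3)$ and $\tilde\rho_0\in L^{4/3}(\R^2)$; Lemma~\ref{intflatpot} together with the Hardy--Littlewood--Sobolev inequality then puts $U_0,\tilde U_0\in L^6(\R^3)$, so $U_{0,e}\in L^6(\R^3)$. Feeding this into the pointwise bound yields $\rho_0\in L^{6/(k+3/2)}(\R^3)$, whose exponent strictly exceeds $3/2$ precisely because $k<5/2$; together with $\rho_0\in L^1$ and the standard splitting of the Newtonian convolution into $\{|x-y|<1\}$ (H\"older) and $\{|x-y|\ge 1\}$, this yields $U_0\in L^\infty(\R^3)$. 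A parallel two-dimensional bootstrap using $U_{0,e}(\cdot,0)\in L^4(\R^2)$ (Lemma~\ref{3dpot2d}) gives $\tilde\rho_0\in L^{4/(\tilde k+1)}(\R^2)$, with exponent greater than $2$ since $\tilde k<1$; this delivers $\tilde U_0(\cdot,0)\in L^\infty(\R^2)$, and since $|\tilde U_0(\tilde x,x_3)|\le|\tilde U_0(\tilde x,0)|$ by direct comparison of the defining integrals, $\tilde U_0\in L^\infty(\R^3)$. Reinserting gives $\rho_0,\tilde\rho_0\in L^\infty$, and a routine estimate for $|x|^{-1}$ convolved with an $L^1\cap L^\infty$ density then yields $U_0(x),\tilde U_0(x)\to 0$ as $|x|\to\infty$.

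The last phase promotes the multiplier bounds to strict negativity and reads off compact support. For $\tilde E_0$ this is immediate: $\tilde E_0=0$ would force $\tilde\rho_0(\tilde x)\sim|\tilde x|^{-(\tilde k+1)}$ at infinity (using $U_{0,e}(\tilde x,0)\sim -(M_0+\tilde M_0)/|\tilde x|$), which is not in $L^1(\R^2)$ under $\tilde k<1$. The analogous asymptotic for $E_0=0$ gives $\rho_0\sim|x|^{-(k+3/2)}$, which directly rules out $E_0=0$ when $k\le 3/2$ by $L^1(\R^3)$; closing the gap in the remaining range $3/2<k<5/2$ is the main obstacle of the proof, and I would attack it by combining the $L^1$ bound with a Pohozaev/virial identity for the non-flat component of the form $2E_\mathrm{kin}(f_0)=-E_\mathrm{pot}(f_0)+\int x\cdot\nabla\tilde U_0\,\rho_0\,dx$ (obtained by multiplying the stationary Vlasov equation~(\ref{vlasov3d}) by $v\cdot x$ and integrating), which is incompatible with $\int|x|\rho_0\,dx\sim\int r^{3/2-k}dr=\infty$ in this range. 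Once $E_0<0$ and $\tilde E_0<0$ are established, the decay $U_{0,e}\to 0$ supplies a radius $R>0$ outside which $U_{0,e}(x)>E_0$ and $U_{0,e}(\tilde x,0)>\tilde E_0$, so the Euler--Lagrange formulas force $\rho_0$ to vanish outside $B_R\subset\R^3$ and $\tilde\rho_0$ to vanish outside $\tilde B_R\subset\R^2$, yielding the claimed compact support.
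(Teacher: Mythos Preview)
Your argument is essentially the paper's: integrate the Euler--Lagrange formula in velocity to get $\rho_0=c(E_0-U_{0,e})_+^{n}$ and $\tilde\rho_0=\tilde c(\tilde E_0-U_{0,e}(\cdot,0))_+^{\tilde n}$, feed $U_{0,e}\in L^6(\R^3)$ and $U_{0,e}(\cdot,0)\in L^4(\R^2)$ back into these to obtain $\rho_0\in L^{6/n}$ (exponent $>3/2$ iff $k<5/2$) and $\tilde\rho_0\in L^{4/\tilde n}$ (exponent $>2$ iff $\tilde k<1$), deduce $U_0,\tilde U_0\in L^\infty$ with decay at infinity, and then exclude $E_0\ge 0$, $\tilde E_0\ge 0$ via the lower bound $-U_{0,e}\ge C/|x|$. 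Your ordering is in fact slightly tidier than the paper's: disposing of $E_0>0$ at the outset (trivial from $U_{0,e}\le 0$) makes the pointwise inequality $(E_0-U_{0,e})_+\le|U_{0,e}|$ available for the bootstrap without circularity.

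Where the two diverge is the case $E_0=0$. You correctly note that the $L^1$ contradiction from $\rho_0\ge C|x|^{-n}$ only works when $n=k+3/2\le 3$, leaving $3/2<k<5/2$ open. The paper does not confront this: its closing line asserts ``under the present assumptions on $k$ and $\tilde k$ it follows that $n<3$,'' which does not follow from $k<5/2$. So the gap you flag is genuinely present in the paper's proof as well. Your proposed virial patch, however, is not convincing as written. With the correct sign the identity reads $2E_{\mathrm{kin}}(f_0)=-E_{\mathrm{pot}}(f_0)-\int x\cdot\nabla\tilde U_0\,\rho_0\,dx$, and once $\tilde E_0<0$ is secured (which you do cleanly, since $\tilde n<2$), $\tilde\rho_0$ has compact support and $x\cdot\nabla\tilde U_0=O(|x|^{-1})$; the cross term can therefore converge even when $\int|x|\rho_0\,dx=\infty$. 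If the intended mechanism is that the \emph{derivation} of the virial identity (integration by parts against $x\cdot v$) breaks down for lack of spatial moments, you must identify which boundary term fails to vanish and explain why that failure forces a contradiction rather than merely an invalid identity.
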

\begin{proof}
Consider a density $\tilde \rho \in L_+^1\cap L^p(\R^2)$. 
Then $U_{\tilde\rho} \in L^\infty(\R^3)$
with $\lim_{|x|\to \infty} U_{\tilde\rho} (x) = 0$, provided $p>2$. If $\rho$
is defined on $\R^3$ then the same is true provided $p>3/2$. We prove this
assertion for the flat case. Here
\beas
- U_{\tilde\rho} (x) 
&=&
\int_{\R^2} \frac{\tilde\rho(\tilde y)}{|x-(\tilde y,0)|} d\tilde y\\
&=&
 \int_{|x-(\tilde y,0)|\leq R} \frac{\tilde\rho(\tilde y)}{|x-(\tilde y,0)|} d\tilde y
+ \int_{|x-(\tilde y,0)| > R} \frac{\tilde\rho(\tilde y)}{|x-(\tilde y,0)|} d\tilde
  y.\\
&\leq&
 \int_{|\tilde x-\tilde y|\leq \sqrt{R^2-x_3^2}} 
\frac{\tilde\rho(\tilde y)}{|x-(\tilde y,0)|} d\tilde y
+ \frac{||\tilde\rho||_1}{R}\\
&\leq&
C R^{(p-2)/(p-1)} ||\tilde \rho||_{L^p(\{|\tilde x-\tilde y|\leq \sqrt{R^2-x_3^2}\})}
+ \frac{||\tilde\rho||_1}{R}.
\eeas
Since this holds for any $R>0$ the assertion follows; notice that 
for $R>0$ fixed the first term goes to zero for $|x|\to \infty$.
By the weak Young inequality and Lemma~\ref{3dpot2d}, 
$U_0\in L^6(\R^3)$ and $U_0(\cdot,0)\in L^4(\R^2)$, and again
by the weak Young inequality and Lemma~\ref{intflatpot},
$\tilde U_0\in L^4(\R^2)$ and $U_0 \in L^6(\R^3)$. Hence
$U_{0,e} \in L^4(\R^2) \cap L^6(\R^3)$. If we integrate the
relations between $f_0$, $\tilde f_0$, and $U_{0,e}$ from 
Theorem~\ref{minim=ss} with respect to $v$ or $\tilde v$ respectively
we obtain the relations
\be \label{rhourel}
\rho = c (E_0 - U_{0,e})_+^n,\ 
\tilde \rho = \tilde c (\tilde E_0 - U_{0,e}(\cdot,0))_+^{\tilde n},
\ee
where $c$ and $\tilde c$ depend on $\lambda$ and $k$ or 
$\tilde \lambda$ and $\tilde k$ respectively. From the integrability
assertions for the potential we conclude that the spatial
densities have the required integrability provided
$6/n > 3/2$, i.e., $n < 4$ which means $k< 5/2$, and
$4/\tilde n > 2$ i.e., $\tilde n < 2$ which means $\tilde k< 1$.

It therefore remains to show that $E_0 < 0$ and $\tilde E_0 < 0$
as claimed; the assertion on the support of the densities then follows.
Assume that $E_0 > 0$. Then for $|x|$ large, $\rho_0 (x) > c (E_0/2)^n$
which contradicts its integrability, and the same argument works for
$\tilde \rho_0$. Now assume that $E_0 = 0$.
Then $\rho_0 (x) = c (-U_{0,e}(x))^n$, and this again contradicts the 
integrability of $\rho_0$ since $-U_{0,e} \geq C/|x|$ for large $|x|$
and $C>0$. We prove this for $\tilde U_0$, the argument for $U_0$
being completely analogous. We choose $R>0$ such that 
\[
\int_{|\tilde y| \leq R} \tilde \rho_0(\tilde y)\, d\tilde y =: m > 0.
\]
Next we observe that for $|\tilde y| \leq R$ and $|x| \geq 2 R$,
\[
\left|\frac{1}{|x-(\tilde y,0)|} - \frac{1}{|x|}\right| \leq
\frac{R}{(|x|-R)^2}.
\]
If we restrict the convolution integral defining $\tilde U_0$
to the set $\{|y|\leq R\}$ and expand the kernel as indicated
the assertion on $\tilde U_0$ follows.  The same argument works
for $U_0$ so that indeed $-U_{0,e} \geq C/|x|$ as claimed. 
If  $\tilde E_0 = 0$ then  
$\tilde \rho_0 (\tilde x) = c (-U_{0,e}(\tilde x,0))^{\tilde n}$
which contradicts the integrability of $\tilde \rho$.
Notice that under the present assumptions on $k$ and $\tilde k$
it follows that $n<3$ and $\tilde n< 2$.
\end{proof}
\section{Stability} \label{secstab}
\setcounter{equation}{0}
In this section we show how the minimizing property
of a minimizer $(f_0,\tilde f_0) \in \mathcal{F}_{\mathbf{M}}$
leads to a stability estimate. Given a second state
$(f,\tilde f) \in \mathcal{F}_{\mathbf{M}}$ and denoting the
effective potential of the minimizer by $U_{0,e}$ 
a simple computation shows that
\beas 
\mathcal{H} (f,\tilde f)
&=&
\mathcal{H} (f_0,\tilde f_0) + 
\iint \left(\frac{1}{2} |v|^2 + U_{0,e}(x)\right)\, (f-f_0)(x,v)\, dv\, dx \\
&&
{}+ 
\iint \left(\frac{1}{2} |\tilde v|^2 + U_{0,e}(\tilde x,0)\right)\, 
(\tilde f-\tilde f_0)(\tilde x,\tilde v)\, d\tilde v\, d\tilde x \\
&&
{}- ||f-f_0||_\mathrm{pot}^2 
  - ||\tilde f-\tilde f_0||_\mathrm{pot}^2 
-2\langle f-f_0,\tilde f-\tilde f_0\rangle_\mathrm{pot}.
\eeas
With $E(x,v) = \frac{1}{2} |v|^2 + U_{0,e}(x)$ and 
\bea \label{ddef}
d((f,\tilde f),(f_0,\tilde f_0))
&:=& 
\iint E (x,v)\, (f-f_0)(x,v)\, dv\, dx \nonumber \\
&&
{} + \iint E (\tilde x,0,\tilde v,0)\, 
(\tilde f-\tilde f_0)(\tilde x,\tilde v)\, d\tilde v\, d\tilde x
\eea
we can rewrite this expansion as 
\bea \label{hexpansion} 
\mathcal{H} (f,\tilde f)
&=&
\mathcal{H} (f_0,\tilde f_0) + d((f,\tilde f),(f_0,\tilde f_0))
\nonumber\\
&&
{}- ||f-f_0||_\mathrm{pot}^2 
  - ||\tilde f-\tilde f_0||_\mathrm{pot}^2 
-2\langle f-f_0,\tilde f-\tilde f_0\rangle_\mathrm{pot}.
\eea
We need to show that $d((f,\tilde f),(f_0,\tilde f_0))\geq 0$
with equality only if $(f,\tilde f)=(f_0,\tilde f_0)$. To this end
we restrict ourselves to states $(f,\tilde f) \in \mathcal{F}_{\mathbf{M}}$
such that
\be \label{pertrestr}
\int f = \int f_0,\ \int f^{1+1/k} = \int f_0^{1+1/k},\
\int \tilde f = \int \tilde f_0,\ 
\int \tilde f^{1+1/\tilde k} = \int \tilde f_0^{1+1/\tilde k}.
\ee
{\bf Remark.}
From a physics point of view a galaxy and its halo are typically
perturbed by the gravitational field of some distant exterior
object. In particular, the perturbation will result in a measure
preserving redistribution of the particles in phase space, and 
will hence preserve the constraints in (\ref{pertrestr}).
On the other hand, the fact that the perturbations lie 
in $\mathcal{F}_{\mathbf{M}}$ means that the stars are
only shifted within the galactic plane and not perpendicularly to it.
This is certainly an unphysical restriction. To remove
it is a non-trivial problem for future research.

\smallskip

Using (\ref{pertrestr}) and the strict convexity of the
function $[0,\infty[ \ni \zeta \mapsto \zeta^p$ for $p>1$
we find that
\beas
d((f,\tilde f),(f_0,\tilde f_0))
&=&
\iint (E - E_0) \, (f-f_0)
+ \frac{\lambda}{1+1/k} \iint (f^{1+1/k} - f_0^{1+1/k})\\
&&
{} + \iint (E - \tilde E_0)\, (\tilde f-\tilde f_0) 
+ \frac{ \tilde\lambda}{1+1/\tilde k} 
\iint (\tilde f^{1+1/\tilde k} - \tilde f_0^{1+1/\tilde k})\\
&\geq&
\iint
\left[(E - E_0) + \lambda f_0^{1/k}\right]\,(f-f_0)\\
&&
{}+
\iint
\left[(E - \tilde E_0)  +  \tilde\lambda \tilde f_0^{1/\tilde k}
\right]\,(\tilde f-\tilde f_0) \geq 0; 
\eeas
Theorem~\ref{minim=ss} implies that the last expressions are
non-negative, and the strict convexity implies that equality
holds only if $(f,\tilde f) = (f_0,\tilde f_0)$.

In order to establish a stability result we now wish to
apply the above estimates to the time evolution $(f(t),\tilde f(t))$
of a perturbation of $(f_0,\tilde f_0)$. Clearly, we need to 
require that $(f(0),\tilde f(0)) \in \mathcal{F}_{\mathbf{M}}$
satisfies the constraints (\ref{pertrestr}).
More importantly, in view of the fact that nothing is known
on the initial value problem for the system (\ref{vlasov3d})--(\ref{rhodef}),
we have to assume that this system has solutions  
$t \mapsto (f(t),\tilde f(t))$ which preserve the total energy, the constraints
(\ref{pertrestr}), and 
$(f(t),\tilde f(t)) \in \mathcal{F}_{\mathbf{M}}$.
To keep the rest of the discussion simple we furthermore assume that
the minimizer $(f_0,\tilde f_0)$ is unique in $\mathcal{F}_{\mathbf{M}}$
up to spatial shifts. If the minimizer is up to spatial shifts only isolated 
with respect to the distance measurement used in the stability estimate below,
the result remains unchanged. If the minimizers are not even isolated
one can prove the stability of the whole set of minimizers;
we refer to \cite{R7} for the corresponding modifications of
the arguments.

\smallskip

\noindent
{\bf Stability estimate.}
{\em Assume the minimizer $(f_0,\tilde f_0)$ is unique
in $\mathcal{F}_{\mathbf{M}}$. Then for every $\epsilon >0$
there exists $\delta > 0$ such that for any solution
$t \mapsto (f(t),\tilde f(t))$ of (\ref{vlasov3d})--(\ref{rhodef})
satisfying the above assumptions the following is true: If
\[
d((f(0),\tilde f(0)),(f_0,\tilde f_0)) + ||f(0) - f_0||_\mathrm{pot}
+ ||\tilde f(0) - \tilde f_0||_\mathrm{pot} < \delta
\]
then 
\[
d((f(t),\tilde f(t)),(f_0,\tilde f_0)) + ||f(t) - f_0||_\mathrm{pot}
+ ||\tilde f(t) - \tilde f_0||_\mathrm{pot} < \epsilon
\]
up to spatial shifts parallel to the $(x_1,x_2)$ plane and as
long as the solution exists. }

\smallskip

\noindent
We do not call this a theorem because it is not clear that
sufficiently regular solutions to the initial value problem
do indeed exist. Assuming the latter the proof is by contradiction.
If the assertion were false,  there exists $\epsilon >0$,
a sequence of solutions $(t \mapsto (f_j(t),\tilde f_j(t)))$
and a sequence of times $(t_j)$ such that for all
$j\in \mathbb{N}$,
\be \label{initialj}
d((f_j(0),\tilde f_j(0)),(f_0,\tilde f_0)) + ||f_j(0) - f_0||_\mathrm{pot}
+ ||\tilde f_j(0) - \tilde f_0||_\mathrm{pot} < 1/j,
\ee
but
\be \label{finalj}
d((f_j(t_j),\tilde f_j(t_j)),(f_0,\tilde f_0)) + ||f_j(t_j) - f_0||_\mathrm{pot}
+ ||\tilde f_j(t_j) - \tilde f_0||_\mathrm{pot} \geq \epsilon
\ee
regardless of how we shift $(f_j(t_j),\tilde f_j(t_j))$ in space.
Now (\ref{initialj}) and the fact that $d$ is non-negative imply
that all three terms in  (\ref{initialj}) converge to zero.
By Lemma~\ref{toll} this is true also for the mixed term
$\langle f_j(0)-f_0,\tilde f_j(0)-\tilde f_0\rangle_\mathrm{pot}$,
and since $\mathcal{H}$ is preserved, (\ref{hexpansion}) implies
that
\[
\mathcal{H} (f_j(t_j),\tilde f_j(t_j)) = \mathcal{H} (f_j(0),\tilde f_j(0))
\to \mathcal{H} (f_0,\tilde f_0).
\]
Since $(f_j(t_j),\tilde f_j(t_j))\in \mathcal{F}_{\mathbf{M}}$ this means 
that $(f_j(t_j),\tilde f_j(t_j))$ is a minimizing sequence for $\mathcal{H}$
in $\mathcal{F}_{\mathbf{M}}$. By Theorem~\ref{existence} there exists
a subsequence such that up to spatial shifts
\[
||f_j(t_j) - f_0||_\mathrm{pot} + 
||\tilde f_j(t_j) - \tilde f_0||_\mathrm{pot}
\to 0;
\]
at this point the uniqueness assumption for $(f_0,\tilde f_0)$ enters.
Again by Lemma~\ref{toll} this is true also for the mixed term
$\langle f_j(t_j)-f_0,\tilde f_j(t_j)-\tilde f_0\rangle_\mathrm{pot}$,
and (\ref{hexpansion}) with $(f,\tilde f)=(f_j(t_j),\tilde f_j(t_j))$
implies that also  
$d((f_j(t_j),\tilde f_j(t_j)),(f_0,\tilde f_0)) \to 0$.
Hence all three terms in (\ref{finalj}) converge to zero, 
which is a contradiction.

\section{Appendix: Facts on the decoupled minimizers}
\setcounter{equation}{0}
Here we establish the claims on the decoupled variational
problems referred to in Section~\ref{secdecoup}. Several of these
claims, in particular for the non-flat case, can be found in the 
literature.

\smallskip

\noindent
{\em Existence.}\\
In most of the previous investigations the existence of minimizers 
in the decoupled cases was not done exactly for the problems
stated in Section~\ref{secdecoup}: Either 
the Casimir functional, which in the case at hand corresponds to the 
$L^{1+1/k}$ norm, was included into the functional
to be minimized, i.e., an energy-Casimir functional instead
of the energy was minimized, and only the constraint on the $L^1$
norm was posed, or the energy was minimized under the constraint that the
sum of the mass and the Casimir functional was fixed.
In the three dimensional case the problem with two constraints
in the form stated in Section~\ref{secdecoup} has been dealt with
in \cite{aly2,SS}. We briefly show how the method
used in \cite{aly2} can be adapted to the flat case.

With the help of the Riesz rearrangement inequality \cite[3.7]{LL} 
and the fact that the kinetic energy as well as the constraints are invariant 
under spherically symmetric rearrangements the problem is reduced
to minimizing $\mathcal{H}(\tilde f)$ 
where the functions $\tilde f$ in the
constraint set have the form
\[
\tilde f(\tilde x,\tilde v)=\varphi(|\tilde x|,|\tilde v|),
\]
with $\varphi:[0,\infty[^{2}\to [0,\infty[$ non-increasing in each
argument. 
This monotonicity implies that for $1\leq q\leq 1+1/\tilde k$,
\begin{eqnarray*}
\tilde f^{q}(\tilde x,\tilde v)|\tilde x|^{2}|\tilde v|^{2}
&\le& 
C\int_{0}^{|\tilde x|}\int_{0}^{|\tilde
  v|}\varphi^{q}(r,s)rs\,ds\,dr
\le 
C ||\tilde f||_{q}^{q},\\
\tilde f(\tilde x,\tilde v)|\tilde x|^{2}|\tilde v|^{4}
&\le& 
C\int_{0}^{|\tilde x|}\int_{0}^{|\tilde
  v|}\varphi(r,s)rs^{3}\,ds\,dr
\le 
C  E_{\mathrm{kin}}(\tilde f).
\end{eqnarray*}
Hence
\[
\tilde f(\tilde x,\tilde v)\le g(\tilde x,\tilde v):=C \left\{\begin{array}{lll}
|\tilde x|^{-2/q}|\tilde v|^{-2/q},&\mathrm{for}&|\tilde v|\le V(|\tilde x|),\\
|\tilde x|^{-2}|\tilde v|^{-4},&\mathrm{for}&|\tilde v|> V(|\tilde x|),
\end{array}\right.
\]
where $V(|\tilde x|)>0$ is an arbitrary function and the constant $C$
depends on $E_{\mathrm{kin}} (\tilde f)$, $||\tilde f||_{1}$,
and $||\tilde f||_{1+1/\tilde k}$, quantities which are bounded
along minimizing sequences.
The function $g$ induces the spatial density
\begin{eqnarray*}
\rho_{g}(\tilde x)
&=&
C |\tilde x|^{-2/q} \int_{0}^{V(|\tilde x|)} |\tilde v|^{1-2/q}\,
d|\tilde v|
+ C |\tilde x|^{-2}\int_{V(|\tilde x|)}^{\infty} |\tilde v|^{-3}\,
d|\tilde v|\\
&=&
C |\tilde x|^{-2/q} V^{2-2/q}(|\tilde x|) + 
C |\tilde x|^{-2} V^{2}(|\tilde x|).
\end{eqnarray*}
The choice
\[
V(|\tilde x|)=V_{q}(|\tilde x|):=|\tilde x|^{(1-q)/(2q-1)}
\]
yields the estimate
\[
\rho_{g}(\tilde x)\le |\tilde x|^{-2q/(2q-1)}
\]
with the exponent $s:=-2q/(2q-1)$ being such that
\begin{equation} \label{alydichte}
s<-3/2\ \mathrm{for}\ 1<q<3/2,\quad 
s>-3/2\ \mathrm{for}\ q>3/2.
\end{equation}
We split the estimate for $\tilde f$ by choosing $q=1+1/\tilde k>3/2$ 
for $|\tilde x|\le 1$ and $q\in[1,3/2]$ for $|\tilde x|>1$ so that
\[
\tilde f(\tilde x,\tilde v)
\le 
g(\tilde x,\tilde v)
:=
C \left\{\begin{array}{lclcl}
|\tilde x|^{-2/(1+1/\tilde k)}|\tilde v|^{-2/(1+1/\tilde k)}&\mathrm{for}&
|\tilde x|\le 1&\land&|\tilde v|\le V_{1+1/\tilde k}(|\tilde x|),\\
|\tilde x|^{-2}|\tilde v|^{-4}&\mathrm{for}&
|\tilde x|\le 1&\land&|\tilde v|>V_{1+1/\tilde k}(|\tilde x|),\\
|\tilde x|^{-2/q}|\tilde v|^{-2/q}&\mathrm{for}&
|\tilde x|> 1&\land&|\tilde v|\le V_{q}(|\tilde x|),\\
|\tilde x|^{-2}|\tilde v|^{-4}&\mathrm{for}&
|\tilde x|> 1&\land&|\tilde v|> V_{q}(|\tilde x|).
\end{array}\right.
\]
By (\ref{alydichte}),
\[
\rho_{\tilde f}(\tilde x)\le\rho_{g}(\tilde x)\le\left\{\begin{array}{lcl}
Cr^{s_{1}}&\mathrm{with}\ s_{1}>-8/5&\mathrm{for}\ |\tilde x|\le 1,\\
Cr^{s_{2}}&\mathrm{with}\ s_{2}<-8/5&\mathrm{for}\ |\tilde x|> 1.
\end{array}\right.
\]
By the Hardy-Littlewood-Sobolev inequality this implies that
$g$ has finite potential energy.
The crucial step in the existence proof for a minimizer
is the convergence of the potential energy
along a minimizing sequence $(\tilde f_{j})$. 
Since $0\le\tilde f_{j}\le g$, 
the finiteness of the potential energy for $g$ allows us 
to pass to the limit using the dominated convergence theorem. 

\smallskip

\noindent
{\em Saturation of the constraints.}\\
Minimizers of the decoupled problems
always saturate the constraints, i.e., 
$||f_0^{\mathrm{3D}}||_1 = M$, $||f_0^{\mathrm{3D}}||_{1+1/k} = N$,
and similarly for the flat case, because if for a minimizer
one (or both) equalities were replaced by strict inequalities, then 
this minimizer
can be rescaled in such a way that the constraints become
saturated but the energy strictly decreases, which is a contradiction.
A similar argument was used in the proof of Proposition~\ref{saturate}. 

\smallskip

\noindent
{\em The Euler-Lagrange relation and symmetry.}\\
For minimizers of the flat or non-flat problem the phase space
distributions are functions of the local energy, more precisely,
they satisfy relations as
stated in Theorem~\ref{minim=ss}, the only differences being that
\[
E(x,v):=\frac{1}{2}|v|^{2} + U^\mathrm{3D}_{0}(x),\ 
E(\tilde x,\tilde v):=\frac{1}{2}|\tilde v|^{2} + U_{0}^\mathrm{FL}(\tilde x),
\]
and the interaction term $\int U^\mathrm{3D}_0 \rho^\mathrm{FL}_0$ 
in the definitions of $E_0$ and $\tilde E_0$ is dropped.

The asserted symmetry of the minimizers follows
from the fact that symmetric decreasing rearrangements
in $x$ or $\tilde x$ strictly decrease the energy except 
when $\rho^\mathrm{3D}_0$ and $\rho^\mathrm{FL}_0$ and hence also
the induced potentials are symmetric with respect to
some point, cf.~\cite[3.7, 3.9]{LL}.

\smallskip

\noindent
{\em Virial identity and compact support.}\\
Both flat and non-flat minimizers satisfy the virial
identity
\[
2 E_\mathrm{kin}(f_0^{\mathrm{FL}})=E_\mathrm{pot}(f_0^{\mathrm{FL}}),\
2 E_\mathrm{kin}(f_0^{\mathrm{3D}})=E_\mathrm{pot}(f_0^{\mathrm{3D}}).
\]
This follows from the fact that these minimizers are time-independent
solutions of the corresponding Vlasov-Poisson systems. A direct proof
based on their minimizing property and scaling is given in \cite[3.2]{aly2}.
The virial identities together with the restrictions on
$k$ and $\tilde k$ immediately imply that the cut-off energies
$E_0$ and $\tilde E_0$ in the Euler-Lagrange relations are
strictly negative. In order to show that the minimizers have
compact support it therefore suffices to show that their 
potentials converge to zero at spatial infinity. We indicate the
corresponding argument for the 3D case, the flat one being completely
analogous. Applying the H\"older inequality to the first term
in the estimate
\[
-U_0^\mathrm{3D}(x) \leq 
\int_{|x-y|\leq R} \frac{\rho_0^\mathrm{3D}(y)}{|x-y|} dy + \frac{M}{R},\
R>0,
\]
implies that the potential is bounded and vanishes at spatial infinity,
provided $\rho_0^\mathrm{3D} \in L^p(\R^3)$ with $p>3/2$. While 
a-priori this need not be the case for $0<k<7/2$ we can use the fact 
that similar to (\ref{rhourel}),
\[
\rho_0^\mathrm{3D} = c (E_0 - U_0^\mathrm{3D})_+^n,
\]
start with the known integrability $U_0^\mathrm{3D} \in L^6(\R^3)$
to conclude that $\rho_0^\mathrm{3D} \in L^{6/n}(\R^3)$, and obtain a 
new integrability estimate for the potential through the weak Young
inequality. After finitely many iterations the desired integrability
of $\rho_0^\mathrm{3D}$ follows, cf.\ \cite[Prop.~2.7]{rein07}.

\smallskip

\noindent
{\em Uniqueness for the 3D problem.}\\
First we notice that by the virial relation the Lagrange
multipliers are uniquely determined by the constraint
parameters $M$ and $N$.
In the 3D case we can now continue as follows. A minimizer
is completely determined by its potential. The latter satisfies
the Emden-Fowler equation
\[
\frac{1}{r^2}(r^2 U')' = c (E_0 - U)_+^n,\ \mbox{i.e.},\
\frac{1}{r^2}(r^2 y')' = -c y_+^n,
\]
where $y= E_0 - U$, and the constant $c$ depends only on $k$ and 
$M,\, N$.
The solutions of this equation which are regular at the origin
are uniquely determined by their value at the origin. Moreover,
the scaling $y_\alpha (r) = \alpha y(\alpha^\lambda r)$ with 
$\lambda = (k+1/2)/2$ turns solutions into solutions.
But the mass constraint fixes this scaling, and uniqueness
of the minimizer follows. For more details we refer to 
\cite[p.~464]{rein07}. Unfortunately, there is no analogue
to the Emden-Fowler equation in the flat case---the flat potential
does not satisfy the Poisson equation---and uniqueness in the flat
case is not known.

\smallskip

\noindent
{\em The radius relation (\ref{radius3d}) in the 3D case.}\\
Each minimizer
is a spherically symmetric steady state $(f,\rho,U)$
of the three dimensional Vlasov-Poisson system,
and for each choice of $M$ and $N$ there is a unique 
such steady state.
If $(f,\rho,U)$ is a steady
state and $\alpha , \beta >0$ are arbitrary, then 
\[
f_{\alpha \beta}(x,v) = \alpha^2 \beta f(\alpha x, \beta v),\
\rho_{\alpha \beta}(x) = \alpha^2 \beta^{-2} \rho(\alpha x),\
U_{\alpha \beta}(x) = \beta^{-2} U(\alpha x)
\]
defines another one. 
There is a unique steady state $(f^\ast,\rho^\ast,U^\ast)$
with $||f^\ast||_1 = 1 = ||f^\ast||_{1+1/k}$, and the 
minimizer with general $M$ and $N$ is obtained by rescaling
$f^\ast$ with the parameters 
\[
\alpha = M^{(1-2k)/3} N^{2k+2)/3},\ \beta = M^{(k-2)/3} N^{-(k+1)/3}.
\]
Since $R = R^\ast / \alpha$ this implies (\ref{radius3d}). 

\smallskip

\noindent
{\em The radius relation (\ref{radius2d}) in the flat case.}\\
We do not know whether for each choice of $M$ and $N$
there exists a unique minimizer $f_0^\mathrm{FL}$,
and so we cannot use the argument above to prove (\ref{radius2d}).

To obtain a two-parameter family of minimizers which obeys 
the radius relation (\ref{radius2d}) we proceed as follows.
Since minimizers a-posteriori saturate the constraints we redefine
\[
\mathcal{F}_{M,N}^\mathrm{FL} := 
\Bigl\{ \tilde f\in L^{1}_{+}(\mathbb{R}^{4}) \mid
||\tilde f||_{1} = M,\; ||\tilde f||_{1+1/\tilde k} =N,\
E_{\mathrm{kin}}(\tilde f)<\infty \Bigr\}.
\]
For $\tilde f \in \mathcal{F}^\mathrm{FL}_{1,1}$
we define the rescaled function
\[
\tilde f_{\mu,\nu} (\tilde x,\tilde v) := 
\mu \tilde f(\mu \tilde x,\nu \tilde v).
\]
Then
\[
|| \tilde f_{\mu,\nu}||_1 = \mu^{-1} \nu^{-2},\
|| \tilde f_{\mu ,nu}||_{1+1/\tilde k} = 
\mu^{(1-\tilde k)/(1+\tilde k)} \nu^{-2 \tilde k/(\tilde k +1)},\
\mathcal{H}(\tilde f_{\mu,\nu}) = \mu^{-1} \nu^{-4}
\mathcal{H}(\tilde f).
\]
For $M,\, N>0$ arbitrary we choose $\mu,\,\nu$ such that 
$\tilde f_{\mu,\nu} \in \mathcal{F}_{M,N}^\mathrm{FL}$,
i.e.,
\[
\mu = M^{-\tilde k} N^{\tilde k +1},\ 
\nu = M^{(\tilde k -1)/2} N^{-(\tilde k+1)/2}.
\]
The mapping 
$\mathcal{F}^\mathrm{FL}_{1,1} \ni \tilde f \mapsto \tilde f_{\mu\nu}
\in \mathcal{F}_{M,N}^\mathrm{FL}$ is one-to-one and onto.
Let $f^\mathrm{FL}$ 
denote an arbitrary, spherically symmetric 
minimizer of $\mathcal{H}$ 
over the set $\mathcal{F}_{1,1}^\mathrm{FL}$. 
Then $f_{\mu,\nu}^\mathrm{FL}$ is a minimizer
of  $\mathcal{H}$ over  $\mathcal{F}_{M,N}^\mathrm{FL}$
which we denote by $f_{M,N}^\mathrm{FL}$. To see
this we observe that any function 
$\tilde g \in \mathcal{F}_{M,N}^\mathrm{FL}$
can be written as $\tilde g=\tilde f_{\mu,\nu}$ where 
$\tilde f\in \mathcal{F}^\mathrm{FL}_{1,1}$, and hence
\[
\mathcal{H}(\tilde g) = \mathcal{H}(\tilde f_{\mu,\nu})
= \mu^{-1} \nu^{-4} \mathcal{H}(\tilde f) 
\geq \mu^{-1} \nu^{-4} \mathcal{H}(f^\mathrm{FL})
= \mathcal{H}(f_{\mu,\nu}^\mathrm{FL}).
\]
If $\tilde R^\ast$
denotes the radius of the spatial support of $f^\mathrm{FL}$,
then the spatial support of $f_{M,N}^\mathrm{FL}$
has radius
\[
\tilde R = \mu^{-1} \tilde R^\ast = 
\tilde R^\ast M^{-\tilde k} N^{\tilde k +1},
\]
and this is the remaining assertion (\ref{radius2d}).


\begin{thebibliography}{99}
\bibitem{aly}
Aly, J.~J.:
On the lowest energy state of a collisionless self-gravitating 
system under phase space volume constraints.
{\em Mon.\ Not.\ R.\ Astr.\ Soc.}
{\bf 241}, 15--27 (1989)

\bibitem{aly2}
Aly, J.~J.:
Existence of a minimum energy state for a constrained collisionless 
gravitational system. Preprint.

\bibitem{BT} 
Binney, J., Tremaine, S.: 
{\em Galactic Dynamics}. 
Princeton University Press, Princeton, 1987

\bibitem{Shev}
Dietz, S.:
{\em Flache L\" osungen des Vlasov-Poisson-Systems}.
PhD thesis Ludwig Maximilians-Universit\" at, Munich 2002

\bibitem{DSS}
Dolbeault, J., Sanchez, O., Soler, J.:
Asymptotic behaviour for the Vlasov-Poisson system in the stellar-dynamics case.
{\em Arch.\ Ration.\ Mech.\ Anal.}\ 
{\bf 171}, 301--327 (2004)

\bibitem{Fi}
Fi\v{r}t, R.:
Stability of disk-like galaxies - Part II: The Kuzmin disk.
{\em Analysis} {\bf 27}, 405--424 (2007)

\bibitem{FiR}
Fi\v{r}t, R., Rein, G.:
Stability of disk-like galaxies---Part I: Stability via reduction.
{\em Analysis} {\bf 26}, 507--525 (2006)

\bibitem{Freeman}
Freeman, K.~C.:
On the disks of spiral and S0 galaxies.
{\em Astrophys. J.}
{\bf 160}, 811--830 (1970)

\bibitem{G1} 
Guo, Y.: 
Variational method in polytropic galaxies.
{\em Arch.\ Ration.\ Mech.\ Anal.}\
{\bf 150}, 209--224 (1999)

\bibitem{G2} 
Guo, Y.: 
On the generalized Antonov's stability criterion.
{\em Contemp.\ Math.}\ 
{\bf 263}, 85--107 (2000)

\bibitem{GR1} 
Guo, Y., Rein, G.:
Stable steady states in stellar dynamics.
{\em Arch.\ Ration.\ Mech.\ Anal.}\
{\bf 147}, 225--243 (1999)

\bibitem{GR2} 
Guo, Y., Rein, G.: 
Isotropic steady states in galactic dynamics.
{\em  Commun.\ Math.\ Phys.}\ 
{\bf 219}, 607--629 (2001)

\bibitem{GR3}
Guo, Y., Rein, G.: 
Stable models of elliptical galaxies. 
{\em Mon.\ Not.\ R.\ Astron.\ Soc.}\
{\bf 344}, 1396--1406 (2003)

\bibitem{GR4} 
Guo, Y., Rein, G.: 
A non-variational approach to nonlinear stability in stellar 
dynamics applied to the King model.
{\em Commun.\ Math.\ Phys.}
{\bf 271}, 489--509 (2007)

\bibitem{LMR} 
Lemou, M., Mehats, F., Raphael, P.:
On the orbital stability of the ground states and the 
singularity formation for the gravitational Vlasov-Poisson system. 
Preprint, 2005

\bibitem{LL}
Lieb, E.~H., Loss, M.:
{\em Analysis}.
American Math.\ Soc., Providence, 1996

\bibitem{LP}
Lions, P.-L., Perthame, B.:
Propagation of moments and regularity for the 3-dimensional
Vlasov-Poisson system.
{\em Invent.\ Math.}\ 
{\bf 105}, 415--430 (1991)

\bibitem{Pf}
Pfaffelmoser, K.:
Global classical solutions of the Vlasov-Poisson system in three
dimensions for general initial data.
{\em J.\ Differential Equations}\ 
{\bf 95}, 281--303 (1992)

\bibitem{R3}
Rein, G.:
Flat steady states in stellar dynamics---existence and stability.
{\em Commun.\ Math.\ Phys.}\
{\bf 205}, 229--247 (1999)

\bibitem{R6}
Rein, G.:
Reduction and a concentration-compactness principle
for energy-Casimir functionals.
{\em SIAM J.\ Math.\ Anal.}\
{\bf 33}, 896--912 (2002)

\bibitem{R7}
Rein, G.:
Non-linear stability of gaseous stars.
{\em Arch. Rational Mech. Anal.}\
{\bf 168}, 115--130 (2003)

\bibitem{rein07}
Rein, G.:
Collisionless Kinetic Equations from Astrophysics---The Vlasov-Poisson System.
{\em Handbook of Differential Equations, Evolutionary Equations. Vol.\ 3}. 
Eds.\ C.~M.~Dafermos and E.~Feireisl, Elsevier (2007)

\bibitem{RB}
Riazi, N., Bordbar, M.~R.:
Generalized Lane-Emden equation and the structure of galactic dark matter.
{\em Int. J. of Th. Phys.}
{\bf 45}, 483--498 (2006)


\bibitem{SS}
S\'{a}nchez, O., Soler, J.:
Orbital stability for polytropic galaxies.
{\em Ann.\ Inst.\ H.\ Poincar\'e (C) Anal.\ Non Lin\'eaire}.
{\bf 23}, 781--802 (2006)

\bibitem{Sch2}
Schaeffer, J.:
Steady states in galactic dynamics.
{\em Arch.\ Ration.\ Mech.\ Anal.}\
{\bf 172}, 1--19 (2004)
\end{thebibliography}
\end{document}